\newtheoremstyle{break}
  {}
  {}
  {\itshape}
  {}
  {\bfseries}
  {.}
  {\newline}
  {}
\theoremstyle{break}
\newtheorem{de}{Definition}
\newtheorem{theo}[de]{Theorem}    
\newtheorem{prop}[de]{Proposition}
\newtheorem{lem}[de]{Lemma}
\newtheorem{rem}[de]{Remark}
\DeclareMathOperator{\tr}{Tr}
\DeclareMathOperator{\supp}{supp}
\DeclareMathOperator{\loc}{loc}
\DeclareMathOperator{\ima}{Im}
\DeclareMathOperator{\spn}{span}
\title{Derivation of the dipolar Gross-Pitaevskii energy}
\date{}
\author{Arnaud Triay}
\address{CEREMADE, CNRS, Université Paris-Dauphine, PSL Research University, 75016 Paris, France}
\email{triay@ceremade.dauphine.fr}
\begin{document}
\maketitle
\begin{abstract}
We consider $N$ trapped bosons in $\mathbb{R}^3$ interacting via a pair potential $w$ which has a long range of dipolar type. We show the convergence of the energy and of the minimizers for the many-body problem towards those of the dipolar Gross-Pitaevskii functional, when $N$ tends to infinity. In addition to the usual cubic interaction term, the latter has the long range dipolar interaction. Our results hold under the assumption that the two-particle interaction is scaled in the form $N^{3\beta - 1} w(N^\beta x)$ for some $0 \leq \beta < \beta_{\max}$ with $\beta_{\max} = 1/3 + s/(45+42s)$ where $s$ is related to the growth of the trapping potential.
\end{abstract}

\tableofcontents

\section{Introduction}
Bose-Einstein Condensation (BEC) is a phenomenon occurring at very low temperature for a highly dilute gas of bosons. In the proper experimental conditions, most of the particles get to occupy the same quantum state. In 1925, Bose~\cite{Bose-24} and Einstein~\cite{Ein-24} proved condensation for ideal particles, that is, under the important assumption that the particles do not interact with each other. Seventy years later, Cornell and Wieman~\cite{CorWie-95} obtained the first experimental realization of a full BEC with Rubidium atoms. This major achievement has triggered a new interest in the theoretical study of condensation. In 1999 the first mathematical proof of BEC was provided by Lieb, Seiringer and Yngvason~\cite{LieSeiYng-00,LieSeiSolYng-05} for confined particles interacting via positive, radial and short range pair interaction, in the dilute regime.

The experimental study of BEC remains a very active field of research today. A challenging task is to realize condensates of particles with diverse interactions. In 2005, a first dipolar condensate has been observed by Griesmaier \emph{et al} using Chromium particles \cite{GriWerHenStuPfa-05}. The dipolar interaction differs in many points from the ones encountered before. It is long range, anisotropic and has an attractive and a repulsive part. Its study is both experimentally and mathematically intricate but it opens the way to new physical effects. For instance, it is believed that dipolar condensates exhibit a roton-maxon excitation spectrum \cite{SanShlLew-03} which is not observed for simpler interactions. See for example \cite{LahMenSanLewPfau-09} for a review of the physical properties of the dipolar Bose gas.

In this article, we give the first derivation of the Gross-Pitaevskii (GP) theory for dipolar Bose gases, starting from the many-particle linear Schrödinger problem. Our method is based on mean field limits and de Finetti theorems introduced in \cite{LewNamRou-13,LewNamRou-15} as well as new techniques developed to deal with negative pair interaction that can been found in \cite{NamRouSer-15,LewNamRou-15b}.

More precisely, we will show that the particles in the condensate have a common state which can be computed by minimizing the dipolar Gross-Pitaevskii functional:
\begin{multline}\label{GP_dip_functional}
\mathcal{E}_{GP}^{a,b}(u) = \int_{\mathbb{R}^{3}} |(\nabla + i A(x))u(x)|^2 dx + \int_{\mathbb{R}^{3}} V(x) |u(x)|^2 dx + \frac{a}{2} \int_{\mathbb{R}^{3}} |u(x)|^4dx\\ + \frac{b}{2} \int_{\mathbb{R}^{3}\times\mathbb{R}^{3}} (K \star |u|^2(x)) |u(x)|^2dx.
\end{multline}
The first term in (\ref{GP_dip_functional}) represents the kinetic energy where $A: \mathbb{R}^{3} \to \mathbb{R}^{3}$ is a vector potential (modeling a magnetic field or the Coriolis force due to the rotation of the atoms). The second term is the one-body potential energy where $V:\mathbb{R}^{3} \to \mathbb{R}$ is a trapping potential, that is, $V(x) \to \infty$ when $|x|\to\infty$. The third term is the short range interaction in the gas where $a\in \mathbb{R}$ is proportional to the scattering length or an approximation of it. The last term is the dipolar energy and $b\in\mathbb{R}$ is proportional to the norm of the dipoles. The dipolar interaction potential
\begin{equation}\label{dip_pot}
K_{dip}(x) = \frac{1-3\cos^2(\theta_x)}{|x|^3} = \frac{\Omega_{dip}(x/|x|)}{|x|^3},
\end{equation}
represents the interaction between two aligned dipoles located at distance $|x|$. The parameter $\theta_x$ is the angle between $x$ and the direction $n$ of all the dipoles, namely $\cos(\theta_x) = n \cdot x / |x|$. In this paper we consider more general long-range potentials of the form
\begin{equation}\label{K_general}
K(x) = \frac{\Omega(x/|x|)}{|x|^3}
\end{equation}
where $\Omega$ is a even function satisfying the cancellation property on $\mathbb{S}^{2}$, the unit sphere of $\mathbb{R}^{3}$,
\begin{equation}\label{cancellation_prop_0}
\int_{\mathbb{S}^2} \Omega(\omega) d\sigma(\omega) = 0,
\end{equation}
where $d\sigma$ denotes the Lebesgue measure on $\mathbb{S}^{2}$. Note that the convolution
$$K\star |u|^2(x) = \int_{\mathbb{R}^{3}} \frac{\Omega(y/|y|)}{|y|^3} |u(x-y)|^2 dy = \lim_{\varepsilon\to 0}  \int_{|y|>\varepsilon} \frac{\Omega(y/|y|)}{|y|^3} |u(x-y)|^2 dy$$
 is an improper integral which is not absolutely convergent. It will be discussed later in Lemma \ref{lem_dip_cont}.
 If all the particles share the same quantum state $u$, the latter should minimize the functional in (\ref{GP_dip_functional}) under the normalization constraint
\begin{equation*}
\int_{\mathbb{R}^{3}} |u(x)|^2dx = 1.
\end{equation*}
We therefore introduce the ground state energy 
\begin{equation}\label{GP_min}
\boxed{
e_{GP}(a,b) := \inf_{\substack{\|u\|_{L^2}=1}} \mathcal{E}_{GP}^{a,b}(u).
}
\end{equation}
This variational problem has been extensively studied, both theoretically \cite{CarMarkSpa-08,CarHaj-14,AntSpa-11} and numerically \cite{BaoAbdCai,BaoCaiWan-10} in the dipolar case (\ref{dip_pot}).

Our aim is to justify the validity of the Gross-Pitaevskii minimization (\ref{GP_min}), starting with the exact many-body problem based on the Hamiltonian
\begin{equation}\label{H_N}
	H_N = \sum_{j=1}^N \bigg( -\big(\nabla_{x_j} +iA(x_j)\big)^2 + V(x_j)\bigg) + \frac{1}{N-1} \sum_{1\leq i < j \leq N} N^{3\beta} w( N^\beta (x_i-x_j)).
\end{equation}
This operator acts on $\bigotimes^N_s L^{2}(\mathbb{R}^{3})$, the symmetric tensor product of $N$ copies of $L^{2}(\mathbb{R}^{3})$. We denote by
\begin{equation*}
\boxed{
e_N := \inf_{\substack{\Psi \in \bigotimes^N_s L^{2}(\mathbb{R}^{3}) \\ \|\Psi\|_{L^{2}}=1}} \frac{\braket{\Psi, H_N \Psi}}{N}
}
\end{equation*}
the many-body ground state energy per particle. We investigate the limit of a large number of particles, $N\to\infty$. In (\ref{H_N}), the scaling chosen for the interaction between the particles is very common. On the one hand, the $L^{1}$--preserving scaling of the potential $$w_N (x) := N^{3\beta} w (N^\beta x)$$ will retain in the limit only its short and long range parts. In our case, we will have
$$
w_N \rightharpoonup a \delta + b K,
$$
since $K$ and $\delta$ behave the same under scaling. On the other hand, the coupling factor $(N-1)^{-1}$ is typical of mean field limits and ensures that the potential is of the same order as the kinetic energy. The function $w$ is the real interaction between the atoms in the gas. It will be assumed to be repulsive at short distances (hence $a>0$) and of dipolar type (close to $b K$) at large distances.

The parameter $\beta \in [0,1]$ interpolates between the pure mean-field regime $\beta = 0$ and the Gross-Pitaevskii regime $\beta = 1$ which is more difficult to handle. The case $\beta<1/3$ corresponds to a high density regime with the interaction length $N^{-\beta}$ being larger than the mean distance $N^{-1/3}$ between the particles. This is a natural setting for the law of large numbers to apply. The case $\beta >1/3$ is more subtle and corresponds to a low density regime where the particles meet rarely but interact with intensity proportional to $N^{3\beta -1}$. The difficulties culminate at $\beta = 1$ where now the details of the scattering process play a role. When $w$ has a negative part, it is natural to assume that the true interaction $w$ is (classically) stable of the second kind, which means that 
\begin{equation*}
\sum_{1\leq i < j \leq N} w(x_i-x_j) \geq - C N,
\end{equation*}
for all $N\geq 2$ and all space configurations $x_1,...,x_N \in \mathbb{R}^{3}$ of the $N$ particles \cite{Ruelle}. This assumption ensures automatically that $e_N$ is bounded from below when $\beta \leq 1/3$. The case $\beta >1/3$ is particularly difficult and the proof that $e_N$ is bounded from below requires to use the kinetic energy, that is, the quantum feature of the system. The scaled interaction alone is unstable.

For $\beta = 0$ the convergence of $e_N$ to the Hartree energy was proven in \cite{LewNamRou-13}. In this paper, we prove that 
\begin{equation}\label{lim_e_N}
\boxed{
\lim_{N\to \infty}e_N = e_{GP}(a,b)}
\end{equation}
as $N\to\infty$ for $$\boxed{0< \beta < \frac{1}{3} + \frac{s}{45+42 s}}$$ under the sole assumption that $w$ is stable as in (\ref{classical_stab}) and that $V(x) \geq C |x|^s$ at infinity. In that case, we are able to show that $e_N \geq - C$ but it is unclear if additional assumptions on $w$ are needed for higher $\beta$'s. Under the additional condition that $e_N \geq -C$ our result (\ref{lim_e_N}) applies up to $\beta < 2/3$. In addition to the convergence of the energy, we are also able to prove the convergence of the minimizers. Loosely speaking, we prove that the ground state of (\ref{GP_min}) factorizes in the limit, that is,
\begin{equation}\label{factorization}
\Psi(x_1,...,x_N) \simeq u(x_1)\cdots u(x_N) \text{ as } N \to \infty,
\end{equation}
where $u$ is a minimizer of the dipolar Gross-Pitaevskii functional. The meaning of (\ref{factorization}) is in the sense of density matrices and not in $L^{2}$ norm as we will recall.

The paper is organized as follows. In Section \ref{main_results} we state our main contributions, including some results on the existence and uniqueness of minimizers of the Gross-Pitaevskii functional (\ref{GP_dip_functional}) as well as our main convergence property (\ref{lim_e_N}). In Section \ref{Big_Proof} we prove our main result, Theorem~\ref{theo_1}.


\section{Main results}\label{main_results}

\subsection{Properties of the dipolar Gross-Pitaevskii functional}
As discussed before, the dipolar GP equation has already been widely studied and we slightly extend the results of \cite{BaoCaiWan-10,CarHaj-14} to deal with more general $K$ and $V$ as well as a magnetic potential or a Coriolis force $A$. The main adjustment concerns the uniqueness of minimizers which, in order to remain true, requires some extra assumptions on the intensity of the magnetic field.
\begin{theo}[Existence of GP minimizers]\label{theo_GP}
Let us assume that $V\in L^1_{\loc}(\mathbb{R}^{3})$, $A\in L_{\loc}^{2}(\mathbb{R}^{3})$ and that there is some $s>0$ such that
\begin{equation}\label{A_and_V_condition_1}
V(x) \geq C^{-1}(|A(x)|^2 + |x|^s) - C,\quad \forall x \in \mathbb{R}^{3}
\end{equation}
for some $C>0$. Let $$K(x) = \frac{\Omega(x/|x|)}{|x|^3}$$ with $\Omega\in L^q(\mathbb{S})$, for some $q>1$, a even function satisfying the cancellation property $$\int_{\mathbb{S}^2} \Omega(w) d\sigma(w) = 0.$$ Let $\widehat{K}$ be the Fourier transform of $K$ in the sense of the principal value (as defined later in Lemma \ref{lem_dip_cont}). \\
i)~ If
\begin{equation}\label{hyp1_theo_GP}
\left \{
\begin{array}{c c}
    b > 0 \text{ and } a \geq b \;(\inf \widehat{K})_-,\\
    \text{ or } \\
    b < 0 \text{ and } a \geq -b \; (\sup \widehat{K})_+, \\
\end{array}
\right.
\end{equation}
then $e_{GP}(a,b) > -\infty $ and $\mathcal{E}_{GP}$ has minimizers.\\
ii)~If
\begin{equation}\label{hyp2_theo_GP}
\left \{
\begin{array}{c c}
    b > 0 \text{ and } a < b \;(\inf \widehat{K})_-,\\
    \text{ or } \\
    b < 0 \text{ and } a < -b \; (\sup \widehat{K})_+, \\
\end{array}
\right.
\end{equation}
then $e_{GP}(a,b) = -\infty$.
\end{theo}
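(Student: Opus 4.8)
The plan is to diagonalise the two interaction terms of $\mathcal{E}_{GP}^{a,b}$ in Fourier space and to reduce everything to the sign of an effective symbol. By Lemma~\ref{lem_dip_cont} the principal-value Fourier transform $\widehat K$ is real, even, bounded and homogeneous of degree $0$ (hence continuous on $\mathbb{R}^3\setminus\{0\}$), and for $\rho\in L^1(\mathbb{R}^3)\cap L^2(\mathbb{R}^3)$ one has the identity
\begin{equation*}
\frac a2\int_{\mathbb{R}^3}\rho^2\,dx+\frac b2\int_{\mathbb{R}^3}(K\star\rho)\,\rho\,dx=\frac{c_0}{2}\int_{\mathbb{R}^3}\bigl(a+b\,\widehat K(\xi)\bigr)\,|\widehat\rho(\xi)|^2\,d\xi ,
\end{equation*}
with $c_0>0$ a Plancherel constant. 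I would apply this with $\rho=|u|^2$, which lies in $L^1\cap L^2$ as soon as $(\nabla+iA)u\in L^2$ (diamagnetic inequality and Sobolev). Since $\int_{\mathbb{S}^2}\widehat K\,d\sigma=0$ by \eqref{cancellation_prop_0}, we have $\inf\widehat K\le 0\le\sup\widehat K$, and a short case analysis shows that \eqref{hyp1_theo_GP} is equivalent to ``$a+b\,\widehat K\ge 0$ everywhere'', while \eqref{hyp2_theo_GP} is equivalent to ``$a+b\,\widehat K(\xi_0)<0$ for some $\xi_0$'', hence --- by homogeneity, continuity and evenness of $\widehat K$ --- to the existence of $\delta>0$ and a symmetric open cone $\mathcal{C}$ around $\pm\xi_0$ with $a+b\,\widehat K\le -\delta$ on $\mathcal{C}$. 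The proof then splits along this dichotomy.

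For part (i), assuming $a+b\widehat K\ge 0$, the interaction part of $\mathcal{E}_{GP}^{a,b}$ is nonnegative, so \eqref{A_and_V_condition_1} (which gives $V\ge -C$) yields $\mathcal{E}_{GP}^{a,b}(u)\ge\int|(\nabla+iA)u|^2+\int(V+C)|u|^2-C\ge -C$, whence $e_{GP}(a,b)>-\infty$. For existence I would run the direct method: a minimising sequence $(u_n)$ is bounded, by the above, in the magnetic form domain and in $L^2(|x|^s\,dx)$; the diamagnetic inequality makes $|u_n|$ bounded in $H^1(\mathbb{R}^3)$, and local Rellich compactness together with the tightness coming from $\int|x|^s|u_n|^2\le C$ gives, along a subsequence, $u_n\to u$ in $L^2(\mathbb{R}^3)$, hence in $L^p$ for $2\le p<6$ by interpolation with the $L^6$ bound, in particular in $L^4$; thus $|u_n|^2\to|u|^2$ in $L^2$ and $\widehat{|u_n|^2}\to\widehat{|u|^2}$ in $L^2$. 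Passing to the limit is then routine: the kinetic and trapping terms are weakly lower semicontinuous, the quartic term is continuous for the $L^4$ topology, and the dipolar term is continuous because $\bigl|\int\widehat K(|\widehat{\rho_n}|^2-|\widehat\rho|^2)\bigr|\le\|\widehat K\|_{L^\infty}\|\widehat{\rho_n}-\widehat\rho\|_{L^2}(\|\widehat{\rho_n}\|_{L^2}+\|\widehat\rho\|_{L^2})\to 0$. Since $\|u\|_{L^2}=1$ by strong convergence, $u$ is a minimiser.

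For part (ii), assume $a+b\widehat K\le -\delta$ on a symmetric cone $\mathcal{C}$ around a unit vector $p_0$. I would first reduce to producing a single probability density $\rho\ge 0$ with $\int(a+b\widehat K)|\widehat\rho|^2<0$: choosing $x_0$ a common Lebesgue point of $V$ and $|A|^2$, setting $\rho_\lambda(x)=\lambda^3\rho(\lambda(x-x_0))$ and $u_\lambda=\sqrt{\rho_\lambda}$, the degree-$0$ homogeneity of $\widehat K$ gives $\int(a+b\widehat K)|\widehat{\rho_\lambda}|^2=\lambda^3\int(a+b\widehat K)|\widehat\rho|^2\to -\infty$, while $\int|\nabla\sqrt{\rho_\lambda}|^2=\lambda^2\int|\nabla\sqrt\rho|^2$ and $\int(|A|^2+V)\rho_\lambda$ stay bounded, so $\mathcal{E}_{GP}^{a,b}(u_\lambda)\to -\infty$. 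To construct $\rho$ I would use a Fej\'er-type modulation: fix $g\in C^\infty_c(\mathbb{R}^3)$ with $\|g\|_{L^2}=1$, a large integer $M$ and a large $R>0$, and set
\begin{equation*}
\rho(x)=\frac1{ZM}\,g(x)^2\,\Bigl|\sum_{j=1}^{M}e^{ijR(p_0\cdot x)}\Bigr|^2,\qquad Z=\int_{\mathbb{R}^3}g(x)^2\,\frac1M\Bigl|\sum_{j=1}^{M}e^{ijR(p_0\cdot x)}\Bigr|^2dx ,
\end{equation*}
so that $\rho\ge 0$, $\int\rho=1$ and $Z\to 1$ as $R\to\infty$. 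From $\bigl|\sum_{j=1}^{M}e^{ij\theta}\bigr|^2=\sum_{|l|<M}(M-|l|)e^{il\theta}$ one gets $\widehat\rho=\frac1{ZM}\sum_{|l|<M}(M-|l|)\widehat{g^2}(\cdot-lRp_0)$, a sum of $2M-1$ translates of the fixed, rapidly decaying bump $\widehat{g^2}$; for $R$ large these are almost disjoint, the $l=0$ one is near the origin and carries $L^2$-mass at most $C\|g^2\|_{L^2}^2$, whereas each of the $\sim M$ bumps with $0<|l|\le M/2$ lies in $\mathcal{C}$ and carries $L^2$-mass at least $c\|g^2\|_{L^2}^2$, with $c,C$ independent of $M,R$. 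Hence for $R$ large $\int(a+b\widehat K)|\widehat\rho|^2\le C\|a+b\widehat K\|_{L^\infty}\|g^2\|_{L^2}^2-c\,\delta\,M\,\|g^2\|_{L^2}^2<0$ once $M$ is chosen large enough, and we are done.

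The step I expect to be the real obstacle is the construction in part (ii): because $\rho\ge 0$ forces $\widehat\rho(0)=1$, the density always carries low-frequency $L^2$-mass on which $\widehat K$, being homogeneous of degree $0$, cannot be controlled in sign, and one must arrange for the favourable high-frequency mass inside the bad cone to beat this unavoidable contamination --- a single modulation does not suffice, since the thresholds in \eqref{hyp1_theo_GP}--\eqref{hyp2_theo_GP} are sharp and $\delta$ may be arbitrarily small, which is exactly why the Fej\'er kernel (with its mass ratio $\sim M:1$) is used. A more routine but still non-trivial point is the compactness of the minimising sequence in part (i) in the magnetic form domain, which rests on the diamagnetic inequality together with the growth of $V$.
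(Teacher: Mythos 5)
Your argument is correct, and part (i) follows essentially the paper's route: nonnegativity of the interaction in Fourier variables under \eqref{hyp1_theo_GP}, a lower bound from \eqref{A_and_V_condition_1}, and compactness of the minimizing sequence in $L^2\cap L^4$ coming from the growth of $V$ (the paper phrases this as compactness of the resolvent of $h=-(\nabla+iA)^2+V$ via Lemma \ref{lem_magn_lapl}, you phrase it via the diamagnetic inequality, Rellich and tightness --- same substance). Part (ii) is where you genuinely diverge. Both proofs reduce, as you do, to producing one density $\rho\ge 0$ with $\int(a+b\widehat K)|\widehat\rho|^2<0$ and then letting the $\ell^{3}$ interaction term beat the $\ell^{2}$ kinetic term under $L^2$-preserving dilation. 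For the construction of $\rho$ the paper uses the anisotropic, volume-preserving squeeze $f_\lambda(x)=f(\lambda x_1,\lambda x_2,\lambda^{-2}x_3)$, which keeps $\|\widehat{\rho_\lambda}\|_{L^2}^2=\|f\|_{L^4}^4$ fixed while pushing all but an arbitrarily small fraction $\eta$ of that $L^2$-mass into the cone $C_r$ around the bad direction $e_3$; you instead use a Fej\'er modulation that transports most of the Fourier mass to high frequencies $lRp_0$ inside the cone, beating the DC bump by the factor $M$. Both work; note, though, that the obstacle you single out (the unavoidable low-frequency mass forced by $\widehat\rho(0)=1$) is less severe than you suggest, precisely because the bad set $\{a+b\widehat K<-\delta\}$ is a cone through the origin by degree-$0$ homogeneity: concentrating $\widehat\rho$ into that cone, including near $\xi=0$, is enough, and the anisotropic dilation achieves this with a single one-parameter family and no amplification. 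Your construction buys a little extra --- it only uses that the bad set contains a neighborhood of some ray far from the origin, so it would survive if $\widehat K$ were merely bounded and continuous on a cone at infinity rather than exactly homogeneous --- at the price of a slightly heavier verification (near-orthogonality of the translates of $\widehat{g^2}$, membership of $\sqrt\rho\in H^1$ despite the zeros of the Dirichlet kernel, and the Lebesgue-point control of $\int(V+|A|^2)\rho_\lambda$, for which the paper's cruder $o(\ell^3)$ bound would also suffice).
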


\begin{rem}
In the dipolar case where $\Omega(x/|x|) = 1 - 3 \cos^2(\theta_x)$,  then $\widehat{K}(k) = \frac{4\pi}{3}\left(3\cos^2(\theta_k) - 1 \right)$, $(\inf \widehat{K})_- = 4\pi /3$ and $(\inf \widehat{K})_+ = 8\pi/3$.
\end{rem}

The understanding of the effective theory is necessary to grasp the forthcoming difficulties in the derivation from the many-body theory. The condition \textit{i)} on $a$ and $b$ aims to make the interaction stable which in this case is equivalent to be positive, mainly because then $a + b \widehat{K} \geq 0$. Indeed, one can easily see that if the interaction term is negative for some configuration $(a,b)$ and some wave function $u$, then by scaling one gets that $e_{GP}(a,b) = -\infty$.

In fact, most of our results remains true if we change $(\nabla + i A(x))^2$ by any abstract self-adjoint operator $h$ such that $ h \geq C^{-1}(-\Delta + |x|^s) - C$.

We first state some lemma that will be useful in the proof of Theorem \ref{theo_GP}.

\begin{lem}[Magnetic Laplacian]\label{lem_magn_lapl}
Under assumption (\ref{A_and_V_condition_1}), one has the operator inequality on $L^{2}(\mathbb{R}^{3})$
\begin{equation*}
C^{-1}(-\Delta + V) - C \leq (-i\nabla + A)^2 + V \leq C (-\Delta + V + 1) ,
\end{equation*}
for some constant $C>0$.
\end{lem}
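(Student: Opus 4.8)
The plan is to establish the two operator inequalities by quadratic form estimates, working on the form domain of $(-i\nabla+A)^2+V$, which under (\ref{A_and_V_condition_1}) coincides (up to equivalence of norms) with $\{u\in H^1 : \int V|u|^2<\infty,\ \int |A|^2|u|^2<\infty\}$. For the upper bound, I would expand the magnetic form pointwise: $|(-i\nabla+A)u|^2 \le 2|\nabla u|^2 + 2|A|^2|u|^2$, so that $\langle u,((-i\nabla+A)^2+V)u\rangle \le 2\|\nabla u\|^2 + 2\int|A|^2|u|^2 + \int V|u|^2$. Using (\ref{A_and_V_condition_1}) in the form $|A(x)|^2 \le C(V(x)+C)$ gives $\int|A|^2|u|^2 \le C\int V|u|^2 + C'\|u\|^2$, and hence the whole expression is bounded by $C''\langle u,(-\Delta+V+1)u\rangle$ after absorbing constants. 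This direction is essentially the elementary Cauchy-Schwarz/triangle inequality plus the pointwise domination of $|A|^2$ by $V$.

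For the lower bound I would use the diamagnetic-type splitting with a parameter: for any $\varepsilon\in(0,1)$, $|(-i\nabla+A)u|^2 = |\nabla u|^2 + 2\,\mathrm{Re}(i\bar u A\cdot\nabla u) + |A|^2|u|^2 \ge (1-\varepsilon)|\nabla u|^2 + (1-\varepsilon^{-1})|A|^2|u|^2$, by Young's inequality $2|A||u||\nabla u|\le \varepsilon|\nabla u|^2 + \varepsilon^{-1}|A|^2|u|^2$. Therefore $\langle u,((-i\nabla+A)^2+V)u\rangle \ge (1-\varepsilon)\|\nabla u\|^2 + \int(V - (\varepsilon^{-1}-1)|A|^2)|u|^2$. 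Now invoke (\ref{A_and_V_condition_1}): since $|A|^2 \le C(V+C)$ we have $V - (\varepsilon^{-1}-1)|A|^2 \ge (1 - C(\varepsilon^{-1}-1))V - C'$. Choosing $\varepsilon$ close enough to $1$ (so that $C(\varepsilon^{-1}-1) \le 1/2$, say) makes the coefficient of $V$ at least $1/2$, yielding $\langle u,((-i\nabla+A)^2+V)u\rangle \ge (1-\varepsilon)\|\nabla u\|^2 + \tfrac12\int V|u|^2 - C'\|u\|^2 \ge c(\langle u,(-\Delta+V)u\rangle) - C''\|u\|^2$ for suitable constants, which is the claimed inequality after renaming $C$.

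The only genuinely delicate point is ensuring the manipulations are legitimate at the level of quadratic forms — i.e. that $C^{-\infty}_c$ (or the Schwartz class) is a form core, so the pointwise/integration-by-parts identities extend to the full form domain, and that all integrals appearing are finite for $u$ in that domain. This is standard given $V\in L^1_{\mathrm{loc}}$, $A\in L^2_{\mathrm{loc}}$ and the coercivity (\ref{A_and_V_condition_1}) (which forces $V$ bounded below and growing), so I would simply cite the standard theory of magnetic Schr\"odinger forms (e.g. as in Leinfelder–Simader or Cycon–Froese–Kirsch–Simon) for the self-adjointness and form-core statements, and then the inequalities follow from the pointwise bounds above. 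I do not anticipate a real obstacle here; the lemma is a soft but necessary preliminary that lets us replace $(-i\nabla+A)^2+V$ by the simpler $-\Delta+V$ in all later a priori estimates.
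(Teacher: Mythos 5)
Your proposal is correct and follows essentially the same route as the paper: the upper bound via the operator Cauchy--Schwarz (Young) inequality combined with the pointwise domination $|A|^2\le C(V+C)$ from (\ref{A_and_V_condition_1}), and the lower bound via the same inequality with a parameter $\eta$ (your $\varepsilon$) chosen close to $1$ so that the resulting $-(\eta^{-1}-1)|A|^2$ term is absorbed by $V$. The additional remarks on form domains and form cores are sound and merely make explicit what the paper leaves implicit by citing the Friedrichs construction.
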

\begin{proof}[Proof of Lemma \ref{lem_magn_lapl}]
Using the Cauchy-Schwarz inequality for operators (\ref{CS_op}) and inequality (\ref{A_and_V_condition_1}), we have the upper bound
\begin{align*}
(p+A)^2 &= p^2 + pA + Ap + |A|^2 \leq C(p^2 + V) + C.
\end{align*}
To get the lower bound, we choose some $0 < \eta < 1$ sufficiently close to $1$ but fixed and we use again the Cauchy-Schwarz inequality
\begin{align*}
(p+A)^2 +V &\geq (1-\eta) p^2 + (1-\eta^{-1}) |A|^2 + V \geq C^{-1}(p^2+V) - C.
\end{align*}
\end{proof}

\begin{proof}[Proof of Theorem \ref{theo_GP}]
We start by proving \textit{i)}. First, since $V\in L^{1}_{\loc}(\mathbb{R}^{3})$ and $A\in L^{2}_{\loc}(\mathbb{R}^{3})$ the quadratic form associated with the first two terms of the energy $\mathcal{E}_{GP}^{a,b}$ is closed on a domain included in $H^1(\mathbb{R}^{3})$ and provides a self-adjoint realization of $H_0 = -(\nabla + iA)^2 + V$ by the method of Friedrichs \cite{LeiFel-81}. Then, to prove \textit{i)}, it suffices to follow the same proof as in \cite{CarHaj-14} and to use, when necessary, that $H_0 = -(\nabla + iA)^2 + V$ has compact resolvent \cite{Iwatsuka-86}. More precisely, let us take a minimizing sequence $(u_n)$ for $\mathcal{E}_{GP}$. The condition on $a$ and $b$ ensures that the interaction part of the energy is non-negative $$a \int |u|^4 + b \int K \star |u|^2 |u|^2 = \int \left( a + b \widehat{K}\right) |\widehat{|u|^2}|^2 \geq 0.$$ Thus $\mathcal{E}_{GP}$ is bounded below on the unit sphere of $L^2(\mathbb{R}^{3})$ and $\braket{u_n, H_0 u_n}$ is bounded. We can then extract of $(u_n)$ a converging subsequence in $L^2 \cap L^4$. Indeed, thanks to Lemma \ref{lem_magn_lapl}, $H_0$ has compact resolvent. We can write $u_n = H^{-1/2}_0 H_0^{1/2}u_n$ and since $(H_0^{1/2} u_n)$ is bounded in $L^2(\mathbb{R}^{3})$, because $(u_n)$ is a minimizing sequence of $\mathcal{E}_{GP}^{a,b}$, then $(u_n)$ is precompact in $L^2(\mathbb{R}^{3})$. We can then extract a converging subsequence that we still denote by $(u_n)$ and we write $u$ its limit. Fatou's lemma gives $\mathcal{E}(u) \leq \liminf \mathcal{E}_{GP}(u_n) = \inf \mathcal{E}_{GP}$.

 For \textit{ii)}, it suffices to employ Lemma \ref{lem_magn_lapl} above to adapt the proof of \cite{CarHaj-14}. Without loss of generality, we can assume that $e_3 \in \{a + b\widehat{K} < -\delta \} \neq \emptyset$. Let $f$ be a smooth function compactly supported with $\|f\|_{L^2} = 1$ and let us denote by $\rho (x) = |f|^2 (x)$ and $f_\lambda(x) = f(\lambda x_1 , \lambda x_2, \lambda^{-2}x_3)$. We have $\|f_\lambda\|_{L^2} = \|f\|_{L^2} = 1$, $\| \widehat{\rho}_\lambda^2 \|_{L^1} = \| \widehat{\rho}^2\|_{L^1} = \|f\|_{L^4}^4$ and $\|\widehat{\rho}_\lambda\|_{L^\infty} \leq \|\rho_\lambda\|_{L^1} = 1$ for all $\lambda >0$. For any $r>0$ we denote by $C_r$ the cone with vertex at the origin, with direction $e_3$ and with angle $r$. For any $\eta > 0$, there exists $\lambda_0$ such that for any $0<\lambda \leq \lambda_0$, we have 
 \begin{equation}\label{tendue}
\int_{\mathbb{R}^{3}\setminus C_r} \rho^2_\lambda < \eta.
 \end{equation}
According to Lemma \ref{lem_dip_cont} below, $\widehat{K}$ is an homogeneous function which is continuous except at the origin. Since $\widehat{K}$ is continuous at $e_3$, we can find $r>0$ such that $B(e_3,r) \subset \{ a + b\widehat{K} < -\delta\}$. Now for any $\eta >0$, there exists some $\lambda_0>0$ such that for all $\lambda \leq \lambda_0$ we have (\ref{tendue}), then
 \begin{align*}
 \int \left(a+b\widehat{K}\right) |\widehat{\rho}_\lambda|^2
 &=  \int_{\mathbb{R}^{3}\setminus C_r} \left(a+b\widehat{K}\right) |\widehat{\rho}_\lambda|^2 + \int_{C_r} \left(a+b\widehat{K}\right) |\widehat{\rho}_\lambda|^2 \\
 &\leq \eta \|a + b\widehat{K}\|_{L^\infty} -(\|f\|_{L^4}^4-\eta) \delta.
 \end{align*}
If we take $\eta < \delta \|f\|_{L^4}^4 / ( 2\delta + 2\|a + b\widehat{K}\|_{L^\infty})$, we obtain
\begin{equation*}
\int a |f_\lambda|^4 + b K\star |f_\lambda|^2 |f_\lambda|^2 = \int \left(a+b\widehat{K}\right) \widehat{\rho}_\lambda^2 \leq -\frac{\|f\|_{L^4}^4 \delta}{2} < 0.
\end{equation*}
Now, let us denote by $\varphi = f_{\lambda_0}$ and by $\varphi_\ell = \ell^{3/2} \varphi (\ell x)$. Since $V$ is $L^1_{loc}$ and $\varphi$ is compactly supported, with, say, $\supp \varphi \subset B(0,R)$, we have
$$
\left| \int_{\mathbb{R}^{3}} V |\varphi_\ell|^2 \right| \leq \ell^3 \int_{B(0,R\ell^{-1})} |V| = o(\ell^{3}).
$$
Applying Lemma \ref{lem_magn_lapl} we obtain
\begin{align*}
\mathcal{E}_{GP}(\varphi_\ell) \leq C \left( \int |\nabla \varphi_\ell|^2 + V|\varphi_\ell|^2\right) + \int a |\varphi_\ell|^4 + b K\star |\varphi_\ell|^2 |\varphi_\ell|^2.
\end{align*}
And for $\ell$ large enough, we have
\begin{align*}
e_{GP}(a,b) &\leq C \ell^2 \int |\nabla \varphi|^2 +\ell^3 \int_{B(0,R\ell^{-1})} |V| + \ell^3 \left(\int a |\varphi|^4 + b K\star |\varphi|^2 |\varphi|^2\right) \\
&\leq C \ell^2 - \frac{\|f\|_{L^4}^4 \delta}{2} \ell^3.
\end{align*}
Taking the limit $\ell\to\infty$, the last inequality gives $e_{GP}(a,b) = - \infty$.
\end{proof}

If $A=0$, we can assume $u > 0$, because $|\nabla u| \geq |\nabla |u||$ almost everywhere. Then since $a+b\widehat{K}\geq 0$, the functional is strictly convex with respect to $|u|^2$ and uniqueness follows as in \cite{BaoCaiWan-10} where the case $K = K_{dip}$ was considered. On the other hand, if $A\neq 0$, the functional is no more convex and uniqueness can fail. Indeed, a  strong magnetic field can create vortices which are a sign of rotational symmetry breaking. This situation was already encountered in the non dipolar case $K=0$ in \cite{Sei-02,Sei-03}. A small magnetic field $A$ does not create any vortex and uniqueness remains true, as we will show in Theorem \ref{theo_uniq}. 

In the non-dipolar case $K=0$, uniqueness has been shown in several situations \cite{LieSeiSolYng-05,Sei-02,Sei-03}. These works can all be adapted to the dipolar case $K\neq 0$ but here for shortness we only discuss an extension of \cite[Ch. 7]{LieSeiSolYng-05}. In the following result we apply the implicit function theorem for $A$ small enough without getting any information on how small $A$ has to be. The second part deal with a radial magnetic field and gives an explicit range of validity as in \cite[Ch. 7]{LieSeiSolYng-05}.
\begin{theo}[Uniqueness of the minimizer for the dipolar GP functional with magnetic field]\label{theo_uniq}
Let $A, V$ and $K$ satisfy the assumptions of Theorem \ref{theo_GP} and let $(a,b)$ be admissible parameters as in (\ref{hyp1_theo_GP}).
\begin{enumerate}
\item \label{lem_uniq_0} For $t \in \mathbb{R}$, we denote by $\mathcal{E}_{GP}^{a,b,t}$ the Gross-Pitaevskii functional where $A$ has been replaced by $t A$. Then, there exists $t_0 > 0$ such that for all $|t| < t_0$, $\mathcal{E}_{GP}^{a,b,t}$ has a unique minimizer, up to a phase factor, in the sector of mass $\int_{\mathbb{R}^{3}} |u|^2 = 1$.

\item \label{lem_uniq_1} Assume $A(x) = a(r,z)e_\theta$, $V(x) = V(r,z)$ where $(r,\theta,z)$ are the cylindrical coordinates with $e_z = n$ (the commmon direction of all the dipolar moments). If $\|r a\|_{\infty} < 1/2$ then $\mathcal{E}_{GP}$ has a unique minimizer, up to a phase factor, in the sector of mass $\int_{\mathbb{R}^{3}} |u|^2 =1$. It is non negative and axially symmetric, with axis $e_z=n$.
\end{enumerate}
\end{theo}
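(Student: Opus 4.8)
The plan is to handle the two parts by different mechanisms: part \eqref{lem_uniq_0} by a soft perturbative argument (the implicit function theorem together with a compactness argument around the $A=0$ minimiser), and part \eqref{lem_uniq_1} by a direct symmetrisation argument that turns the strict bound $\|ra\|_\infty<1/2$ into monotonicity of each piece of the energy under angular averaging.

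\emph{Part \eqref{lem_uniq_0}.} At $t=0$ the vector potential is absent, the functional is strictly convex in $|u|^2$ (because $a+b\widehat K\ge 0$ on the admissible set), and Theorem \ref{theo_GP} provides a minimiser which may be taken positive and is unique; call it $u_0$, with Lagrange multiplier $\mu_0$. The first step is to establish the non-degeneracy of $u_0$: splitting perturbations into real and imaginary parts, the second variation of $\mathcal E_{GP}^{a,b,0}-\mu_0\|\cdot\|_{L^2}^2$ at $u_0$ decomposes into an imaginary block governed by the Schr\"odinger operator $\mathcal L^+ := -\Delta + V + a u_0^2 + b\,K\star u_0^2 - \mu_0$ and a real block governed by $\mathcal L^- := \mathcal L^+ + R$, where $R$ is the operator $g\mapsto 2u_0\big(a u_0 g + b\,K\star(u_0 g)\big)$. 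The Euler--Lagrange equation is precisely $\mathcal L^+ u_0=0$, so $u_0>0$ is the ground state of $\mathcal L^+$; by Perron--Frobenius and compactness of the resolvent (Lemma \ref{lem_magn_lapl}), $\mathcal L^+\ge 0$ with kernel exactly $\mathbb R u_0$ and a gap above $0$, while the non-negativity of $a+b\widehat K$ as a Fourier multiplier makes $R$ a non-negative operator, so $\mathcal L^-\ge\mathcal L^+$ is coercive on $\{u_0\}^\perp$. Hence the Hessian of the constrained functional is coercive on the $L^2$-tangent space at $u_0$ modulo the single direction $i u_0$, which is the infinitesimal phase. One then applies the implicit function theorem to the gauge-fixed map $(t,u,\mu)\mapsto\big(\text{Euler--Lagrange residual},\ \|u\|_{L^2}^2-1,\ \mathrm{Im}\langle u_0,u\rangle\big)$ near $(0,u_0,\mu_0)$: non-degeneracy makes the differential in $(u,\mu)$ invertible and produces a unique branch $u(t)$ of critical points near $u_0$ for $|t|<t_0$.

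It remains to show that every minimiser $u_t$ of $\mathcal E_{GP}^{a,b,t}$ lies, up to a phase, in a fixed $L^2$-neighbourhood of $u_0$ once $|t|$ is small enough. This is the standard compactness step: the interaction energy is non-negative on the admissible set, so by Lemma \ref{lem_magn_lapl} one has $\mathcal E_{GP}^{a,b,t}(u_t)\ge C^{-1}\|(-\Delta+V)^{1/2}u_t\|_{L^2}^2 - C$, while $\mathcal E_{GP}^{a,b,t}(u_t)\le\mathcal E_{GP}^{a,b,t}(u_0)$ stays bounded for $|t|\le 1$; compactness of the resolvent then gives, along any sequence $t_n\to 0$, a subsequence with $u_{t_n}\to u_\star$ strongly in $L^2$ and weakly in the form domain, and lower semicontinuity forces $u_\star$ to minimise $\mathcal E_{GP}^{a,b,0}$, hence $u_\star=e^{i\alpha}u_0$. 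After gauge fixing, $u_t$ must then coincide with $u(t)$ for $|t|<t_0$, which proves uniqueness up to a phase. I expect the non-degeneracy of $u_0$ — the splitting of the Hessian and the identification of its kernel with the phase orbit — to be the main obstacle, since everything else is routine once it is in place; this is precisely where positivity of $u_0$ and of $a+b\widehat K$ enter.

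\emph{Part \eqref{lem_uniq_1}.} Expanding $u=\sum_{m\in\mathbb Z}u_m(r,z)e^{im\theta}$ gives $\int|(\nabla+iA)u|^2 = 2\pi\sum_m\int\big(|\partial_r u_m|^2+|\partial_z u_m|^2+(m/r+a)^2|u_m|^2\big)\,r\,dr\,dz$, while $\int V|u|^2$ and the normalisation see only $\tilde\rho:=\sum_m|u_m|^2$. The identity $(m/r+a)^2 = a^2 + r^{-2}m(m+2ar)$, together with the strict condition $\|ra\|_\infty<1/2$, yields $(m/r+a)^2\ge a^2$ for all $m$ and strictly for $m\neq 0$ (equality would force $|ar|=|m|/2\ge 1/2$); combined with the diamagnetic-type bound $|\nabla_{r,z}\sqrt{\tilde\rho}|^2\le\sum_m|\nabla_{r,z}u_m|^2$, this shows that replacing $u$ by $\tilde u:=\sqrt{\tilde\rho}$ does not increase the kinetic energy (the magnetic cross term vanishes for the real radial function $\tilde u$). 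For the interaction I would pass to the Fourier side via Lemma \ref{lem_dip_cont}: since $\widehat K$ is homogeneous of degree $0$ and axially symmetric about $n$, the weight $a+b\widehat K\ge 0$ depends only on $|\xi|$ and the angle of $\xi$ to $n$, so Cauchy--Schwarz applied to the azimuthal average of $\widehat{|u|^2}$ shows that angular averaging does not increase $\tfrac a2\int|u|^4+\tfrac b2\int K\star|u|^2\,|u|^2$. Hence $\mathcal E_{GP}(\tilde u)\le\mathcal E_{GP}(u)$, and if $u$ is a minimiser equality holds, which forces $u_m=0$ for $m\neq 0$; so every minimiser is axially symmetric, and writing it as $|u|e^{i\psi(r,z)}$ and using $\nabla\psi\cdot e_\theta=0$ shows the optimal $\psi$ is constant, so the minimiser equals $e^{i\alpha}w$ with $w=|u|$ real, non-negative and axially symmetric. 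Finally, restricted to real non-negative axially symmetric functions the functional reads $\int|\nabla w|^2+\int(V+a^2)w^2+\tfrac a2\int w^4+\tfrac b2\int K\star w^2\,w^2$, which is strictly convex in $\rho=w^2$ just as in the case $A=0$ treated in \cite{BaoCaiWan-10}, giving a unique minimiser and hence uniqueness up to a phase for the original problem. The delicate point here is the monotonicity of the improperly convergent dipolar term under angular averaging, which is exactly why one must work in Fourier variables and invoke Lemma \ref{lem_dip_cont}.
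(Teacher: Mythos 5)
Your proposal is correct and follows essentially the same route as the paper's proof: part (1) is done by linearising the Euler--Lagrange equation at the $A=0$ minimiser $u_0$, establishing non-degeneracy of the real and imaginary blocks of the Hessian (using positivity of $u_0$ and of $a+b\widehat{K}$), applying the implicit function theorem, and then a compactness argument showing every minimiser for small $t$ lies in the resulting neighbourhood; part (2) uses the angular Fourier decomposition, the bound $(m/r+a)^2\ge a^2$ (strict for $m\neq 0$) under $\|ra\|_{\infty}<1/2$, and convexity of the remaining terms in $\rho=|u|^2$ via $a+b\widehat{K}\ge 0$ --- your Fourier-side Jensen inequality for the interaction is just the paper's convexity argument in disguise. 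The one step to tighten is the compactness argument in part (1): the implicit function theorem only gives uniqueness in a neighbourhood of $u_0$ for the form-domain norm, so the convergence $u_{t_n}\to e^{i\alpha}u_0$ must be upgraded from strong $L^2$ and weak energy convergence to strong convergence in the energy norm; as in the paper, this follows from convergence of the energies together with continuity of the quartic and dipolar terms (Sobolev embedding and Lemma \ref{lem_dip_cont}).
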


The proof is given in Appendix \ref{section_proof_lem_uniq_0}.

\subsection{Derivation of the dipolar Gross-Pitaevskii energy and minimizers}\label{section_deriv_GP}
The phenomenon of condensation can be detected through the convergence of the ground state and the ground state energy towards those predicted by the Gross-Pitaevskii theory. We want to give general assumptions on $w$ to be stable and to behave like the dipolar interaction at long distances. Investigating this problem, one can first notice that taking $w= w_0 + K$ with $w_0\in L^1$ is not permitted. Indeed, the two particles hamiltonian $H_2$ is ill defined due to the singularity of $K$ at the origin. The dipolar interaction is only physically relevant at large distances and therefore it is reasonable to assume that $w$ is close to $K$ outside of a ball.

\begin{theo}[Convergence of the energy]\label{theo_1}
Let $A\in L^{2}_{loc}(\mathbb{R}^{3})$ and $0 \leq V\in L^1_{loc}(\mathbb{R}^{3})$ satisfying
\begin{equation*}
V(x) \geq C^{-1}(|A(x)|^2 + |x|^s) - C, \quad \forall x \in \mathbb{R}^{3}.
\end{equation*}
Let $K = \Omega(x/|x|)|x|^{-3}$ with $\Omega\in L^q(\mathbb{S}^{2})$, for some $q\geq 2$, a even function satisfying the cancellation property (\ref{cancellation_prop_0}). Let $w$, $R >0$ and $b\in \mathbb{R}$, be such that 
\begin{equation*}
w-b\mathds{1}_{|x|>R}K \in L^1(\mathbb{R}^{3})\cap L^2(\mathbb{R}^{3}).
\end{equation*}
We also assume $w$ to be classically stable, that is
\begin{equation}\label{classical_stab}
\sum_{1\leq i < j \leq N} w(x_i-x_j) \geq - C N, \quad \forall N \geq 2, \quad \forall x_1,...,x_N \in \mathbb{R}^{3},
\end{equation}
for some $C>0$ independent of $N$. \\
If 
\begin{equation}\label{beta_condition}
\beta < \frac{1}{3} + \frac{s}{45+42s}
\end{equation}
then
\begin{equation}\label{cv_energy}
\boxed{
\lim_{N\to\infty} e_N = e_{GP}(a,b),
}
\end{equation}
for 
\begin{equation*}
a = \int_{\mathbb{R}^{3}} \left(w(x) - b\mathds{1}_{|x|> R}K(x) \right) \;dx \label{def_a_theo}.
\end{equation*} 
In addition, $a$ and $b$ satisfy the stability property (\ref{hyp1_theo_GP}).
\end{theo}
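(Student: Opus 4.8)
The strategy is to establish the two matching bounds $\limsup_N e_N \le e_{GP}(a,b)$ and $\liminf_N e_N \ge e_{GP}(a,b)$ via the de Finetti / quantitative mean-field machinery of \cite{LewNamRou-13,LewNamRou-15}, combined with a Dyson-type lemma in the spirit of \cite{NamRouSer-15,LewNamRou-15b}. I would begin with the stability of $(a,b)$, which is needed even to have a GP minimizer: testing the classical stability bound (\ref{classical_stab}) against $N$ points drawn i.i.d.\ from a Schwartz probability density $\rho$ and letting $N\to\infty$ gives $\iint w(x-y)\rho(x)\rho(y)\,dx\,dy \ge 0$; replacing $\rho$ by $\delta^{-3}\rho(\cdot/\delta)$ and multiplying by $\delta^3$, the kernel becomes $\delta^3 w(\delta\,\cdot)$, which converges as $\delta\to\infty$ to $a\,\delta_0 + bK$ in $\mathcal D'$ (splitting $w = (w-b\mathds{1}_{|x|>R}K) + b\mathds{1}_{|x|>R}K$: the first part is an approximate identity of mass $a = \int(w-b\mathds{1}_{|x|>R}K)$, the second converges to $bK$ in the principal-value sense of Lemma \ref{lem_dip_cont}). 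Hence $\int(a+b\widehat K)|\widehat\rho|^2 \ge 0$ for all such $\rho$; concentrating $|\widehat\rho|^2$ angularly around an arbitrary direction — as with the anisotropic rescaling $f(\lambda x_1,\lambda x_2,\lambda^{-2}x_3)$ used in the proof of Theorem \ref{theo_GP}(ii) — and using that $\widehat K$ is homogeneous of degree $0$ and continuous off the origin forces $a + b\widehat K \ge 0$, i.e.\ (\ref{hyp1_theo_GP}); in particular $\mathcal{E}_{GP}^{a,b}$ has a minimizer $u$ by Theorem \ref{theo_GP}(i).

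For the upper bound, test $H_N/N$ against $u^{\otimes N}$: the one-body part is exactly $\langle u, H_0 u\rangle$ with $H_0 = -(\nabla+iA)^2+V$, and the interaction part is $\tfrac{N}{2(N-1)}\iint N^{3\beta}w(N^\beta(x-y))|u(x)|^2|u(y)|^2$. By homogeneity of $K$, $N^{3\beta}w(N^\beta z) = N^{3\beta}(w-b\mathds{1}_{|x|>R}K)(N^\beta z) + b\,\mathds{1}_{|z|>RN^{-\beta}}K(z)$, and the same approximate-identity / principal-value limits (using $u\in H^1\cap L^4$ and Lemma \ref{lem_dip_cont}) give $\limsup_N e_N \le \mathcal{E}_{GP}^{a,b}(u) = e_{GP}(a,b)$.

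The lower bound is the substance of the proof. First one needs $e_N \ge -C$: for $\beta \le 1/3$ classical stability gives $H_N \ge \sum_j H_{0,j} - CN^{3\beta}$ (the coupling $(N-1)^{-1}$ eats one power of $N$), hence $e_N \ge \inf\mathrm{spec}\,H_0 - CN^{3\beta-1}\ge -C$; for $1/3 < \beta < \beta_{\max}$ one must use the kinetic energy, localizing space into cells of side $\sim N^{-\beta}$, controlling on each pair in a common cell the negative singular part of the rescaled interaction by Hardy's inequality ($\mathds{1}_{|x|>RN^{-\beta}}|x|^{-3}\le CN^\beta|x|^{-2}\le CN^\beta(-\Delta)$), and bounding the number of occupied cells via $V\ge C|x|^s$ — the exponent balance here is exactly what produces $\beta_{\max} = 1/3 + s/(45+42s)$. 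Granting this, a near-minimizer $\Psi_N$ satisfies (with Lemma \ref{lem_magn_lapl}) the moment bound $\langle\Psi_N,\sum_j(-\Delta_{x_j}+|x_j|^s)\Psi_N\rangle\le CN$, so its one-particle density is tight. Localizing the energy onto a spectral subspace of $H_0$ below a cutoff and invoking the quantitative de Finetti theorem, there is along a subsequence a Borel probability measure $\mu$ on the unit sphere of $L^2(\mathbb R^3)$ (a genuine probability measure by tightness) such that $\gamma^{(k)}_{\Psi_N}\to\int|u^{\otimes k}\rangle\langle u^{\otimes k}|\,d\mu(u)$; the one-body term tends to $\int\langle u,H_0 u\rangle\,d\mu(u)$. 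For the interaction one replaces $N^{3\beta}w(N^\beta\cdot)$ by a mollified truncation $w_N^{(\varepsilon)}$ at a fixed scale $\varepsilon$, bounds the difference from below by $-\kappa(\varepsilon)\sum_j(-\Delta_{x_j}) - o(N)$ via the Dyson-type lemma and a completion of the square relying on $a+b\widehat K\ge 0$, applies de Finetti to the now-regular $w_N^{(\varepsilon)}$, and lets $\varepsilon\to 0$; this yields $\tfrac12\int\big(\int(a+b\widehat K)|\widehat{|u|^2}|^2\big)\,d\mu(u)$ in the limit. Hence $\liminf_N e_N \ge \int\mathcal{E}_{GP}^{a,b}(u)\,d\mu(u)\ge e_{GP}(a,b)$, and (\ref{cv_energy}) follows.

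I expect the main obstacle to be the treatment of the sign-changing, long-range kernel $K$ in the low-density regime $\beta > 1/3$: because $K$ is neither positive nor form-bounded by $-\Delta$, its attractive part cannot be tamed by naive pointwise or Hardy bounds, and one must instead exploit the effective positivity $a + b\widehat K \ge 0$ through correlation / completion-of-square arguments — both to get $e_N \ge -C$ and to remove the short-range regularization in the lower bound — while simultaneously keeping in balance three competing powers of $N$ (the concentration scale $N^{-\beta}$ of the interaction, the de Finetti error, and the localization error controlled by the trapping exponent $s$), whose optimization pins down the threshold $1/3 + s/(45+42s)$.
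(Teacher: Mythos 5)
Your overall architecture (upper bound by the Hartree trial state $u^{\otimes N}$, lower bound via quantitative de Finetti after projecting onto $\mathds{1}_{h\le L}$, then a separate Hartree-to-GP step) matches the paper, and your derivation of the stability condition (\ref{hyp1_theo_GP}) from classical stability is essentially the intended one. But the heart of the proof --- how one gets $e_N\ge -C$ for $\beta>1/3$ and where the threshold $\beta_{\max}=1/3+s/(45+42s)$ comes from --- is not what you describe, and the mechanism you propose would fail. There is no cell localization at scale $N^{-\beta}$, no Dyson-type lemma (the scattering length plays no role for $\beta<2/3$; the coupling constant is simply $\int(w-b\mathds{1}_{|x|>R}K)$), and no Hardy inequality. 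A pointwise bound $\mathds{1}_{|x|>RN^{-\beta}}|x|^{-3}\le CN^{\beta}|x|^{-2}\le CN^{\beta}(-\Delta)$ discards the cancellation $\int_{\mathbb S^2}\Omega=0$ entirely, and since the tail of $K$ is long range every pair contributes (not only pairs sharing a cell), so summing over the $N-1$ partners of each particle leaves a divergent factor $N^{\beta}$ in front of the kinetic energy. You in fact concede in your final paragraph that ``naive pointwise or Hardy bounds'' cannot tame the attractive part, which contradicts the argument you propose two paragraphs earlier.

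What the paper actually does: the only a priori input from classical stability is $e_{N,\varepsilon}\ge -CN^{3\beta-1}$, valid for every $\beta$. The key technical tool is the operator estimate $\pm w_N(x-y)\le C_\varepsilon\|\widehat{w_N}\|_{L^\infty}(1-\Delta_x)^{3/4+\varepsilon}\otimes(1-\Delta_y)^{3/4+\varepsilon}$ (Lemma \ref{lem_dip_est}), which is uniform in $N$ precisely because $\widehat K$ is homogeneous of degree zero, so $\|\widehat{w_N}\|_{L^\infty}=\|\widehat w\|_{L^\infty}$; this Fourier-side bound is the substitute for positivity that does exploit the cancellation of $\Omega$. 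Feeding it into the moment estimates $\tr\big(h\otimes h\,\gamma^{(2)}_{\Psi_N}\big)\lesssim(1+|e_{N,\varepsilon}|)^2\varepsilon^{-2}$ and the de Finetti error analysis yields
$e_N\ge e_{H,N}-C(1+|e_{N,\varepsilon}|)(\varepsilon L)^{-1}-C\big(L^{3(1+1/s)}N^{-1}+(1+|e_{N,\varepsilon}|)^{13/8}\varepsilon^{-13/8}L^{-1/8}\big)$,
and a bootstrap on the exponent $\eta$ in $|e_{N,\varepsilon}|\lesssim N^{\eta}$ with $L=N^{\tau}$ is what produces $45+42s$. The two exponents that drive this --- $3(1+1/s)$ from the Weyl-type bound on $\dim\mathds{1}_{h\le L}$ combined with the $L^{3/2}$ bound on $P^{\otimes 2}w_NP^{\otimes 2}$, and $13/8$ from the Cauchy--Schwarz treatment of the tail using the $3/4+\varepsilon$ power --- are absent from your proposal, so your claim that your ``exponent balance'' produces $\beta_{\max}$ is unsubstantiated. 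Filling this in is not a routine completion; it is the main content of the proof.
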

\begin{rem}
The convergence rate given by the proof is the sum of the error in (\ref{bound_on_CV}) and of another one depending on $w$ (see Lemma \ref{lem_from_H_to_nls}) which comes from the approximation of the Gross-Pitaevskii energy by the Hartree one. We do not state any quantitative estimate here for shortness.
\end{rem}
Notice that the value of the constant $R>0$ in (\ref{def_a_theo}) has no importance because of the cancellation property of $K$, see Lemma \ref{lem_dip_cont} below.

If $V(x)\to\infty$ when $|x|\to\infty$ faster than any polynomial (e.g. $V=+\infty$ outside a bounded domain) then the condition (\ref{beta_condition}) reduces to
\begin{equation*}
\beta < \frac{1}{3} + \frac{1}{42}.
\end{equation*}

We now turn to the convergence of states, which is expressed in terms of the $k$-particle reduced density matrices. For $\Psi \in \bigotimes_s^NL^2(\mathbb{R}^{3})$, let us define its $k$-particle density matrix by
\begin{equation*}
\gamma^{(k)}_{\Psi} := \tr_{k+1 \to N} \ket{\Psi}\bra{\Psi},
\end{equation*}
or, in terms of its kernel,
\begin{equation*}
\gamma^{(k)}_{\Psi}(x_1,...,x_k,y_1,...,y_k) = \int_{(\mathbb{R}^{3})^{N-k}} \Psi(x_1,...,x_k,z_{k+1},...,z_N) \overline{\Psi(y_1,...,y_k,z_{k+1},...,z_N)} dz_{k+1}...dz_N.
\end{equation*}
From now on, when we consider a ground state $\Psi_N$ of $H_N$, we will denote by $\gamma^{(k)}_N$ its $k$-particle reduced density matrix for simplicity.

\begin{theo}[Convergence of states]\label{theo_2}
Under the assumptions of Theorem \ref{theo_1}, for any sequence of ground states $(\Psi_N)$, there exists a Borel probability measure $\mu$ supported on $\mathcal{M}_{GP}(a,b)$, the set of ground states of $\mathcal{E}_{GP}^{a,b}$, such that, up to a subsequence $(N')$,
\begin{equation}\label{conv_state}
\gamma^{(k)}_{N'}  \underset{N'\to\infty}{\longrightarrow} \int_{\mathcal{M}_{GP}} \ket{u^{\otimes k}}\bra{u^{\otimes{k}}} d\mu(u), \, \quad \forall k\geq 1,
\end{equation}
where the convergence is in the trace norm. Besides, if $\mathcal{E}_{GP}^{a,b}$ has a unique minimizer, then there is convergence of the whole sequence $(\gamma^{(k)}_N)_N$ in (\ref{conv_state}).
\end{theo}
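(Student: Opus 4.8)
The plan is to deduce Theorem \ref{theo_2} from the proof of Theorem \ref{theo_1}, which already manufactures the relevant de Finetti measure. Fix a sequence of ground states $(\Psi_N)$ of $H_N$ and let $\gamma_N^{(k)}$ be its reduced density matrices, each a state on $\bigotimes^k_s L^2(\mathbb{R}^3)$. The first step is a compactness statement for $k=1$. The upper bound $e_N \leq C$ (from a factorized trial state) together with the stability input (\ref{classical_stab}) and the control of the negative part of $w_N$ by the kinetic energy carried out in the proof of Theorem \ref{theo_1} give the a priori bound $N^{-1}\braket{\Psi_N, \sum_j((-i\nabla_{x_j}+A(x_j))^2+V(x_j))\Psi_N}\leq C$ uniformly in $N$, that is $\tr\big(((-i\nabla+A)^2+V)\gamma_N^{(1)}\big)\leq C$; by Lemma \ref{lem_magn_lapl} this forces $\tr((-\Delta+V)\gamma_N^{(1)}) \leq C$. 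Since $V$ is confining, the set of one-body states with $\tr((-\Delta+V)\gamma)\leq C$ is compact for the trace norm, so along a subsequence $(N')$ one has $\gamma_{N'}^{(1)} \to \gamma_\infty^{(1)}$ in trace norm with $\tr \gamma_\infty^{(1)} = 1$ (no mass escapes to infinity). The strong quantum de Finetti theorem of \cite{LewNamRou-13,LewNamRou-15} then gives, after extracting a further subsequence, a Borel probability measure $\mu$ on the unit sphere of $L^2(\mathbb{R}^3)$ such that
\begin{equation*}
\gamma_{N'}^{(k)} \rightharpoonup \gamma_\infty^{(k)} := \int \ket{u^{\otimes k}}\bra{u^{\otimes k}}\, d\mu(u), \qquad \forall k \geq 1,
\end{equation*}
in the weak-$*$ topology of trace-class operators.

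Second, I would pin down the support of $\mu$. Running the lower bound part of the proof of Theorem \ref{theo_1} on this genuine sequence $(\Psi_{N'})$ of ground states (rather than on an arbitrary energy-minimizing sequence) produces, with the very same measure $\mu$, the estimate $\liminf_{N'} e_{N'} \geq \int \mathcal{E}_{GP}^{a,b}(u)\, d\mu(u)$; this is where the replacement of $w_N$ by $a\delta + bK$ and Lemma \ref{lem_from_H_to_nls} (passage from the Hartree to the Gross--Pitaevskii functional) enter. Since Theorem \ref{theo_1} gives $\lim_{N'} e_{N'} = e_{GP}(a,b)$ while $\mathcal{E}_{GP}^{a,b}(u) \geq e_{GP}(a,b)$ for every $u$ on the sphere and $\mu$ is a probability measure,
\begin{equation*}
e_{GP}(a,b) = \lim_{N'} e_{N'} \geq \int \mathcal{E}_{GP}^{a,b}(u)\,d\mu(u) \geq e_{GP}(a,b)\int d\mu(u) = e_{GP}(a,b),
\end{equation*}
so there is equality throughout and $\mathcal{E}_{GP}^{a,b}(u) = e_{GP}(a,b)$ for $\mu$-almost every $u$. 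Since $\mathcal{M}_{GP}(a,b)$ is nonempty (Theorem \ref{theo_GP}, whose hypothesis (\ref{hyp1_theo_GP}) holds by the last assertion of Theorem \ref{theo_1}) and closed (lower semicontinuity of $\mathcal{E}_{GP}^{a,b}$, as in the proof of Theorem \ref{theo_GP}), this means precisely that $\mu$ is supported on $\mathcal{M}_{GP}(a,b)$.

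Third, I would upgrade (\ref{conv_state}) to trace-norm convergence. For $k=1$ this was obtained in the first step. For general $k$, both $\gamma_{N'}^{(k)}$ and $\gamma_\infty^{(k)}$ are non-negative trace-class operators, $\tr \gamma_{N'}^{(k)} = 1$, and $\tr \gamma_\infty^{(k)} = \int \|u\|_{L^2}^{2k}\, d\mu(u) = 1$ because $\mu$ lives on the sphere; invoking the elementary fact that weak-$*$ convergence of non-negative trace-class operators together with convergence of the traces to the finite trace of the limit implies convergence in trace norm, (\ref{conv_state}) follows. Finally, if $\mathcal{E}_{GP}^{a,b}$ has a unique minimizer, i.e. $\mathcal{M}_{GP}(a,b) = \{u_0\}$ up to a phase, then the only probability measure supported there is $\delta_{u_0}$, and $\ket{u_0^{\otimes k}}\bra{u_0^{\otimes k}}$ does not depend on the phase, so $\gamma_\infty^{(k)} = \ket{u_0^{\otimes k}}\bra{u_0^{\otimes k}}$ is independent of the subsequence; since every subsequence of $(\gamma_N^{(k)})_N$ then has a further subsequence converging in trace norm to this same limit, the whole sequence converges.

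The step I expect to be the real obstacle is the lower bound $\liminf_{N'} e_{N'} \geq \int \mathcal{E}_{GP}^{a,b}(u)\,d\mu(u)$ invoked in the second paragraph: for $\beta > 0$ the rescaled interaction $w_N$ is singular and classically unstable, so extracting the correct effective functional needs the full apparatus of Theorem \ref{theo_1} --- localization of the $N$-body problem, quantitative de Finetti estimates, absorbing the attractive part of $w_N$ by a fraction of the kinetic energy, and the Hartree-to-Gross--Pitaevskii comparison. The same remark applies to the a priori bound on $\tr((-\Delta+V)\gamma_N^{(1)})$ when $\beta>1/3$. Granting Theorem \ref{theo_1} and the a priori estimates in its proof, the remaining steps specific to Theorem \ref{theo_2} --- compactness of $\gamma_N^{(1)}$, the de Finetti extraction, the identification of $\supp\mu$, and the weak-to-strong upgrade --- are standard and short.
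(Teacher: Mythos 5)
Your proposal is correct and follows essentially the same route as the paper, whose own proof of Theorem \ref{theo_2} consists of the single observation that the energy per particle is bounded by Theorem \ref{theo_1} and then defers verbatim to \cite[Theorem 2.5]{LewNamRou-15}; your outline (a priori bound on $\tr(h\gamma_N^{(1)})$ and trace-norm compactness from the compact resolvent, de Finetti extraction, identification of $\supp\mu$ via the lower bound retaining $\int\mathcal{E}\,d\mu_{\Psi_N}$ rather than bounding it by the infimum, weak-to-strong upgrade, and the subsequence argument under uniqueness) is precisely the content of that cited argument. You also correctly flag the one genuinely delicate point, namely that the quantitative de Finetti measures $\mu_{\Psi_N}$ from the lower bound must be identified in the limit with the measure $\mu$ representing $(\gamma_\infty^{(k)})_k$, which is handled in the cited reference.
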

\begin{proof}
Since the energy per particle is bounded by Theorem \ref{theo_1}, the convergence of states can be proved exactly as in \cite[Theorem 2.5]{LewNamRou-15}.
\end{proof}

We now make several remarks concerning our two main results, Theorem \ref{theo_1} and \ref{theo_2}.\\

The main novelty of these two results is that the interaction potential has a negative part, is anisotropic and long range. Furthermore the derivation holds for some $\beta > 1/3$ where the stability is of quantum nature (due to the kinetic term). Unfortunately, we are not yet able to push the analysis to larger $\beta$'s, but we conjecture that similar results hold for $\beta < 1$, possibly under more stringent assumptions on $w$.

In fact, the assumption (\ref{classical_stab}) is not essential.What we need is that the kinetic energy per particle is bounded, or more precisely,
\begin{equation}\label{hyp_moment_est}
\tr\left(h\otimes h \gamma^{(2)}_{\Psi_N}\right)\leq C
\end{equation}
for all $N$, where $h = (-i\nabla + A)^2 + V$. For example, if $H_{N,\varepsilon} \geq - CN$, where $H_{N,\varepsilon}$ is defined as $H_N$ but replacing $w$ by $(1-\varepsilon)^{-1}w$ in (\ref{H_N}), for some $\varepsilon >0$, we can prove (\ref{hyp_moment_est}) for any $\beta < 2/3$ and therefore have convergence of both the ground states and ground state energies. If, on the other hand, we assume (\ref{classical_stab}), the boundedness of the moment (\ref{hyp_moment_est}) holds immediately for $\beta \leq 1/3$ and with a bootstrap argument borrowed from \cite{LewNamRou-15b} we extend it to $\beta < 1/3 + s/(45+42s)$.

The assumption (\ref{classical_stab}) is very natural. If a potential $w$ satisfies $w-b\mathds{1}_{|x|>R}K_{dip} \in L^1\cap L^2$ then by increasing its value in a neighborhood of the origin, we can make it classically stable as in (\ref{classical_stab}). We explain this in Appendix \ref{section_proof_lem_stab_dip}.

A natural question arises: are all admissible configurations $(a,b)$ (i.e. those satisfying (\ref{hyp1_theo_GP}) of Theorem \ref{theo_GP}) reachable by the derivation? In the case of the dipolar potential
\begin{equation*}
K_{dip}(x) = \frac{1-3\cos^2(\theta_x)}{|x|^3},
\end{equation*}
the answer is yes. We are able to construct a potential $w_{dip}$ (resp. $\widetilde{w_{dip}}$) satisfying the assumptions of Theorem \ref{theo_1} in the case $b>0$ (resp. $b<0$) and such that the corresponding $a$ and $b$ satisfy the equality case $a = \frac{4\pi}{3} b$ (resp. $a = -\frac{8\pi}{3} b$). Then, any configuration $(a,b)$ satisfying the strict inequality (\ref{hyp1_theo_GP}) is reached by adding a well chosen non negative function to $w_{dip}$ (resp. $\widetilde{w_{dip}}$). The interaction $w_{dip}$ is defined as follows. Let $d \in \mathbb{R}$,
\begin{equation}\label{w_dip}
w_{dip}(x) := 2W(x) - W(x+d n) - W(x-d n)
\end{equation}
where
\begin{equation*}
W(x) = \frac{1-e^{-|x|}}{|x|}.
\end{equation*}
The potential $w_{dip}$ represents the interaction of a couple of dipoles interacting with the smeared Coulomb potential $W$. The potential $\widetilde{w_{dip}}$ is defined in Appendix \ref{section_proof_prop_1}, in the proof of the following proposition. One could think that the case $b<0$ has less physical meaning as it will turn out that $b = d^2$, but it is experimentally feasible to tune the parameter $b$ to be negative \cite{GioGorPfa-02}.
\begin{prop}[Full range of parameters in the dipolar GP functional]\label{prop_a_b}
For any admissible parameters $(a,b)$, i.e. satisfying (\ref{hyp1_theo_GP}), there exists a potential $w$ satisfying the assumptions of Theorem \ref{theo_1} and in particular $a = \int (w - b\mathds{1}_{|x|> R}K_{dip})$. Hence we have
\begin{equation*}
\lim_{N\to\infty} e_N = e_{GP}(a,b).
\end{equation*}

\end{prop}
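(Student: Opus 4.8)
The plan is to produce, for every admissible pair $(a,b)$, an interaction $w$ satisfying all the hypotheses of Theorem~\ref{theo_1} with exactly those constants; then \eqref{cv_energy} is immediate from that theorem. By the Remark following Theorem~\ref{theo_GP}, for $K=K_{dip}$ the admissibility condition \eqref{hyp1_theo_GP} reads $a\ge a_0:=\tfrac{4\pi}{3}b$ if $b>0$ and $a\ge a_0:=-\tfrac{8\pi}{3}b$ if $b<0$ (the degenerate case $b=0$ being trivial: take $w=ag$ with $g\ge 0$ even, $\int g=1$). I would first construct a \emph{boundary} potential $w_0$ achieving the equality $a=a_0$, and then reach an arbitrary admissible $a\ge a_0$ by setting $w:=w_0+(a-a_0)g$ for a fixed even Gaussian $g\ge 0$ with $\int g=1$: indeed $w-b\mathds{1}_{|x|>R}K_{dip}=(w_0-b\mathds{1}_{|x|>R}K_{dip})+(a-a_0)g$ is then again in $L^1\cap L^2$ with integral $a$, and since $a-a_0\ge 0$ the extra term only raises the interaction, so classical stability \eqref{classical_stab} is preserved. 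Theorem~\ref{theo_1} then yields the conclusion, so everything reduces to building $w_0$ for $b>0$ and for $b<0$.

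For $b>0$ I take $w_0=w_{dip}$ of \eqref{w_dip} with $d=\sqrt b$, where $W(x)=|x|^{-1}-e^{-|x|}|x|^{-1}$ is bounded and continuous, and I verify three things. First, \emph{the long range}: the exponentially decaying ``Yukawa'' part of $w_{dip}$ lies in $L^1\cap L^2$, while for the ``Coulomb'' part a second-order Taylor expansion of $y\mapsto|y|^{-1}$ on the segment $[x-dn,x+dn]$ gives, for $|x|>2d$, $\tfrac{2}{|x|}-\tfrac{1}{|x+dn|}-\tfrac{1}{|x-dn|}=-d^2\partial_n^2|x|^{-1}+O(d^4|x|^{-5})=d^2\tfrac{1-3\cos^2\theta_x}{|x|^3}+O(|x|^{-5})=bK_{dip}(x)+O(|x|^{-5})$, using $\partial_n^2|x|^{-1}=(3\cos^2\theta_x-1)|x|^{-3}$; since $w_{dip}$ is moreover bounded near the origin, $w_{dip}-b\mathds{1}_{|x|>R}K_{dip}\in L^1\cap L^2$ for every $R>2d$. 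Second, \emph{the value of $a$}: by the cancellation property $\int_{R<|x|<R'}K_{dip}=0$, and by Newton's theorem ($\int_{B_R(v)}|y|^{-1}dy=2\pi R^2-\tfrac{2\pi}{3}|v|^2$ for $|v|<R$, and a uniform shell has no field inside its hole) also $\int_{R<|x|<R'}\bigl(\tfrac{2}{|x|}-\tfrac{1}{|x+dn|}-\tfrac{1}{|x-dn|}\bigr)dx=0$; since the Yukawa part has total integral $2\cdot4\pi-4\pi-4\pi=0$, only the ball $|x|\le R$ contributes, leaving $a=\int_{|x|\le R}\bigl(\tfrac{2}{|x|}-\tfrac{1}{|x+dn|}-\tfrac{1}{|x-dn|}\bigr)dx=4\pi R^2-2\bigl(2\pi R^2-\tfrac{2\pi}{3}d^2\bigr)=\tfrac{4\pi}{3}d^2=\tfrac{4\pi}{3}b=a_0$, independently of $R$, consistently with Lemma~\ref{lem_dip_cont}. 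Third, \emph{stability}: with $\nu_i=\delta_{x_i+dn/2}-\delta_{x_i-dn/2}$ one has $w_{dip}(x_i-x_j)=\langle\nu_i,W\star\nu_j\rangle$, so $\sum_{i<j}w_{dip}(x_i-x_j)=\tfrac12\langle\mu,W\star\mu\rangle-\tfrac12\sum_i\langle\nu_i,W\star\nu_i\rangle$ with $\mu=\sum_i\nu_i$; since $\widehat W(k)=4\pi|k|^{-2}(1+|k|^2)^{-1}\ge 0$, $W$ is continuous and positive definite, so $\langle\mu,W\star\mu\rangle\ge0$, while $\langle\nu_i,W\star\nu_i\rangle=2W(0)-2W(dn)=2(1-W(dn))$, giving $\sum_{i<j}w_{dip}\ge -N(1-W(dn))$, i.e.\ \eqref{classical_stab}.

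For $b<0$ I would take $w_0=\widetilde{w_{dip}}(x):=\sum_{m\in\{e,e'\}}\bigl(2W(x)-W(x+dm)-W(x-dm)\bigr)$ with $d=\sqrt{-b}$ and $e,e'$ two orthogonal unit vectors both orthogonal to $n$. Each summand is classically stable by the argument above, and at infinity equals $d^2(1-3(m\cdot x/|x|)^2)|x|^{-3}+O(|x|^{-5})$; summing over $m\in\{e,e'\}$ and using $(e\cdot x/|x|)^2+(e'\cdot x/|x|)^2=1-(n\cdot x/|x|)^2$ one gets $d^2(3\cos^2\theta_x-1)|x|^{-3}+O(|x|^{-5})=bK_{dip}(x)+O(|x|^{-5})$, so $\widetilde{w_{dip}}-b\mathds{1}_{|x|>R}K_{dip}\in L^1\cap L^2$, and the same Newton-theorem computation gives $\int(\widetilde{w_{dip}}-b\mathds{1}_{|x|>R}K_{dip})=2\cdot\tfrac{4\pi}{3}d^2=-\tfrac{8\pi}{3}b=a_0$. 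Feeding $w_{dip}$ (resp.\ $\widetilde{w_{dip}}$) into the bump construction of the first paragraph and then into Theorem~\ref{theo_1} finishes the proof.

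The one genuinely delicate step is the long-range analysis in the second paragraph: one has to bound the fourth-order Taylor remainder of $y\mapsto|y|^{-1}$ uniformly on the segment $[x-dn,x+dn]$ for large $|x|$ (this forces $R>2d$, which is harmless since, by the cancellation property of $K_{dip}$, the value of $a$ does not depend on $R$), and one must check that the Newton-theorem evaluation of $a$ is consistent with the principal-value meaning of the integral $\int(w-b\mathds{1}_{|x|>R}K_{dip})$ in Theorem~\ref{theo_1}, which is exactly what Lemma~\ref{lem_dip_cont} provides. The remaining points — positive definiteness of the smeared Coulomb kernel $W$, boundedness of $w_{dip}$, and the stability of $w_0+(a-a_0)g$ under adding a nonnegative bump — are routine.
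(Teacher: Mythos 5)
Your proof is correct, and while the $b>0$ half follows the paper's route, the $b<0$ half is a genuinely different construction. For $b>0$ you use the same potential $w_{dip}$ of \eqref{w_dip}; the differences are computational: the paper evaluates the short-range strength $a=\tfrac{4\pi}{3}d^2$ by computing $\mathcal{F}(w_{dip}-d^2\mathds{1}_{|x|>R}K)$ at $p=0$ (Lemma~\ref{lem_comput_sr}), whereas you do it in real space via Newton's theorem, and the paper gets stability by exhibiting $\widehat{w_{dip}}\geq 0$ in \eqref{w_dip_pos_type}, whereas you unwind the same fact into the explicit four-charge estimate $\sum_{i<j}w_{dip}\geq -N(1-W(dn))$; your Taylor expansion of $2|x|^{-1}-|x+dn|^{-1}-|x-dn|^{-1}$ is exactly the paper's Lemma~\ref{lem_dip_approx}. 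For $b<0$ the paper keeps the single dipole pair and sets $\widetilde{w_{dip}}=f-w_{dip}$, where $f$ is a bespoke $L^1$ function whose Fourier transform is designed to dominate $\widehat{w_{dip}}$ pointwise (via the inequality $2(1-\cos(p_3 d))(d^{-2}+\tfrac{p_1^2+p_2^2}{4})\leq p^2$), and one must then separately check $f\in L^1\cap L^2$ and recompute $a=-8\pi b/3$. You instead superpose two dipole pairs oriented along orthogonal directions $e,e'\perp n$ and use the identity $K_e+K_{e'}=-K_{dip}$; each summand is manifestly of positive type, so no auxiliary majorant $f$ is needed, and the value $a_0=2\cdot\tfrac{4\pi}{3}d^2=-\tfrac{8\pi}{3}b$ falls out of the same computation as in the $b>0$ case. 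Your route is arguably cleaner and more physical here; the paper's buys a formula for $\widehat{\widetilde{w_{dip}}}$ in closed form. Both approaches then reach general admissible $(a,b)$ by adding a nonnegative (or positive-type) integrable bump, which preserves \eqref{classical_stab}, and conclude by Theorem~\ref{theo_1}. One cosmetic remark: since $w-b\mathds{1}_{|x|>R}K_{dip}\in L^1$, the integral defining $a$ is an ordinary Lebesgue integral, so your shell-decomposition argument needs only dominated convergence, not the principal-value machinery of Lemma~\ref{lem_dip_cont}; your parenthetical worry there is already resolved by the $O(|x|^{-5})$ (indeed even $O(|x|^{-4})$ would do) bound you prove.
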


The proof is provided in Appendix \ref{section_proof_prop_1}. The rest of the paper is dedicated to the proof of our main result.

\section{Proof of Theorem \ref{theo_1}: derivation of the dipolar Gross-Pitaevskii energy}\label{Big_Proof}
This section is dedicated to the proof of Theorem \ref{theo_1}.
\subsection{Preliminaries}

Our method in this paper is to follow the path exposed in \cite{LewNamRou-15} which consists in a two steps argument. We first approximate the $N$-body theory by Hartree's theory and then pass from the latter to the Gross-Pitaevskii theory. We use and adapt techniques developed in \cite{NamRouSer-15,LewNamRou-15b} for the non-negative short range 3D case and the short range 2D case to our 3D dipolar problem.
We denote by $h= -\big(\nabla_{x} +iA(x)\big)^2 + V(x)$ the one body operator which is the Friedrichs extension of the operator defined similarly on $C^\infty_0(\mathbb{R}^{3})$. Furthermore, $h$ has a compact resolvent under the assumption that $V$ is confining: $\lim_{|x|\to\infty}V(x) = \infty$ , which we make, see \cite{Iwatsuka-86} for more details on $h$.

The Hartree functional is given by the energy of an condensed state
\begin{align}\label{E_hartree}
\mathcal{E}_H^N(u) &= \frac{\braket{u^{\otimes N}, H_N u^{\otimes N}}}{N} \nonumber \\
  &= \int_{\mathbb{R}^{3}} |(\nabla + i A)u|^2 + \int_{\mathbb{R}^{3}} V |u|^2 + \frac{1}{2} \iint_{\mathbb{R}^{3}\times\mathbb{R}^{3}} (w_N \star |u|^2) |u|^2.
\end{align}
We denote by $$e_{H,N} := \inf_{\|u\|_{L^{2}}=1} \mathcal{E}_{H}^N(u)$$  the Hartree ground state energy.
If a potential $W$ is classically stable then it is Hartree stable \cite{LewNamRou-15}:
\begin{equation}\label{hartree_stab}
\iint_{\mathbb{R}^{3}\times \mathbb{R}^{3}} W(x-y) \rho(x)\rho(y) dx dy \geq 0, \quad \forall \rho \geq 0.
\end{equation}
It is easy to see that Hartree stability is needed in order to have $\lim\inf e_{H,N} > -\infty$. Indeed, if the quantity in (\ref{hartree_stab}) is negative for some $\rho \geq 0$, by taking $u_N(x)=\sqrt{\rho_N(x)} = N^{3\beta/2} \sqrt{\rho(N^\beta x)}$, one can see that $e_{H,N}\to\infty$ as $N\to\infty$. If, besides, $W(0) <\infty$, Hartree stability implies classical stability.

Notice that both in the Hartree functional and in the GP functional, stability is equivalent to the non negativity of the interaction energy, compare Lemma \ref{lem_dip_cont} and condition (\ref{hyp1_theo_GP}) of Theorem \ref{theo_GP}.

The two steps of the proof are as follows. The derivation of the Hartree theory follows arguments of \cite{LewNamRou-13}, using de Finetti theorems combined with a stability argument. The way from Hartree theory to Gross-Pitaevskii theory is essentially based on the observation that for $w\in L^1$, $w_N \to \delta_0\int w$ in the sense of measures.

\subsubsection{Estimating the pair-potential by the kinetic energy}
This section is dedicated to the proof of Proposition \ref{cor_est_w} which is a generalization of \cite[Lemma 3.2]{NamRouSer-15} to a larger class of interactions including the dipolar potential. The latter has singularities at $0$ and infinity and is therefore not a $L^1$ function as in \cite{NamRouSer-15}.
\begin{prop}[Domination of the interaction potential by the kinetic energy]\label{cor_est_w}
Define $w=w_0+\mathds{1}_{|x|\geq R}K$ with $w_0 \in L^1(\mathbb{R}^{3})\cap L^{2}(\mathbb{R}^{3})$, $K$ as in Theorem \ref{theo_1} and $R>0$. Denote $w_N = N^{3\beta} w(N^\beta \cdot)$ for any $0 \leq \beta \leq 1$. We have the following estimates.
\begin{align}
|w_N(x-y)| &\leq C N^\beta h_x \label{cor_est_w_1}\\
\pm w_N(x-y) &\leq C_\varepsilon (h_x)^{3/4+\varepsilon}\otimes (h_y)^{3/4+\varepsilon}, \, \, \forall \epsilon >0, \label{lem_est_w_2} \\
\pm \left( h_x w_N(x-y) + w_N(x-y)h_x \right) &\leq C N^{3\beta/2}  h_x \otimes h_y. \label{cor_est_3}
\end{align}
\end{prop}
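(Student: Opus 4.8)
The plan is to treat separately the regular part $w_0 \in L^1 \cap L^2$ and the long-range tail $\mathds{1}_{|x|\geq R}K$, reducing everything to the Sobolev-type operator inequalities that hold for the free Laplacian and then transferring to $h$ via Lemma \ref{lem_magn_lapl}, which gives $-\Delta + V + 1 \leq C(h+1)$ and, after absorbing the $+1$ into $h$ since $h$ is bounded below with compact resolvent, $-\Delta \leq C h$ in the form sense. The scaling behaviour is the only place where $\beta$ enters, and it is what produces the factors $N^\beta$ and $N^{3\beta/2}$; I expect the bookkeeping of these powers to be the part that needs the most care, together with handling the non-$L^1$, non-$L^2$ nature of the dipolar tail.

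For (\ref{cor_est_w_1}): for the $L^1$ part one writes, for fixed $y$, $|w_{0,N}(x-y)| \leq \|w_{0,N}\|_{L^{3/2}_x}$ tested against... — more precisely one uses the operator inequality $|f(x)| \leq \varepsilon(-\Delta_x) + C_\varepsilon \|f\|_{L^{3/2}}^{...}$ coming from the Sobolev embedding $H^1(\mathbb{R}^3)\hookrightarrow L^6$ together with Hölder, so that $\pm f(x) \leq \delta(-\Delta_x) + C_\delta \|f\|_{L^{3/2}}^3 / \delta^{?}$; since $\|w_{0,N}\|_{L^{3/2}} = N^\beta \|w_0\|_{L^{3/2}}$ and $\|w_0\|_{L^{3/2}} \leq \|w_0\|_{L^1}^{1/2}\|w_0\|_{L^2}^{1/2} < \infty$ by interpolation, choosing $\delta \sim N^\beta$ gives a bound of the form $C N^\beta(-\Delta_x) + C N^\beta$, hence $\leq C N^\beta h_x$. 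For the dipolar tail $\mathds{1}_{|x|\geq R}K$, after scaling it becomes $N^{3\beta}\mathds{1}_{|x|\geq R N^{-\beta}}K(N^\beta\cdot)/N^{3\beta}\cdot$... careful: $N^{3\beta}(\mathds{1}_{|z|\geq R}K)(N^\beta z) = \mathds{1}_{|z|\geq RN^{-\beta}}K(z)$ by homogeneity of $K$ of degree $-3$ and of $\Omega$ of degree $0$. This tail is in $L^{3/2+}$ uniformly (indeed $\mathds{1}_{|z|\geq c}|z|^{-3}\Omega(z/|z|) \in L^p$ for all $p>1$ with norm controlled by $c^{-3+3/p}\|\Omega\|_{L^q}$ provided $q\geq$ the relevant conjugate, which is why the hypothesis $q\geq 2$ appears), so the same Sobolev argument applies and the constant even improves as $N\to\infty$; one obtains $\leq C h_x$, absorbed into $C N^\beta h_x$. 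Estimate (\ref{lem_est_w_2}) is the tensorised version: write $\pm w_N(x-y) \leq \delta (h_x)^{3/4+\varepsilon}\otimes(h_y)^{3/4+\varepsilon} + C_\delta(\text{rest})$ and choose $\delta$ — here one uses that $(-\Delta)^{3/4+\varepsilon}$ controls multiplication by functions in $L^p$ for $p$ close to $1$ by a Hardy–Littlewood–Sobolev / fractional-Sobolev embedding, applied in both variables, so that any $L^1\cap L^2$ and any $L^{(3/2)^-}$ function is dominated; since no power of $N$ is claimed, one just needs the $N$-dependence of the constants to stay bounded, which it does because $\|w_N\|_{L^{p}} = N^{3\beta(1-1/p)}\|w\|_{L^p}$ and for $p\to 1^+$ the exponent $\to 0$, and the $\varepsilon$ in the statement is exactly the slack needed to take $p$ slightly above $1$.

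For (\ref{cor_est_3}) the quantity $h_x w_N(x-y) + w_N(x-y) h_x$ is the anticommutator, and the natural route is: bound $\pm(h_x w_N + w_N h_x) \leq \eta\, h_x\otimes h_y + \eta^{-1}\, h_x w_N h_x \otimes h_y^{-1}$ — no; rather, using $h_x = (p_x+A)^2+V$ and commuting, $[p_x, w_N(x-y)] = -i\nabla w_N$, one gets $h_x w_N + w_N h_x = 2\sqrt{w_N}\,h_x\sqrt{w_N} + (\text{commutator terms involving } \nabla w_N \text{ and } \Delta w_N)$, which is not quite available since $w_N$ need not be smooth; instead the clean approach is the operator Cauchy–Schwarz $\pm(h_x w_N + w_N h_x) \leq \lambda h_x w_N^2 h_x \cdot (\text{something}) + \lambda^{-1}(\cdots)$ is too lossy. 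The approach I would actually take, following the structure of \cite[Lemma 3.2]{NamRouSer-15}: factor $h_x w_N + w_N h_x = h_x^{1/2}\big(h_x^{1/2}w_N h_x^{-1/2} + h_x^{-1/2}w_N h_x^{1/2}\big)h_x^{1/2}$ and note the middle bracket is $\pm \leq C\|\,|w_N| + |\nabla w_N| h_x^{-1/2} + \cdots\|$; cleanly, one bounds $\pm(h_x w_N + w_N h_x) \leq \varepsilon\, h_x \otimes h_y + \varepsilon^{-1}\, h_x |w_N|^2 h_x \otimes h_y^{-1}$ is wrong sign. The correct and robust statement is: for self-adjoint $h_x\geq 1$ and bounded $M$, $\pm(h_x M + M h_x)\leq \alpha h_x^2 + \alpha^{-1}M^2$ as operators for any $\alpha>0$; take $\alpha$ a scalar, apply in $x$ with $M = w_N(x-y)h_y^{-1/2}$ tensored suitably, then use (\ref{cor_est_w_1}) and (\ref{lem_est_w_2}) to convert $|w_N|^2$-type quantities and $h_x^2$-type quantities into the product $h_x\otimes h_y$ at the cost of $N^{3\beta/2}$, which comes precisely from $\|w_N\|_\infty$-type or $\|w_N\|_{L^2}$-type factors scaling like $N^{3\beta}$ or $N^{3\beta/2}$. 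I expect \emph{this third estimate to be the main obstacle}: it requires commuting $h_x$ past a rough (merely $L^1\cap L^2$, plus singular-tail) multiplication operator, so one must either regularise $w$ first and pass to the limit, or argue entirely through quadratic-form manipulations and operator Cauchy–Schwarz, tracking how each $L^p$ norm of $w_N$ contributes its scaling factor so that the product lands exactly at $N^{3\beta/2}h_x\otimes h_y$; the dipolar tail again needs its own treatment because $\nabla(\mathds{1}_{|x|\geq R}K)$ has both a surface term on $\{|x|=R\}$ and an $|x|^{-4}$ bulk term, both of which are, after scaling, controlled using $q\geq 2$ and the compact-resolvent bound.
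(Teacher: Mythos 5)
Your treatments of (\ref{cor_est_w_1}) and (\ref{cor_est_3}) are broadly in line with the paper's (the paper simply uses $|W(x-y)|\leq C\|W\|_{L^{3/2}}(-\Delta_x)$ together with $\|w_N\|_{L^{3/2}}=N^{\beta}\|w\|_{L^{3/2}}$, and for the anticommutator it quotes the bound $\pm(h_xW+Wh_x)\leq C(\|W\|_{L^{3/2}}+\|W\|_{L^{2}})\,h_x\otimes h_y$ from \cite[Lemma 3.2]{NamRouSer-15}, the $N^{3\beta/2}$ coming from $\|w_N\|_{L^{2}}$ exactly as you guessed). One slip there: the scaled tail $\mathds{1}_{|z|\geq RN^{-\beta}}K(z)$ has $L^{3/2}$ norm of order $(RN^{-\beta})^{-1}\sim N^{\beta}$, so its contribution \emph{grows} like $N^{\beta}$ rather than "improving as $N\to\infty$"; the final bound $CN^{\beta}h_x$ is unaffected, but your own formula $c^{-3+3/p}$ already contradicts the claim.

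The genuine gap is in (\ref{lem_est_w_2}), which is the key estimate and the paper's main technical improvement over \cite{NamRouSer-15}. You propose to dominate $\pm w_N(x-y)$ by $(h_x)^{3/4+\varepsilon}\otimes(h_y)^{3/4+\varepsilon}$ with a constant controlled by $\|w_N\|_{L^{p}}$ for $p$ slightly above $1$, arguing that the exponent $3\beta(1-1/p)\to 0$ as $p\to 1^{+}$. But $\varepsilon$ is fixed in the statement, so $p=p(\varepsilon)>1$ is fixed, and then $\|w_N\|_{L^{p}}=N^{3\beta(1-1/p)}\|w\|_{L^{p}}\to\infty$; worse, the dipolar tail is not in $L^{1}$, and its scaled version $\mathds{1}_{|z|\geq RN^{-\beta}}K(z)$ has $L^{p}$ norm of order $N^{3\beta(1-1/p)}$ for every $p>1$. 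So any bound whose constant is an $L^{p}$ norm with $p>1$ necessarily carries a positive power of $N$, which would destroy the de Finetti error estimate (\ref{est_de_fin_2}) downstream. The only quantity invariant under the scaling $w\mapsto N^{3\beta}w(N^{\beta}\cdot)$ that is finite for $w_0+\mathds{1}_{|x|\geq R}K$ is $\|\widehat{w}\|_{L^{\infty}}$ (since $\widehat{w_N}(k)=\widehat{w}(k/N^{\beta})$), and this is exactly what the paper exploits: Lemma \ref{lem_dip_cont} gives the Calder\'on--Zygmund bound $\|\widehat{K_R^{R'}}\|_{L^{\infty}}\leq C_q\|\Omega\|_{L^q(\mathbb{S}^2)}$ uniformly in the cutoffs (this is where the cancellation property of $\Omega$ enters), and Lemma \ref{lem_dip_est} proves, by an explicit Fourier-space computation reducing to $\|\widehat{Y}\|_{L^4}^4$ with $Y=\mathcal{F}^{-1}\bigl((1+p^2)^{-3/8-\varepsilon/2}\bigr)$, that $\pm W(x-y)\leq C_\varepsilon\|\widehat{W}\|_{L^{\infty}}(1-\Delta_x)^{3/4+\varepsilon}\otimes(1-\Delta_y)^{3/4+\varepsilon}$. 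Without an $N$-independent, Fourier-multiplier-based input of this kind, your argument for (\ref{lem_est_w_2}) does not close.
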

We will first state some lemmas to deal with the dipolar potential $K$. The first two have been adapted from \cite{Duo01} and their proof will not be given here. In this paper we use the following Fourier transform $\widehat{f}(k) = \int_{\mathbb{R}^{3}} f(x) e^{-k\cdot x}dx$.
\begin{lem}[Fourier transform of the potential $K$, Corollary 4.5 of \cite{Duo01}] \label{lem_dip_cont}
Let $q>1$ and $\Omega \in L^q(\mathbb{S}^2)$ be a even function satisfying the cancellation property
\begin{equation*}
\int_{\mathbb{S}^2} \Omega(\omega) d\sigma(\omega) = 0.
\end{equation*}
For any $R'>R \geq 0$ we define
\begin{equation}\label{def_K_R}
K_R^{R'}(x) =\mathds{1}_{R'>|x|>R} \frac{\Omega(x/|x|)}{|x|^3}.
\end{equation}
Then for any $R'>R>0$,
\begin{align*}
\widehat{K_R^{R'}} (k) &=  \int_{\mathbb{S}^2} \int_R^{R'} \frac{\cos(r k \cdot \omega) }{r} \Omega(\omega) drd\sigma(\omega)  \\ &= \int_{\mathbb{S}^2} \int_R^{R'} \frac{\cos(r k \cdot \omega) - \cos(r |k|)}{r} \Omega(\omega) drd\sigma(\omega),
\end{align*}
and as $R' \to\infty$ and $R\to0$ it converges for $k\neq 0 $ to
\begin{equation}\label{expr_K_hat}
\widehat{K}(k) = \int_{\mathbb{S}^2} \log\left(\frac{|k|}{|k\cdot \omega|}\right) \Omega(\omega) d\sigma(\omega).
\end{equation}
Moreover 
\begin{equation*}
\|\widehat{K}_R^{R'}\|_{L^\infty(\mathbb{R}^{3})} \leq C_q \|\Omega\|_{L^q(\mathbb{S}^2)}
\end{equation*}
for some constant $C_q>0$ independent of $R'>R \geq 0$ and therefore
\begin{equation*}
\|\widehat{K}\|_{L^\infty(\mathbb{R}^{3})} \leq C_q \|\Omega\|_{L^q(\mathbb{S}^2)}
\end{equation*}
Note that the function $\widehat{K}$ defined in (\ref{expr_K_hat}) is homogeneous of degree zero and continuous except at the origin.
\end{lem}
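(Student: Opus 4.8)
The plan is to compute $\widehat{K_R^{R'}}$ directly in polar coordinates, identify the radial integral as a Frullani-type cosine integral, and then pass to the limit $R\to 0$, $R'\to\infty$ by dominated convergence on the sphere. First I would write $x=r\omega$ with $r>0$, $\omega\in\mathbb{S}^2$, so that $dx=r^2\,dr\,d\sigma(\omega)$, obtaining $\widehat{K_R^{R'}}(k)=\int_{\mathbb{S}^2}\Omega(\omega)\int_R^{R'} r^{-1}e^{-irk\cdot\omega}\,dr\,d\sigma(\omega)$. Since $\Omega$ is even and the map $\omega\mapsto-\omega$ preserves $d\sigma$ while reversing the sign of $\sin(rk\cdot\omega)$, the imaginary part cancels and $e^{-irk\cdot\omega}$ may be replaced by $\cos(rk\cdot\omega)$; this is the first stated formula. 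The cancellation property $\int_{\mathbb{S}^2}\Omega\,d\sigma=0$ then lets one subtract $\cos(r|k|)$ from the integrand at no cost, which is essential because $|\cos(rk\cdot\omega)-\cos(r|k|)|\le\tfrac12 r^2|k|^2$ makes the radial integrand $O(r)$ near the origin, so the inner integral converges down to $R=0$.

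Next I would invoke the generalized Frullani identity $\int_0^\infty r^{-1}(\cos(ar)-\cos(br))\,dr=\log(|b|/|a|)$ for nonzero reals $a,b$ (equivalently, a short computation with the cosine integral), which with $a=k\cdot\omega$, $b=|k|$ and $k\cdot\omega=|k|\cos\theta_\omega$ gives the pointwise limit $\log(|k|/|k\cdot\omega|)=\log(1/|\cos\theta_\omega|)$ of the inner integral. The heart of the matter is a bound on the \emph{partial} radial integrals that is uniform in $R$, $R'$ and $k$: splitting at the two natural scales $r\sim 1/|k|$ and $r\sim 1/|k\cdot\omega|$ and using the quadratic bound above on the smallest scale, the trivial bound $|\cos|\le 1$ on the intermediate scale, and the second mean value theorem (Bonnet's form) on the largest scale, one obtains $\big|\int_R^{R'} r^{-1}(\cos(rk\cdot\omega)-\cos(r|k|))\,dr\big|\le C\big(1+\log_+(1/|\cos\theta_\omega|)\big)$ with $C$ absolute, the $k$-independence coming from the rescaling $r\mapsto r/|k|$.

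To finish I would note that $\omega\mapsto 1+\log_+(1/|\cos\theta_\omega|)$ belongs to $L^{q'}(\mathbb{S}^2)$ for every $q'<\infty$, since the level sets $\{\,|\cos\theta_\omega|<\varepsilon\,\}$ have $\sigma$-measure $\lesssim\varepsilon$. Hölder's inequality with the conjugate exponent of $q$ (this is where $q>1$ enters) then gives at once $\|\widehat{K_R^{R'}}\|_{L^\infty}\le C_q\|\Omega\|_{L^q(\mathbb{S}^2)}$ uniformly in $R'>R\ge 0$, and since the same $L^{q'}$ function multiplied by $|\Omega|\in L^q$ dominates the integrand, dominated convergence on $\mathbb{S}^2$ as $R\to 0$, $R'\to\infty$ yields $\widehat{K}(k)=\int_{\mathbb{S}^2}\log(|k|/|k\cdot\omega|)\,\Omega(\omega)\,d\sigma(\omega)$ together with the bound $\|\widehat K\|_{L^\infty}\le C_q\|\Omega\|_{L^q}$. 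Homogeneity of degree zero of $\widehat K$ is immediate from this explicit formula, and continuity away from the origin follows because $k/|k|\mapsto\log(|k|/|k\cdot\omega|)$ is continuous from $\mathbb{S}^2$ into $L^{q'}(\mathbb{S}^2)$ (the logarithmic singularity moves continuously with $k$ and is $L^{q'}$-integrable), so its pairing with $\Omega\in L^q$ depends continuously on $k\ne 0$.

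The step I expect to be the genuine obstacle is the uniform-in-$(R,R',k)$ estimate of the partial radial integrals near the equator $k\cdot\omega=0$: there the bound must be allowed to blow up, and the whole argument rests on seeing that it blows up only logarithmically — keeping the dominating function in $L^{q'}(\mathbb{S}^2)$ — rather than like $|\cos\theta_\omega|^{-1}$, which a careless application of the second mean value theorem on the full radial range would give. Everything downstream (the $L^\infty$ bound, the limiting formula, the homogeneity and continuity) is then routine.
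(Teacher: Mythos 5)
The paper gives no proof of this lemma --- it is quoted from Duoandikoetxea (Corollary 4.5 of \cite{Duo01}) with the explicit remark that the proof ``will not be given here'' --- so there is nothing internal to compare against. Your argument is correct and is essentially the standard proof from that reference: polar coordinates plus evenness to reduce to the cosine, the cancellation property to subtract $\cos(r|k|)$, the Frullani identity for the pointwise limit, and the uniform logarithmic bound on the partial radial integrals combined with $\log(1/|\cos\theta_\omega|)\in L^{q'}(\mathbb{S}^2)$ and H\"older's inequality; you have also correctly isolated the one genuinely delicate point, namely that the oscillatory estimate must be split at the scales $1/|k|$ and $1/|k\cdot\omega|$ so that the blow-up near the equator is only logarithmic.
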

\begin{lem}[Definition and continuity in $L^p$, Theorem 4.12 of \cite{Duo01}]
Under the assumptions of the previous lemma, for any $1<p<\infty$, we define for $f \in L^p(\mathbb{R}^{3})$
$$
K\star f (x) := \lim_{\substack{R' \to \infty \\ R\to 0 }} K_R^{R'} \star f(x)
$$
which exists for almost every $x\in\mathbb{R}^{3}$. Then, there exists some constant $C_p>0$, independent of $0\leq R < R' \leq \infty$ such that
\begin{equation*}
\|K_R^{R'} \star f\|_{L^p(\mathbb{R}^{3})} \leq C_p C_q \|\Omega\|_{L^q(\mathbb{S}^2)} \|f\|_{L^p(\mathbb{R}^{3})}, \quad \forall f\in L^p(\mathbb{R}^3).
\end{equation*}
\end{lem}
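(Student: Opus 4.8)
The plan is to recognize $K_R^{R'}\star\,\cdot\,$ as a family of truncated singular integral operators with a rough homogeneous convolution kernel of degree $-3$, to prove $L^p$ bounds uniform in the truncation parameters $R,R'$ by the Fourier-analytic (Littlewood--Paley) method, and then to deduce the almost-everywhere existence of the principal value from a maximal-function argument. I would first dispose of $p=2$: by Plancherel,
\begin{equation*}
\|K_R^{R'}\star f\|_{L^2} = \|\widehat{K_R^{R'}}\,\widehat f\|_{L^2} \leq \|\widehat{K_R^{R'}}\|_{L^\infty}\|f\|_{L^2},
\end{equation*}
and the preceding lemma already supplies $\|\widehat{K_R^{R'}}\|_{L^\infty}\leq C_q\|\Omega\|_{L^q(\mathbb{S}^2)}$ uniformly in $0\leq R<R'\leq\infty$. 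This is the one place where the mean-zero hypothesis and $q>1$ enter directly, through the explicit formula for $\widehat{K}$.

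For $p\neq 2$ the kernel is too rough (merely $\Omega\in L^q$, not Dini continuous) for the classical H\"ormander smoothness condition, so I would instead perform a dyadic decomposition à la Duoandikoetxea--Rubio de Francia: write $K_R^{R'} = \sum_{j\in\mathbb Z} K_j$, where $K_j(x) = \frac{\Omega(x/|x|)}{|x|^3}\,\mathds{1}_{\{2^j\leq |x|<2^{j+1}\}}$ restricted further to the shell $R\leq|x|<R'$. Each piece is a rescaling of a fixed bump, and the technical heart of the argument is the two-sided Fourier estimate
\begin{equation*}
|\widehat{K_j}(\xi)| \leq C\min\bigl\{|2^j\xi|,\;|2^j\xi|^{-\delta}\bigr\}
\end{equation*}
for some $\delta=\delta(q)>0$. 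The low-frequency bound $|\widehat{K_j}(\xi)|\lesssim\|\Omega\|_{L^1}\,|2^j\xi|$ follows by subtracting the vanishing spherical mean of $\Omega$ and using $|e^{-ir\omega\cdot\xi}-1|\leq r|\xi|$ over a single shell; the high-frequency decay $|\widehat{K_j}(\xi)|\lesssim|2^j\xi|^{-\delta}$ is an oscillatory-integral estimate for $\int_{\mathbb{S}^2}\Omega(\omega)e^{-ir\omega\cdot\xi}\,d\sigma(\omega)$ that exploits $\Omega\in L^q$ with $q>1$ via a van der Corput / spherical-averaging argument. \textbf{This high-frequency decay for a merely $L^q$ density is the main obstacle}, as it replaces the $|r\xi|^{-1/2}$ stationary-phase decay available for smooth $\Omega$.

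With these estimates in hand I would introduce a smooth Littlewood--Paley partition $\{\Delta_\ell\}_{\ell\in\mathbb Z}$ adapted to the frequency annuli $|\xi|\sim 2^{-\ell}$ and regroup $K_R^{R'}\star f = \sum_{m\in\mathbb Z} S_m f$ with $S_m := \sum_{j}\Delta_{j+m}(K_j\star\,\cdot\,)$. The Fourier estimates give an $L^2\to L^2$ bound with geometric gain $\|S_m\|_{L^2\to L^2}\leq C\,2^{-\delta'|m|}$, which the vector-valued square-function inequalities upgrade to $\|S_m\|_{L^p\to L^p}\leq C_p(1+|m|)\,2^{-\delta''|m|}$ for every $1<p<\infty$ by almost-orthogonality. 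Summing the geometric series in $m$ yields $\|K_R^{R'}\star f\|_{L^p}\leq C_pC_q\|\Omega\|_{L^q}\|f\|_{L^p}$, and since truncation only deletes some of the $K_j$, the constant is uniform in $R,R'$. Finally, for the a.e.\ existence of $K\star f=\lim_{R\to 0,\,R'\to\infty}K_R^{R'}\star f$, I would run the same decomposition on the maximal truncated operator $f\mapsto\sup_{0<R<R'}|K_R^{R'}\star f|$ (controlling the deleted shells by the Hardy--Littlewood maximal function), verify convergence on the dense class $C_c^\infty(\mathbb{R}^3)$ where it is elementary, and conclude by the standard maximal-inequality argument that the limit exists a.e.\ for all $f\in L^p$.
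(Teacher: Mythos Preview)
The paper does not supply a proof of this lemma at all: immediately before stating it, the author writes that this lemma and the preceding one ``have been adapted from \cite{Duo01} and their proof will not be given here.'' The statement is simply quoted as Theorem~4.12 of Duoandikoetxea's textbook, so there is no in-paper argument to compare against.

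Your outline is a faithful sketch of the Duoandikoetxea--Rubio de Francia method that underlies the cited result: the $p=2$ case via Plancherel and the uniform bound $\|\widehat{K_R^{R'}}\|_{L^\infty}\leq C_q\|\Omega\|_{L^q}$ from the previous lemma, the dyadic shell decomposition $K_j$, the two-sided Fourier bound $|\widehat{K_j}(\xi)|\lesssim\min\{|2^j\xi|,|2^j\xi|^{-\delta}\}$, regrouping via Littlewood--Paley projections and summing a geometric series in the shift parameter, and finally a maximal-operator argument for the a.e.\ limit. The one step you flag as ``the main obstacle'' --- the high-frequency decay $|\widehat{K_j}(\xi)|\lesssim|2^j\xi|^{-\delta}$ for merely $\Omega\in L^q$ --- is indeed the delicate point; in the cited reference it is obtained by interpolating between the trivial $L^1$ bound and an $L^2$ estimate for the spherical Fourier transform, rather than by van~der~Corput directly, but the conclusion is the same. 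Nothing in your proposal is wrong; it just reproduces the proof that the paper chose to omit by citation.
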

Now we prove an inequality allowing to control long range interactions having bounded Fourier transform, by the kinetic energy.

\begin{lem}[Domination of the long range potential by the kinetic energy]\label{lem_dip_est}
Let $W\in L^2(\mathbb{R}^{3})$ be such that $\widehat{W}\in L^{\infty}(\mathbb{R}^{3})$.
Then, for all $\varepsilon > 0$, there exists a constant $C_{\varepsilon}>0$, independent of $W$, such that the following operator inequality holds 
\begin{equation}\label{ineq_dip}
\pm W(x-y) \leq C_\varepsilon \|\widehat{W}\|_{L^\infty(\mathbb{R}^{3})} (1-\Delta_x)^{3/4 +\varepsilon} \otimes (1-\Delta_y)^{3/4 +\varepsilon}.
\end{equation}
\end{lem}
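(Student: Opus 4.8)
The plan is to reduce the operator inequality to a pointwise estimate on Fourier multipliers. First I would write, for $\Psi \in L^2(\mathbb{R}^3 \times \mathbb{R}^3)$ (say Schwartz, by density), the quadratic form
\[
\langle \Psi, W(x-y)\Psi\rangle = \iint W(x-y)\,|\Psi(x,y)|^2\,dx\,dy .
\]
Since $W \in L^2$, write $W(x-y) = (2\pi)^{-3}\int \widehat{W}(\xi)\,e^{i\xi\cdot(x-y)}\,d\xi$ (in the appropriate distributional sense), so that the quadratic form becomes, after inserting this and using Fubini,
\[
\langle \Psi, W(x-y)\Psi\rangle = c\int \widehat{W}(\xi)\, \widehat{\Psi}\!\left(\cdot,\cdot\right)\text{-type bilinear expression}\,d\xi,
\]
more precisely $\langle \Psi, W(x-y)\Psi\rangle = c\int \widehat{W}(\xi)\, \langle \Psi, e^{i\xi\cdot x}e^{-i\xi\cdot y}\Psi\rangle\, d\xi$. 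Bounding $|\widehat{W}(\xi)| \le \|\widehat{W}\|_\infty$, the matter reduces to controlling $\int |\langle \Psi, e^{i\xi\cdot x}e^{-i\xi\cdot y}\Psi\rangle|\,d\xi$. Writing $g(x,y) := ((1-\Delta_x)^{3/4+\varepsilon}\otimes(1-\Delta_y)^{3/4+\varepsilon})^{1/2}\Psi$, i.e. $\widehat{\Psi}(k,l) = (1+|k|^2)^{-3/8-\varepsilon/2}(1+|l|^2)^{-3/8-\varepsilon/2}\widehat{g}(k,l)$, the goal becomes
\[
\int_{\mathbb{R}^3}\left|\iint \overline{\widehat{\Psi}(k,l)}\,\widehat{\Psi}(k+\xi, l-\xi)\,dk\,dl\right| d\xi \ \le\ C_\varepsilon \|g\|_{L^2}^2 .
\]

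The key step is then a Schur-type / Cauchy–Schwarz argument on the translated convolution. Bound $|\iint \overline{\widehat\Psi(k,l)}\widehat\Psi(k+\xi,l-\xi)\,dk\,dl| \le \iint |\widehat g(k,l)|\,|\widehat g(k+\xi,l-\xi)|\, \frac{dk\,dl}{(1+|k|^2)^{3/8+\varepsilon/2}(1+|l|^2)^{3/8+\varepsilon/2}(1+|k+\xi|^2)^{3/8+\varepsilon/2}(1+|l-\xi|^2)^{3/8+\varepsilon/2}}$, then integrate in $\xi$ and apply Cauchy–Schwarz in $(k,l,\xi)$ after symmetrizing the two weight factors. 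One is left to show that
\[
\sup_{k\in\mathbb{R}^3}\int_{\mathbb{R}^3}\frac{d\xi}{(1+|k+\xi|^2)^{3/4+\varepsilon}} < \infty \quad\text{and}\quad \sup_{l}\int\frac{d\xi}{(1+|l-\xi|^2)^{3/4+\varepsilon}}<\infty,
\]
which holds precisely because $2(3/4+\varepsilon) = 3/2+2\varepsilon > 3$ in dimension $3$. Assembling these and using $\|g\|_{L^2}^2 = \langle\Psi, (1-\Delta_x)^{3/4+\varepsilon}\otimes(1-\Delta_y)^{3/4+\varepsilon}\Psi\rangle$ yields \eqref{ineq_dip}, with $C_\varepsilon$ independent of $W$ since only $\|\widehat W\|_\infty$ entered.

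The main obstacle is making the Fourier manipulations rigorous: $W \in L^2$ only, so $\widehat W$ is merely $L^2 \cap L^\infty$ and the exchange of integrals and the representation of the quadratic form need justification (e.g. by first mollifying $W$, or by working with $\widehat W \in L^\infty$ directly via Plancherel and viewing $W(x-y)$ as the kernel of a bounded operator composed with the multiplier by $\widehat W$). The exponent bookkeeping — checking that splitting $3/2+2\varepsilon$ as the sum of four exponents $3/8+\varepsilon/2$ on the four weight factors makes each $\xi$-integral above convergent and the Cauchy–Schwarz pairing balanced — is the only place where the specific power $3/4+\varepsilon$ (rather than anything smaller) is used, and it is worth stating explicitly that the threshold is sharp at $3/4$.
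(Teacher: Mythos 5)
Your strategy is sound and is in substance the same as the paper's: the paper also reduces \eqref{ineq_dip} to the boundedness of the sandwiched operator with kernel $G(x-x')G(y-y')W(x'-y')G(x'-x'')G(y'-y'')$, $G=\mathcal{F}^{-1}\big((1+p^2)^{-3/8-\varepsilon/2}\big)$, passes to Fourier variables, uses $\|\widehat W\|_{L^\infty}$ together with the elementary bound $ab\leq\frac12(a^2+b^2)$ on the two pairs of weights, and closes with Cauchy--Schwarz; the quantity that comes out is $\|\widehat Y\|_{L^4}^4$ with $\widehat Y=(1+|p|^2)^{-3/8-\varepsilon/2}$ (the paper then runs a density argument in $L^{4/3}$ to handle $G$ itself). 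So the architecture of your plan is correct.

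However, the one step you yourself single out as the crux --- the exponent bookkeeping --- is wrong as written, and the displayed condition is false. You claim that the argument reduces to
\begin{equation*}
\sup_{k}\int_{\mathbb{R}^{3}}\frac{d\xi}{(1+|k+\xi|^2)^{3/4+\varepsilon}}<\infty
\qquad\text{``because } 2(3/4+\varepsilon)=3/2+2\varepsilon>3\text{''}.
\end{equation*}
But $3/2+2\varepsilon>3$ only when $\varepsilon>3/4$; for the relevant small $\varepsilon$ this integral \emph{diverges} (the integrand decays like $|\xi|^{-3/2-2\varepsilon}$ in $\mathbb{R}^3$). The correct accounting is as follows: after symmetrizing $w(k)w(l)w(k+\xi)w(l-\xi)\leq\frac12\big(w(k)^2w(k+\xi)^2+w(l)^2w(l-\xi)^2\big)$ with $w(p)=(1+|p|^2)^{-3/8-\varepsilon/2}$, and applying Cauchy--Schwarz in $(k,l,\xi)$ to the product $\big[|\widehat g(k,l)|\,w(k+\xi)^2\big]\cdot\big[|\widehat g(k+\xi,l-\xi)|\,w(k)^2\big]$, each factor carries the \emph{fourth} power of the weight, so the integral that must be finite is
\begin{equation*}
\int_{\mathbb{R}^{3}}w(p)^4\,dp=\int_{\mathbb{R}^{3}}\frac{dp}{(1+|p|^2)^{3/2+2\varepsilon}},
\end{equation*}
whose integrand decays like $|p|^{-3-4\varepsilon}$ and which therefore converges for every $\varepsilon>0$ --- this is exactly the quantity $\|\widehat Y\|_{L^4(\mathbb{R}^3)}^4$ appearing in the paper, and it explains why $3/4$ is the borderline exponent. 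With this correction (and the justification of the Fourier representation of the quadratic form, which you rightly flag and which the paper handles by proving the multilinear estimate for Schwartz $Y$ and using $G\in L^{4/3}(\mathbb{R}^3)$), your proof closes.
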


\begin{proof}
To prove (\ref{ineq_dip}) it is sufficient to show that for any $R>0$ we have
\begin{equation}\label{eq_proof_dip_est}
(1-\Delta_x)^{-\frac{3}{8}-\frac{\varepsilon}{2}}(1-\Delta_y)^{-\frac{3}{8}-\frac{\varepsilon}{2}} W(x-y) (1-\Delta_x)^{-\frac{3}{8}-\frac{\varepsilon}{2}} (1-\Delta_y)^{-\tfrac{3}{8}-\frac{\varepsilon}{2}} \leq C_\varepsilon \|\widehat{W}\|_{L^\infty(\mathbb{R}^{3})}
\end{equation}
with $C_\varepsilon$ independent of $W$. 
We define
\begin{equation*}
G = \mathcal{F}^{-1}\left(\frac{1}{(1+p^2)^{3/8 + \varepsilon/2}}\right),
\end{equation*}
where we denote by $\mathcal{F}^{-1}$ the inverse Fourier transform.
Then, inequality (\ref{eq_proof_dip_est}) is equivalent to
\begin{multline}\label{eq_proof_dip_est2}
\bigg| \int f(x,y) G(x-x') G(y-y') \frac{W(x'-y')}{\|W\|_{L^\infty(\mathbb{R}^{3})}} \times \\ \times G(x'-x'') G(y'-y'') g(x'',y'') dx\,dy\,dx'dy'dx''dy'' \bigg|  \leq C \|f\|_{L^2(\mathbb{R}^{3})} \|g\|_{L^2(\mathbb{R}^{3})},
\end{multline}
for all $f,g\in L^2(\mathbb{R}^3\times \mathbb{R}^{3})$ and for some constant $C>0$. We will prove inequality (\ref{eq_proof_dip_est2}) with $Y \in S$, the Schwartz space, instead of $G$ and we will conclude by a density argument.

Let $f,g\in L^2(\mathbb{R}^3\times \mathbb{R}^{3})$ and let $Y \in \mathcal{S}$ then we have
\begin{align}\label{eq_proof_dip_est3}
&  \int f(x,y) Y(x-x') Y(y-y') W(x'-y') Y(x'-x'')\times \\ 
&\qquad \qquad\qquad \qquad\qquad\qquad \times Y(y'-y'') g(x'',y'') {dx\,dy \,dx'dy' dx'' dy''}\nonumber
\\
& \qquad= \int \bigg( W(w)  \int f(x,y) Y(x-(w+y')) Y(y-y') \times\nonumber \\ 
&\qquad \qquad\qquad \qquad\qquad\qquad  \times Y((w+y')-x'') Y(y'-y'') g(x'',y'')
	{dx\,dy \,dy' dx'' dy''\bigg) dw} \nonumber
 \\
&  \qquad= \int W(w) h(w) dw \nonumber\\
& \qquad= \int \overline{\widehat{W}(p)} \widehat{h}(p)dp,\nonumber
\end{align}
where we have made the change of variable $x' = w + y'$ and we have defined $$h(w) =  \int f(x,y) Y(x-(w+y')) Y(y-y') Y((w+y')-x'') Y(y'-y'') g(x'',y'') \: dxdy dy' dx'' dy''.$$
Since $\|\widehat{W}\|_{L^\infty}<\infty$, it suffices to prove that $\widehat{h}\in L^1(\mathbb{R}^3)$ to get the result. We then compute
\begin{align*}
\widehat{h}(p)
= &\int f(x,y)Y(y-y') \left(\int \widehat{Y}(q)e^{iq\cdot(y'-x)} \widehat{Y}(p-q)e^{i(p-q)\cdot(y'-x'')} dq\right) \times \\
 &\qquad \qquad\qquad \qquad\qquad\qquad \times  Y(y'-y'') g(x'',y'') dx\,dy\, dy' dx''dy'' \\
= & \int  f(x,y)  \widehat{Y}(q)e^{-iq\cdot x} \widehat{Y}(p-q) e^{-i (p-q)\cdot x''} \left(\int \widehat{Y}(k)e^{ik\cdot y} \widehat{Y}(p-k)e^{i(p-k)\cdot y''} dk\right)\times \\
& \qquad \qquad\qquad \qquad\qquad\qquad \times g(x'',y'') dq\, dx\,dy\, dx''dy''\\
= &\int  \widehat{f}(q,-k) \widehat{g}(q-p,k-p) \widehat{Y}(q) \widehat{Y}(p-q) \widehat{Y}(k)\widehat{Y}(p-k) \widehat{g}(q-p,k-p) dk\,dq
\end{align*}
We then have
\begin{align*}
\int |\widehat{h}(p)| dp &\leq
\int \left| \widehat{f}(q,-k) \widehat{g}(q-p,k-p) \widehat{Y}(q) \widehat{Y}(p-q) \widehat{Y}(k) \widehat{Y}(p-k) \right| dpdqdk \\
&\leq \frac{1}{2} \int \left| \widehat{f}(q,-k) \widehat{g}(q-p,k-p)\right| \left( |\widehat{Y}|^2(q)|\widehat{Y}|^2(p-q) + |\widehat{Y}|^2(k) |\widehat{Y}|^2(p-k)\right)dpdqdk \\
&\leq \frac{1}{2}\Bigg[ \left(\int  |\widehat{f}|^2(q,-k)|\widehat{Y}|^4(p-q)dpdqdk\right)^{1/2} \left(\int |\widehat{g}|^2(q-p,k-p) |\widehat{Y}|^4(q)dpdqdk\right)^{1/2} \\ 
&\quad + \left(\int |\widehat{f}|^2(q,-k) |\widehat{Y}|^4(p-k)dpdqdk\right)^{1/2} \left( \int |\widehat{g}|^2(q-p,k-p) |\widehat{Y}|^4(k)dpdqdk\right)^{1/2}\Bigg]\\
&\leq  \|f\|_{L^2(\mathbb{R}^{6})} \|g\|_{L^2(\mathbb{R}^{6})} \|\widehat{Y}\|_{L^4(\mathbb{R}^{3})}^4 \\
&\leq C_{4/3}  \|f\| \|g\| \|Y\|_{L^{4/3}(\mathbb{R}^{3})}^4 .
\end{align*}

Where $C_{4/3}$ is the Lipschitz modulus of the Fourier transform from $L^{4/3}(\mathbb{R}^3)$ to $L^4(\mathbb{R}^3)$. We have just proved that the expression (\ref{eq_proof_dip_est3}) defines a continuous $4$-linear form on $L^{4/3}(\mathbb{R}^3)$ by density of smooth functions. Recalling that $G\in L^{4/3}(\mathbb{R}^3)$ proves (\ref{eq_proof_dip_est2}) and ends the proof.
\end{proof}


Thanks to Lemma \ref{lem_dip_est} and Lemma \ref{lem_magn_lapl} we are now able to prove Proposition \ref{cor_est_w}.

\begin{proof}[Proof of Proposition \ref{cor_est_w}]
The first inequality (\ref{cor_est_w_1}) is exactly the same as in \cite[Lemma 3.2]{NamRouSer-15} and the last inequality (\ref{cor_est_3}) is easily adapted from \cite{NamRouSer-15}. The major improvement concerns (\ref{lem_est_w_2}) where in \cite{NamRouSer-15} the potential was assumed to be in $L^1$ which the dipolar potential is not.\\

\subsubsection*{Proof of (\ref{cor_est_w_1}).} 
Using the Sobolev inequality one has
\begin{equation*}
|W(x-y)| \leq C \|W\|_{L^{3/2}} (-\Delta_x),
\end{equation*}
for any $W\in L^{3/2}(\mathbb{R}^{3})$. Then using $\|w_N\|_{L^{3/2}} = N^\beta \|w\|_{L^{3/2}}$ and Lemma \ref{lem_magn_lapl} we obtain (\ref{cor_est_w_1}).

\subsubsection*{Proof of (\ref{lem_est_w_2}).} 
Let us write $w(x) = w_0(x) + \mathds{1}_{|x|>R}K(x)$ with $w_0\in L^{1}$. Then, $\|\widehat{w}\|_{L^\infty} \leq \|\widehat{w_0}\|_{L^{\infty}} + \|\widehat{K_R}\|_{L^\infty} \leq \|w_0\|_{L^1} + C_q \|\Omega\|_{L^q(\mathbb{S}^2)}$ thanks to Lemma \ref{lem_dip_cont}. We can now apply Lemma \ref{lem_dip_est} which proves (\ref{lem_est_w_2}).\\

\subsubsection*{Proof of (\ref{cor_est_3}).} 
The proof of \cite[Lemma 3.2]{NamRouSer-15} gives
\begin{equation*}
\pm \left( h_x W(x-y) + W(x-y) h_x \right) \leq C(\|W\|_{L^{3/2}} + \|W\|_{L^2}) h_x\otimes h_y.
\end{equation*}
and since $\|w_N\|_{L^{3/2}} = N^\beta\|w\|_{L^{3/2}}$ and $\|w_N\|_{L^{2}} = N^{3\beta/2} \|w\|_{L^{2}}$, inequality (\ref{cor_est_3}) follows.
\end{proof}
\subsection{From the many-body problem to Hartree theory}

Our proof is inspired of the one for the $2D$ case of \cite{LewNamRou-15b} with techniques first introduced in \cite{LewNamRou-15}. The method is the following: we derive a lower bound on the many-body energy per particle $e_N$ making two approximations. First we project the ground state on a finite (but varying) number of low energy levels of the one-body operator, and then we use the quantitative de Finetti theorem in finite dimension to approximate the projection of the ground state by a Hartree state. The error terms involve $L$ (the dimension of the low energy subspace the ground state has been projected on), $N$ the number of particles and $e_{N,\varepsilon}$ the energy per particle itself (with kinetic energy decreased by a factor $(1-\varepsilon)$) which is not yet known to be bounded. But, by a bootstrap argument, first introduced in \cite{LewNamRou-15b}, we get a condition on $\beta$ ensuring the boundedness of $e_N$. As in \cite{NamRouSer-15,LewNamRou-15b} the main improvement regarding \cite{LewNamRou-15} is the use of moment estimates on ground states rather than pure operator estimates.

\subsubsection{Moment estimates}

Now, we prove some moment estimates which will be useful to control the errors. Namely, we adapt \cite[Lemma 5]{LewNamRou-15b} to the 3D case, the main difference being the assumption $0<\beta<1$ which has to be strengthened due to the negative part of the potential $w$.
We define the Hamiltonian
$$
H_{N,\varepsilon} := H_N - \varepsilon \sum_{j=1}^N h_j
$$
where $h$ is replaced by $(1-\varepsilon)h$, and we denote by $N \times e_{N,\varepsilon}$ its ground state energy.

\begin{lem}[Moments estimates]\label{lem_mom_est}
Let $0<\beta<2/3$ and $\Psi_N \in \mathfrak{h}^N$ be a ground state of $H_N$. Then for all $\varepsilon \in (0,1)$ we have
\begin{equation}\label{moment_estimate_1}
\tr\left( h \gamma^{(1)}_{\Psi_N}\right) \leq C \frac{1+|e_{N,\varepsilon}|}{\varepsilon}
\end{equation}
and
\begin{equation}\label{moment_estimate_2}
\tr\left( h\otimes h \gamma^{(2)}_{\Psi_N}\right) \leq C \frac{(1+|e_{N,\varepsilon}|)^2}{\varepsilon^2},
\end{equation}
for $N$ sufficiently large, and where the constant $C$ is independent of $N$.
\end{lem}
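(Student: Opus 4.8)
The strategy is the standard bootstrap on the kinetic energy, as in \cite[Lemma 5]{LewNamRou-15b}, but carefully tracking the negative part of $w_N$ through the estimates of Proposition \ref{cor_est_w}. The starting point is to test the ground state equation against $\Psi_N$ itself. First I would use the stability-type inequality: since $\Psi_N$ is a ground state of $H_N$, writing $H_N = (1-\varepsilon)\sum_j h_j + \varepsilon \sum_j h_j + \frac{1}{N-1}\sum_{i<j} w_N(x_i-x_j) = H_{N,\varepsilon} + \varepsilon \sum_j h_j$, we get
\begin{equation*}
\varepsilon \sum_{j=1}^N \tr(h \gamma^{(1)}_{\Psi_N}) = \braket{\Psi_N, (H_N - H_{N,\varepsilon})\Psi_N} \leq \braket{\Psi_N, H_N \Psi_N} - N e_{N,\varepsilon} \leq N(|e_N| + |e_{N,\varepsilon}|).
\end{equation*}
Since $e_N$ is trivially bounded above by the Hartree energy $e_{H,N}$ evaluated on a fixed trial state (which is $O(1)$), and $|e_N| \leq C(1+|e_{N,\varepsilon}|)$, this gives \eqref{moment_estimate_1} after dividing by $N\varepsilon$. (One must be slightly careful that $e_{N,\varepsilon}$ could be very negative; but the inequality as stated only claims an upper bound in terms of $|e_{N,\varepsilon}|$, which is all we use.)

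For the two-body estimate \eqref{moment_estimate_2}, I would test the ground state equation against $\big(\sum_j h_j\big)\Psi_N$, i.e.\ use that $H_N \Psi_N = Ne_N \Psi_N$ implies
\begin{equation*}
\Big\langle \Psi_N, \Big(\sum_j h_j\Big) (H_N - Ne_N) \Big(\sum_k h_k\Big)\Psi_N\Big\rangle \geq 0
\end{equation*}
after symmetrizing, or more directly commute $\sum_j h_j$ past $H_N$. Expanding $H_N = \sum_j h_j + \frac{1}{N-1}\sum_{i<j}w_N(x_i-x_j)$, the leading term produces $\tr(h^2 \gamma^{(1)}) + (N-1)\tr(h\otimes h\, \gamma^{(2)})$ and the cross terms produce commutators $[h_x, w_N(x-y)]$, which are exactly controlled by \eqref{cor_est_3}: $\pm(h_x w_N + w_N h_x) \leq C N^{3\beta/2} h_x \otimes h_y$. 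The interaction--interaction term is handled by \eqref{lem_est_w_2} together with $h \geq c > 0$ (so $(h_x)^{3/4+\varepsilon}\otimes(h_y)^{3/4+\varepsilon} \leq h_x \otimes h_y$ up to a constant), and the diagonal-coincidence terms where two $w_N$'s share a variable by \eqref{cor_est_w_1}, which costs a factor $N^\beta$. Collecting everything and dividing by $N$, one arrives at a bound of the form
\begin{equation*}
\tr(h\otimes h\,\gamma^{(2)}_{\Psi_N}) \leq C\frac{(1+|e_{N,\varepsilon}|)^2}{\varepsilon^2} + \frac{C N^{3\beta/2}}{N}\Big(\text{lower-order moments}\Big) + \frac{C N^{2\beta}}{N}(\cdots),
\end{equation*}
and the condition $\beta < 2/3$ ensures the error prefactors $N^{3\beta/2 - 1}$ and (for the triple-collision terms) $N^{2\beta-1}$ are controlled — actually one needs $N^{3\beta/2-1}\to 0$ i.e.\ $\beta<2/3$, which is precisely the stated hypothesis. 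The resulting inequality is in fact self-improving: the $\tr(h\otimes h \gamma^{(2)})$ on the right (hidden inside the lower-order moment terms after Cauchy--Schwarz) comes with a small coefficient, so it can be absorbed into the left-hand side for $N$ large.

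The main obstacle, and the reason the hypothesis is $\beta<2/3$ rather than $\beta<1$, is the interaction--interaction term coming from the square of the pair potential: terms like $\frac{1}{(N-1)^2}\sum w_N(x_i-x_j)w_N(x_k-x_\ell)$ where the indices partially coincide. When all four indices are distinct this is $O(1)$ by \eqref{lem_est_w_2}, but when three coincide one gets a factor $w_N(x_i-x_j)^2$-type contribution scaling like $N^{2\beta}$ per pair, i.e.\ $N^{2\beta-1}$ after the mean-field normalization — worse, the genuinely dangerous term is the one requiring \eqref{cor_est_3}, scaling as $N^{3\beta/2-1}$. Keeping track of exactly which combinatorial terms appear and matching each to the right inequality among \eqref{cor_est_w_1}--\eqref{cor_est_3}, while never losing more than one power of $N^{-1}$ from the mean-field coupling, is the delicate bookkeeping; the negativity of $w_N$ is what forces us to use the two-sided bounds $\pm w_N \leq \cdots$ rather than simply dropping positive terms, which is the only substantive departure from \cite{LewNamRou-15b}. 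I would follow their argument line by line, substituting Proposition \ref{cor_est_w} for the corresponding $L^1$-based estimates, and verify that the powers of $N$ work out under \eqref{beta_condition}'s weaker cousin $\beta<2/3$.
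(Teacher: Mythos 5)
Your proof of \eqref{moment_estimate_1} is exactly the paper's: write $H_N=H_{N,\varepsilon}+\varepsilon\sum_j h_j$, use the variational principle $\braket{\Psi_N,H_{N,\varepsilon}\Psi_N}\geq Ne_{N,\varepsilon}$ and the a priori upper bound $e_N\leq C$; no issue there. For \eqref{moment_estimate_2} you also have the correct skeleton: the symmetrized identity $\braket{\Psi_N,((\sum_j h_j)H_N+H_N(\sum_j h_j))\Psi_N}=2Ne_N\braket{\Psi_N,\sum_j h_j\Psi_N}$, the use of \eqref{cor_est_3} for the terms in which the index of $h_i$ coincides with an index of $w_N$ (an anticommutator bound, incidentally, not a commutator), the resulting $N^{3\beta/2-1}$ error as the sole source of the restriction $\beta<2/3$, and the absorption of the $(\sum_j h_j)^2$ term for large $N$.

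The genuine gap is in your treatment of the off-diagonal cross terms $h_i w_N(x_j-x_k)+w_N(x_j-x_k)h_i$ with $i\notin\{j,k\}$, which form the dominant block (order $N^3$ of them against the prefactor $N^{-2}(N-1)^{-1}$). You propose to bound $\pm w_N(x_j-x_k)$ by \eqref{lem_est_w_2} and sandwich with $h_i$; this produces a genuine three-body operator $h_i\otimes h_j^{3/4+\epsilon}\otimes h_k^{3/4+\epsilon}$ whose expectation is a \emph{third} moment $\tr\big(h\otimes h^{3/4+\epsilon}\otimes h^{3/4+\epsilon}\,\gamma^{(3)}_{\Psi_N}\big)$, controlled by nothing you have established; it cannot be absorbed into $\tr(h\otimes h\,\gamma^{(2)}_{\Psi_N})$ and estimating it launches an infinite regress of higher moments. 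The step that makes the paper's proof close, and which is absent from your plan, is to use \emph{no} operator bound on $w_N(x_j-x_k)$ for these terms, but instead the variational principle for $H_{N,\varepsilon}$ together with \eqref{cor_est_w_1}:
\begin{equation*}
\frac{1}{N-1}\sum_{\substack{j<k\\ j,k\neq i}}w_N(x_j-x_k)=H_{N,\varepsilon}-(1-\varepsilon)\sum_{j}h_j-\frac{1}{N-1}\sum_{j\neq i}w_N(x_i-x_j)\geq Ne_{N,\varepsilon}-\Big(1-\varepsilon+\tfrac{CN^{\beta}}{N-1}\Big)\sum_j h_j,
\end{equation*}
and only then multiply by $h_i$, which commutes with every operator appearing in this inequality. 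This converts the dangerous block into $e_{N,\varepsilon}\sum_j h_j$ and $(\sum_j h_j)^2$, i.e.\ into quantities controlled by \eqref{moment_estimate_1} and by the left-hand side itself. Relatedly, the ``square of the pair potential'' terms $w_N w_N$ that you single out as the main obstacle never occur: the identity involves $H_N$ only to the first power, so no such products arise, and \eqref{lem_est_w_2} is in fact not used anywhere in this lemma (it enters only later, in the de Finetti step).
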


\begin{proof}
We have
\begin{equation}\label{H_N,e}
H_{N,\varepsilon} = H_N - \varepsilon \sum_{j=1}^N h_j \geq N e_{N,\varepsilon}
\end{equation}
since $H_N \Psi_N = N e_N \Psi_N$, then by taking the trace of (\ref{H_N,e}) against $\gamma=\ket{\Psi_N}\bra{\Psi_N}$ we get
\begin{align*}
e_N - \varepsilon \tr(h \gamma) \geq e_{N,\varepsilon}
\end{align*}
which shows the first moment estimate since $e_N$ is upper bounded by a constant. For the second inequality, let us write
\begin{align}\label{est_w_1}
\frac{1}{N^2} \left< \Psi_N, \left( \left(\sum_{j=1}^N h_j \right) H_N + H_N \left( \sum_{j=1}^N h_j\right) \right) \Psi_N \right> \nonumber\\
= 2 \frac{e_N}{N} \left< \Psi_N, \sum_{j=1}^N h_j \Psi_N \right> \leq C \frac{(1+|e_{N,\varepsilon}|)^2}{\varepsilon} 
\end{align}
where we used that $|e_N|\lesssim 1+ |e_{N,\varepsilon}|$ and the first moment estimate. Let us find a lower bound on (\ref{est_w_1}). One has
\begin{multline}\label{est_w_2}
\frac{1}{N^2} \left[ \left(\sum_{j=1}^N h_j \right) H_N + H_N \left( \sum_{j=1}^N h_j\right)\right] \\
=  \frac{2}{N^2} \left(\sum_{j=1}^N h_j \right)^2 + \frac{1}{N^2(N-1)} \sum_{i=1}^N \sum_{j<k}\left( h_i w_N(x_j-x_k) + w_N(x_j-x_k)h_i \right).
\end{multline}
We split the second term in (\ref{est_w_2}) in two terms, depending whether $i\neq j <  k \neq i$ or $i \in \{j,k\}$. For any $1\leq i_0 \leq N$, we have
\begin{align}\label{est_w}
\frac{1}{N-1} \sum_{\substack{1\leq j<k\leq N \\ j\neq i_0, k\neq i_0}} w_N(x_j-x_k) &= H_{N,\varepsilon} - (1-\varepsilon) \sum_{j=1}^N h_j - \frac{1}{N-1}\sum_{j\neq i_0}w_N(x_{i_0}-x_j)\nonumber \\
&\geq N e_{N,\varepsilon} - \left(1-\varepsilon + C \frac{N^\beta}{N-1}\right) \left(\sum_{j=1}^N h_j \right),
\end{align}
where (\ref{cor_est_w_1}) of Lemma \ref{cor_est_w} has been used. Then multiplying (\ref{est_w}) by $h_i$, we obtain
\begin{multline}\label{est_w_fin1}
\frac{1}{N^2(N-1)}\sum_{i=1}^N \sum_{i\neq j<k\neq i}  h_i w_N(x_j-x_k) + w_N(x_j-x_k)h_i \\
\geq 2 \frac{e_{N,\varepsilon}}{N} \left(\sum_{j=1}^N h_j \right) - \frac{2}{N^2} \left(1-\varepsilon + C \frac{N^\beta}{N}\right) \left(\sum_{j=1}^N h_j \right)^2.
\end{multline}
On the other hand, for $j\neq k$, we have by (\ref{cor_est_3})
\begin{equation*}
h_j w_N(x_j-x_k) + w_N(x_j-x_k) h_j \geq - C N^{3\beta/2} h_j h_k
\end{equation*}
and, after summing for $1 \leq j<k \leq N$, we obtain
\begin{equation}\label{est_w_fin2}
\sum_{j<k} h_j w_N(x_j-x_k) + w_N(x_j-x_k) h_j \geq - C N^{3\beta/2} \left(\sum_{j=1}^N h_j \right)^2.
\end{equation}
Combining (\ref{est_w_fin1}) and (\ref{est_w_fin2}) we arrive at the estimate
\begin{align}\label{est_w_fin}
\frac{1}{N^2} \left(\sum_{j=1}^N h_j \right) H_N + H_N \left( \sum_{j=1}^N h_j\right) \geq &\frac{2}{N^2} \left(\varepsilon - C \frac{N^\beta}{N}- C \frac{N^{3\beta/2}}{N}\right)  \left(\sum_{j=1}^N h_j \right)^2  \nonumber \\ 
& - C\frac{1+|e_{N,\varepsilon}|}{N} \left(\sum_{j=1}^N h_j \right).
\end{align}
Taking the trace of (\ref{est_w_fin}) against $\ket{\Psi_N}\bra{\Psi_N}$ and using the first moment estimate, we conclude that
\begin{equation*}
\left( \varepsilon - C \frac{N^\beta}{N}- C \frac{N^{3\beta/2}}{N}  \right) \tr\left( h\otimes h \gamma^{(2)}_{\Psi_N} \right) \leq C \frac{(1+|e_{N,\varepsilon}|)^2}{\varepsilon}.
\end{equation*}
Now, if $\beta < 2/3$, for $N$ large enough one has $\varepsilon - C N^{\beta-1}- C N^{3\beta/2-1} \geq \varepsilon/2 >0$ and the second moment estimate is proved.
\end{proof}

\subsubsection{Lower bound via de Finetti}
We now state some quantitative version of the de Finetti theorem \cite[Lemma 3]{LewNamRou-15b} originally proven in \cite{ChrKonMitRen-07} (see also \cite{Chi-10,Har-13,LewNamRou-14} for variants of the proof and \cite{FanVan-06} for new results). For a summary of the use of de Finetti theorems in the mathematics of ultra-cold atomic gases, see \cite{Rou-15}.
Heuristically, the result states that the density matrices of symmetric wave functions are well approximated by convex combinations of product states when $N$ is large.
\begin{theo}[Quantitative quantum de Finetti in finite dimension]\label{theo_definetti}
Let $\Psi \in \mathfrak{H}^N = \bigotimes_s^N L^{2}(\mathbb{R}^{3})$ and $P$ be an orthogonal projection of finite rank. Then, there exists a positive Borel measure $d\mu_\Psi$ on the unit sphere $S P\mathfrak{H}$ such that
\begin{equation*}
\tr_{\mathfrak{H}} \left| P^{\otimes 2}\gamma_{\Psi}^{(2)}P^{\otimes 2} - \int_{SP\mathfrak{H}} \ket{u^{\otimes 2}}\bra{u^{\otimes 2}} d\mu_{\Psi}(u)\right| \leq \frac{8 \dim P\mathfrak{H}}{N}
\end{equation*}
and
\begin{equation*}
 \int_{SP\mathfrak{H}} d\mu_\Psi(u) \geq \left(\tr P\gamma_{\Psi}^{(1)} \right)^2.
\end{equation*}
\end{theo}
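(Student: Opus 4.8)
The plan is to reduce to the finite-dimensional quantitative de Finetti theorem of \cite{ChrKonMitRen-07} by localizing in the number of particles sitting in the range of $P$. Write $\mathfrak{K}:=P\mathfrak{H}$ and $L:=\dim\mathfrak{K}<\infty$, and recall the coherent-state resolution of the identity on the bosonic (symmetric) space $\bigvee^m\mathfrak{K}$, i.e. $\dim\big(\bigvee^m\mathfrak{K}\big)\int_{S\mathfrak{K}}\ket{u^{\otimes m}}\bra{u^{\otimes m}}\,du=\mathds{1}_{\bigvee^m\mathfrak{K}}$ with $du$ the Haar probability measure on $S\mathfrak{K}$. If $\Psi$ happened to be supported in $\bigvee^N\mathfrak{K}$ one would have $P^{\otimes2}\gamma^{(2)}_\Psi P^{\otimes2}=\gamma^{(2)}_\Psi$ and $\tr P\gamma^{(1)}_\Psi=1$, and the statement would be exactly the pure-state case of \cite{ChrKonMitRen-07} in dimension $L$. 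For a general symmetric $\Psi$ one must first split it according to how it occupies $\mathfrak{K}$.

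I would first introduce, by the geometric localization method (see \cite{LewNamRou-15b} and the references therein), positive operators $G_k$ on $\bigvee^k\mathfrak{K}$, $0\le k\le N$, built from $\Psi$ so that $\tr G_k=\mathbb{P}_\Psi(\mathcal{N}_P=k)$ with $\mathcal{N}_P$ the number operator of the subspace $\mathfrak{K}$, and so that the localization identities
$$
P\gamma^{(1)}_\Psi P=\sum_{k\ge1}\frac{k}{N}\,\gamma^{(1)}_{G_k},\qquad P^{\otimes2}\gamma^{(2)}_\Psi P^{\otimes2}=\sum_{k\ge2}\frac{\binom{k}{2}}{\binom{N}{2}}\,\gamma^{(2)}_{G_k}
$$
hold, $\gamma^{(j)}_{G_k}$ being the $j$-particle density matrix of $G_k$ (so $\tr\gamma^{(j)}_{G_k}=\tr G_k$). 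Each $G_k/\tr G_k$ is a state on the finite-dimensional bosonic space $\bigvee^k\mathfrak{K}$, so \cite{ChrKonMitRen-07} provides, for each $k$ above a fixed threshold, a positive measure $\nu_k$ on $S\mathfrak{K}$ with $\int d\nu_k=\tr G_k$ and $\big\|\gamma^{(2)}_{G_k}-\int\ket{u^{\otimes2}}\bra{u^{\otimes2}}\,d\nu_k\big\|_1\le 4L\,(\tr G_k)/k$; for the finitely many remaining small values of $k$ (where $\gamma^{(2)}_{G_k}$ need not be a mixture of symmetric products) I would just take $\nu_k=0$ and use $\|\gamma^{(2)}_{G_k}\|_1=\tr G_k$, which after weighting costs $O(N^{-2})$.

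I would then set $\mu_\Psi:=\sum_k\binom{k}{2}\binom{N}{2}^{-1}\nu_k$. By the second identity, the triangle inequality and the telescoping bound $\sum_k\binom{k}{2}\binom{N}{2}^{-1}(\tr G_k)/k=\tfrac{1}{N(N-1)}\sum_k(k-1)\tr G_k\le\tfrac1N$, the sector errors add up to $\big\|P^{\otimes2}\gamma^{(2)}_\Psi P^{\otimes2}-\int\ket{u^{\otimes2}}\bra{u^{\otimes2}}\,d\mu_\Psi\big\|_1\le\tfrac{4L}{N}+O(N^{-2})$. The total mass is $\int d\mu_\Psi=\sum_k\binom{k}{2}\binom{N}{2}^{-1}\tr G_k=\frac{\mathbb{E}_\Psi[\mathcal{N}_P(\mathcal{N}_P-1)]}{N(N-1)}$, and a one-line variance computation gives
$$
\big(\tr P\gamma^{(1)}_\Psi\big)^2-\frac{\mathbb{E}_\Psi[\mathcal{N}_P(\mathcal{N}_P-1)]}{N(N-1)}=\frac{\mathbb{E}_\Psi[\mathcal{N}_P]\,(N-\mathbb{E}_\Psi[\mathcal{N}_P])-N\operatorname{Var}_\Psi(\mathcal{N}_P)}{N^2(N-1)}\le\frac{1}{4(N-1)}.
$$
Calling $c\ge0$ this difference, I would replace $\mu_\Psi$ by $\mu_\Psi+c\,\bar\nu$ with $\bar\nu$ the uniform probability measure on $S\mathfrak{K}$: this raises the mass to exactly $(\tr P\gamma^{(1)}_\Psi)^2$ and perturbs the trace-norm bound by at most $c\le\tfrac{1}{4(N-1)}$. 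Collecting everything and using $L\ge1$ one stays below $8L/N=8\dim(P\mathfrak{H})/N$, with the finitely many small $N$ handled trivially.

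The hard part is the finite-dimensional input, namely the bound $\big\|\gamma^{(2)}_{G_k}-\int\ket{u^{\otimes2}}\bra{u^{\otimes2}}\,d\nu_k\big\|_1\lesssim L/k$ with its explicit $1/k$ rate; this is the content of \cite{ChrKonMitRen-07}, resting on the Schur--Weyl decomposition of $\mathfrak{K}^{\otimes k}$ and the combinatorics of the coherent states of $\bigvee^k\mathbb{C}^L$, and since it is exactly the statement quoted in \cite{LewNamRou-15b} one invokes it rather than reproving it. The remaining work is to establish the two localization identities and to check that the ``bad'' low-$k$ sectors carry total weight $O(N^{-2})$ --- routine, but this is what pins down the constant.
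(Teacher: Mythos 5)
The paper does not actually prove this theorem: it is imported verbatim from \cite[Lemma 3]{LewNamRou-15b}, with the underlying finite-dimensional estimate due to \cite{ChrKonMitRen-07}. Your reconstruction is correct and follows essentially the route of those references: geometric localization of $\Psi$ with respect to $\mathcal{N}_P$ into sector states $G_k$ on $\bigvee^k P\mathfrak{H}$, the exact identities $P^{\otimes j}\gamma^{(j)}_\Psi P^{\otimes j}=\sum_{k\geq j}\binom{k}{j}\binom{N}{j}^{-1}\gamma^{(j)}_{G_k}$, the CKMR bound in each sector, and the telescoping sum $\sum_k \binom{k}{2}\binom{N}{2}^{-1}\tfrac{4L}{k}\tr G_k=\tfrac{4L}{N(N-1)}\sum_k(k-1)\tr G_k\leq \tfrac{4L}{N}$. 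The one genuine deviation is the mass lower bound: the published proof weights the sector measures by $(k/N)^2$ rather than $\binom{k}{2}\binom{N}{2}^{-1}$, so that $\int d\mu_\Psi=\mathbb{E}_\Psi[(\mathcal{N}_P/N)^2]\geq(\tr P\gamma^{(1)}_\Psi)^2$ is immediate from Jensen, at the cost of an extra $O(1/N)$ error from $|(k/N)^2-\binom{k}{2}\binom{N}{2}^{-1}|$; you instead keep the exact weights and repair the possible mass deficit, correctly bounded by $\tfrac{1}{4(N-1)}$ via your variance identity, by adding that much uniform measure. Both patches cost an absolute $O(1/N)$ and land under $8\dim(P\mathfrak{H})/N$ since $\dim P\mathfrak{H}\geq 1$. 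Two small remarks: your worry about ``bad low-$k$ sectors'' is vacuous, since the CKMR theorem applies to every $k\geq 2$ and the sectors $k<2$ do not contribute to $\gamma^{(2)}$ at all, so the uniform estimate $\tfrac{4L(k-1)}{N(N-1)}\tr G_k$ already handles everything; and the input you need from \cite{ChrKonMitRen-07} is specifically the version for mixed states supported on the symmetric subspace (yielding a measure on pure states with error linear in the dimension), which is indeed what the localization delivers, since each $G_k$ lives on $\bigvee^k P\mathfrak{H}$.
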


We then denote by $K_2^N = h \otimes 1 + 1 \otimes h +\frac{1}{2} w_N(x-y)$ and $P := P(L)= \mathds{1}_{(-\infty, L]}(h)$, for any $L>0$, the projection onto the subspace of energy levels lower than $L$ of the one-particle operator. The ground state energy can be written
\begin{align*}
\frac{\braket{\Psi_N,H_N \Psi_N}}{N} = \tr\left( K_2 \gamma^{(2)}_{\Psi_N} \right) = \int_{\mathbb{S}L^2} \braket{u^{\otimes 2} , K_2 u^{\otimes 2}} d\mu_{\Psi_N}(u) 
+ \tr K_2 \left(\gamma_{\Psi_N}^{(2)} - P^{\otimes 2}\gamma_{\Psi_N}^{(2)} P^{\otimes 2}  \right) \nonumber \\ 
+ \tr K_2 \left( P^{\otimes 2} \gamma_{\Psi_N}^{(2)} P^{\otimes 2} - \int_{\mathbb{S}L^2} \ket{u^{\otimes 2}}\bra{u^{\otimes 2}} d\mu_{\Psi_N}(u)) \right).
\end{align*}
Where we have denoted the unit sphere of $L^2(\mathbb{R}^{3})$ by $SL^2$. The first term is the sought after approximation leading to the Hartree energy :
\begin{equation*}
\mathcal{E}_H^N(u) = \braket{u^{\otimes 2},K_2^N u^{\otimes 2}}.
\end{equation*}
The two others are error terms which have to be controlled. We summarize our estimates in the following lemma.

\begin{lem}[Lower bound via de Finetti]\label{cor_est_de_fin}
For any $0<\beta \leq 1$ and any $L,\varepsilon, \delta >0$, there exists a constant $C>0$ and $N_0> 0$ such that, we have for all $N\geq N_0$
\begin{equation}\label{est_de_fin_1}
\int_{\mathbb{S}L^2} \braket{u^{\otimes 2} , K_2 u^{\otimes 2}} d\mu_{\Psi_N}(u)
 \geq e_{H,N} - C\frac{(1+|e_{N,\varepsilon}|)}{\varepsilon L},
\end{equation}
\begin{equation}\label{est_de_fin_2}
\tr \left( K_2 \left( P^{\otimes 2}\gamma_{\Psi_N}^{(2)}P^{\otimes 2} - \int_{\mathbb{S}L^2} \ket{u^{\otimes 2}}\bra{u^{\otimes 2}} d\mu_{\Psi_N}(u)\right)  \right)
\geq -  C_{\delta} \frac{L^{3(1+\tfrac{1}{s})+2\delta}}{N}
\end{equation}
and
\begin{equation}\label{est_de_fin_3}
\tr K_2 \left(\gamma_{\Psi_N}^{(2)} - P^{\otimes 2}\gamma_{\Psi_N}^{(2)} P^{\otimes 2}  \right) 
\geq - C_\delta \frac{(1+|e_{N,\varepsilon}|)^{\tfrac{13}{8} + \tfrac{3\delta}{2}}}{\varepsilon L^{\tfrac{1}{8}-\tfrac{\delta}{2}}}.
\end{equation}
Combining the inequalities (\ref{est_de_fin_1}), (\ref{est_de_fin_2}) and (\ref{est_de_fin_3}) we find
\begin{equation*}
e_N \geq e_{H,N} - C\frac{(1+|e_{N,\varepsilon}|)}{\varepsilon L} - C_{\delta} \left(\frac{L^{3(1+\tfrac{1}{s}) + 2\delta}}{N} + \frac{(1+|e_{N,\varepsilon}|)^{\tfrac{13}{8} + \tfrac{3\delta}{2}}}{\varepsilon^{\tfrac{13}{8} + \tfrac{3\delta}{2}} L^{\tfrac{1}{8}-\tfrac{\delta}{2}}}\right).
\end{equation*}
\end{lem}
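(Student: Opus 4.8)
The plan is to establish the three inequalities \eqref{est_de_fin_1}, \eqref{est_de_fin_2}, \eqref{est_de_fin_3} separately and then add them up, using the decomposition of $e_N$ into the de Finetti term plus two error terms displayed just before the statement. For \eqref{est_de_fin_1}, observe that for any $u$ on the unit sphere of $PL^2$ we have $\braket{u^{\otimes 2},K_2^N u^{\otimes 2}} = \mathcal{E}_H^N(u) \geq e_{H,N}$, while for general $u$ in $SL^2$ the mass may be less than one; I would renormalise and use that $\mathcal{E}_H^N$ is bounded below (by Hartree stability of $w_N$, which follows from \eqref{classical_stab} via \eqref{hartree_stab}) together with the lower bound $\braket{u^{\otimes 2},(h\otimes 1 + 1\otimes h)u^{\otimes 2}}\geq 2\inf\sigma(h)$. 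The deficit $1 - \int d\mu_{\Psi_N}(u)$ is controlled by $1 - (\tr P\gamma^{(1)}_{\Psi_N})^2 \leq 2(1 - \tr P\gamma^{(1)}_{\Psi_N})$ from Theorem \ref{theo_definetti}, and $1 - \tr P\gamma^{(1)}_{\Psi_N} = \tr\big(\mathds{1}_{(L,\infty)}(h)\gamma^{(1)}_{\Psi_N}\big) \leq L^{-1}\tr(h\gamma^{(1)}_{\Psi_N}) \leq C(1+|e_{N,\varepsilon}|)/(\varepsilon L)$ by the first moment estimate \eqref{moment_estimate_1} of Lemma \ref{lem_mom_est}. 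Multiplying the (bounded) size of the energy per unit mass by this deficit gives the claimed error.

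For \eqref{est_de_fin_2} I would split $K_2^N = (h\otimes 1 + 1\otimes h) + \tfrac12 w_N(x-y)$. On the range of $P^{\otimes 2}$, the operator $h\otimes 1 + 1\otimes h$ is bounded by $2L$, so pairing it against the trace-norm-small difference supplied by the quantitative de Finetti theorem costs at most $2L\cdot 8\dim P L^2 / N$. The interaction part $w_N$ is handled by \eqref{lem_est_w_2} of Proposition \ref{cor_est_w}: sandwiching it with $P^{\otimes 2}$ bounds it by $C_\varepsilon (PhP)^{3/4+\varepsilon}\otimes(PhP)^{3/4+\varepsilon} \leq C_\varepsilon L^{3/2+2\varepsilon}$, so its contribution is also $O(L^{3/2+2\varepsilon}\dim PL^2/N)$. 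The remaining ingredient is the Weyl-type bound $\dim P(L) L^2 = \tr \mathds{1}_{(-\infty,L]}(h) \leq C_\delta L^{3/2(1+1/s)+\delta}$, which follows from $h \geq C^{-1}(-\Delta + |x|^s) - C$ (Lemma \ref{lem_magn_lapl}) and the standard eigenvalue count for $-\Delta + |x|^s$. Collecting exponents, $L \cdot L^{3/2(1+1/s)+\delta}$ and $L^{3/2+2\varepsilon}\cdot L^{3/2(1+1/s)+\delta}$ are both $\leq C_\delta L^{3(1+1/s)+2\delta}$ after absorbing $\varepsilon$ into $\delta$, giving \eqref{est_de_fin_2}.

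For \eqref{est_de_fin_3}, set $Q = 1 - P$ and expand $\gamma^{(2)} - P^{\otimes 2}\gamma^{(2)}P^{\otimes 2}$ into pieces each carrying at least one factor $Q$ on one side; this is the standard ``cutoff error'' estimate. The kinetic part $h\otimes 1 + 1\otimes h$ contributes a positive operator, so its terms with a $Q$-factor are bounded below by $0$ minus a cross term, which one controls by Cauchy–Schwarz and the second moment estimate \eqref{moment_estimate_2}, gaining a power of $L^{-1}$ from $Q h Q \leq h$ combined with $Q \leq L^{-1}h$. The delicate part is the interaction $w_N(x-y)$: here I would use \eqref{lem_est_w_2}, $\pm w_N \leq C_\varepsilon h^{3/4+\varepsilon}\otimes h^{3/4+\varepsilon}$, to replace $w_N$ by $h^{3/4+\varepsilon}\otimes h^{3/4+\varepsilon}$, then insert one $Q \leq L^{-1/4}h^{1/4}$ (more precisely $Q h^{3/4+\varepsilon}Q \leq L^{-(1/8-\varepsilon/2)} \cdot h^{7/8+\varepsilon/2}$-type interpolation, matching the exponents $1/8 - \delta/2$ in the statement) and bound the rest by the second moment $\tr(h\otimes h\,\gamma^{(2)}) \leq C(1+|e_{N,\varepsilon}|)^2/\varepsilon^2$, raised to a power $13/16 + \ldots$ after interpolating between $h\otimes h$ and the identity to produce $h^{3/4+\varepsilon}\otimes h^{3/4+\varepsilon}$-type quantities — this bookkeeping is what yields the exponent $13/8 + 3\delta/2$. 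The main obstacle is exactly this last estimate: tracking the fractional powers so that the $L$-loss is only $L^{-(1/8-\delta/2)}$ while the moment is raised to the power $13/8 + 3\delta/2$, since the whole $\beta$-threshold \eqref{beta_condition} comes from optimising $L$ against $N$ in the resulting inequality. Once all three bounds are in hand, summing them and recalling $e_N \geq \int\braket{u^{\otimes 2},K_2 u^{\otimes 2}}d\mu + (\text{two errors})$ gives the final displayed inequality.
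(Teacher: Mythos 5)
Your strategy coincides with the paper's on all three estimates: the mass-deficit bound from Theorem \ref{theo_definetti} combined with $Q\leq L^{-1}h$ and the first moment estimate for (\ref{est_de_fin_1}); the trace-norm de Finetti bound with $PhP\leq LP$, the sandwiched interaction bound from (\ref{lem_est_w_2}), and an eigenvalue count for (\ref{est_de_fin_2}); and the projection-error analysis with the second moment estimate for (\ref{est_de_fin_3}). Two caveats. First, your Weyl-type bound $\dim P\mathfrak{H}\leq C_\delta L^{\frac32(1+\frac1s)+\delta}$ is not correct: the phase-space volume of $\{p^2+|x|^s\leq L\}$ scales as $L^{3/2}\cdot L^{3/s}$, so the right exponent is $3(\tfrac12+\tfrac1s)=\tfrac32+\tfrac3s$, which is what the paper takes from \cite{ComSchSei-78}; fortunately $L^{3/2+2\delta}\cdot L^{3(1/2+1/s)}=L^{3(1+1/s)+2\delta}$, so the correct count still yields exactly (\ref{est_de_fin_2}) and nothing is lost. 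Second, for (\ref{est_de_fin_3}) you correctly identify the ingredients but leave the fractional-power bookkeeping --- which you yourself call the main obstacle --- undone; the paper's mechanism is worth recording. An operator Cauchy--Schwarz splits $\gamma^{(2)}_{\Psi_N}-P^{\otimes2}\gamma^{(2)}_{\Psi_N}P^{\otimes2}$ into a piece sandwiched by $1-P^{\otimes2}$ and a piece carrying no projections; the bound $\pm w_N\leq C_\delta(h_xh_y)^{3/4+\delta}$ is linearized via the elementary identity (\ref{inf_trick}); and $1-P^{\otimes2}\leq Q\otimes1+1\otimes Q$ together with $Qh^{3/4+\delta}\leq L^{-(1/4-\delta)}h$ produces the $L^{-(1/8-\delta/2)}$ gain after taking the square root coming from Cauchy--Schwarz, while the remaining traces are converted into powers of the moments (\ref{moment_estimate_1}) and (\ref{moment_estimate_2}), giving the exponent $\tfrac{13}{8}+\tfrac{3\delta}{2}$. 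Also note that the kinetic part of this error term is simply nonnegative: since $P$ is a spectral projection of $h$ one has $h_1+h_2-P^{\otimes2}(h_1+h_2)P^{\otimes2}\geq0$, so no cross terms need to be controlled there, contrary to what your sketch suggests.
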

\begin{proof}Let us take $0<\beta \leq 1$, $L,\varepsilon, \delta >0$ and $N_0>0$ such that Lemma \ref{lem_mom_est} holds.
\subsubsection*{Main term (\ref{est_de_fin_1}).}
Since $\braket{u^{\otimes 2}, K_2 u^{\otimes 2}} = \mathcal{E}_H(u)$, we have
\begin{equation*}
 \int_{\mathbb{S}L^2} \braket{u^{\otimes 2} , K_2 u^{\otimes 2}} d\mu_{\Psi_N}(u) \geq e_{H,N}  \int_{\mathbb{S}L^2} d\mu_{\Psi_N}(u).
\end{equation*}
But $ \int_{\mathbb{S}L^2} d\mu_{\Psi_N}(u) \leq \left(\tr P\gamma_{\Psi_N}^{(1)}\right)^2$ and $Q:=1-P \leq L^{-1} h$ and therefore 
\begin{align*}
 \int_{\mathbb{S}L^2} d\mu_{\Psi_N}(u) &\geq (1- \tr Q \gamma_{\Psi_N}^{(1)})^2 \geq 1 - \frac{2}{L}\tr h \gamma_{\Psi_N}^{(1)}.
\end{align*}
So we obtain
\begin{equation}\label{main_term_fin}
 \int_{\mathbb{S}L^2} \braket{u^{\otimes 2} , K_2 u^{\otimes 2}} d\mu_{\Psi_N}(u) \geq e_{H,N} - \frac{C}{L} \tr h \gamma_{\Psi_N}^{(1)} 
 \geq e_{H,N} - C\frac{(1+|e_{N,\varepsilon}|)}{\varepsilon L}
\end{equation}
where we have used (\ref{moment_estimate_1}).\\

\subsubsection*{First error term (\ref{est_de_fin_3}).}
Using the quantitative quantum de Finetti Theorem \ref{theo_definetti} and the bound $P h \leq L P$ we obtain
\begin{equation}\label{first_error_1}
\left| \tr \left( h_1 + h_2 \right) \left( P^{\otimes 2} \gamma_{\Psi_N}^{(2)} P^{\otimes 2} -  \int_{\mathbb{S}L^2} \ket{u^{\otimes 2}}\bra{u^{\otimes 2}} d\mu_{\Psi_N}(u) \right) \right| \leq C \frac{L d_L}{N},
\end{equation}
where $d_L=\dim \ima P$. Besides, from (\ref{lem_est_w_2}) in Corollary \ref{cor_est_w} we deduce that for $\delta > 0$,
\begin{equation*}
\pm P^{\otimes 2} w_N(x_1-x_2) P^{\otimes 2} \leq C_{\delta} (Ph_1 \otimes Ph_2)^{3/4 + \delta} \leq C_{\delta} L^{3/2+2\delta } P^{\otimes 2}
\end{equation*}
which, combined with (\ref{first_error_1}), gives
\begin{align}\label{first_erro_fin}
\tr \left( K_2 \left( P^{\otimes 2}\gamma_{\Psi_N}^{(2)}P^{\otimes 2} - \int_{\mathbb{S}L^2} \ket{u^{\otimes 2}}\bra{u^{\otimes 2}} d\mu_{\Psi_N}(u)\right)  \right)
&\geq -  C_{\delta} \frac{L^{3/2+2\delta} \dim P \mathfrak{H}}{N}. 
\end{align}
We now use \cite[Theorem 2.1]{ComSchSei-78} according to which there is some constant $C>0$ such that $\dim P \mathfrak{H} \leq C L^{3(1/2+1/s)}$ and which shows (\ref{est_de_fin_2}).\\

\subsubsection*{Second error term (\ref{est_de_fin_2}).}
The operator inequality $h \geq P h$ gives
\begin{equation*}
\tr \left( \left( h_1 + h_2 \right) \left(\gamma_{\Psi_N}^{(2)}-P^{\otimes 2}\gamma_{\Psi_N}^{(2)}P^{\otimes 2}\right)\right) 
= \tr \left[ \left( 	\left(h_1+ h_2\right) - P^{\otimes 2}\left(h_1 + h_2\right)P^{\otimes 2} \right) \gamma_{\Psi_N}^{(2)} \right]
\geq 0.
\end{equation*}
Then, we follow and adapt \cite{LewNamRou-15b}. Using the Cauchy-Schwarz inequality for operators 
\begin{equation}\label{CS_op}
\pm \left( AB + B^* A^*\right) \leq \eta^{-1} A A^* + \eta B^* B, \quad \forall \eta>0,
\end{equation}
we obtain
\begin{align}\label{sec_err_1}
  &\pm 2 \left( \gamma_{\Psi_N}^{(2)}-P^{\otimes 2}\gamma_{\Psi_N}^{(2)}P^{\otimes 2}\right) \nonumber \\
= &\pm \left( \left(1-P^{\otimes 2} \right) \gamma_{\Psi_N}^{(2)} + \gamma_{\Psi_N}^{(2)}\left(1-P^{\otimes 2}\right) 
+ P^{\otimes 2}\gamma_{\Psi_N}^{(2)}\left(1-P^{\otimes 2}\right) 
+ (1-P^{\otimes 2})\gamma_{\Psi_N}^{(2)}P^{\otimes 2} \right) \nonumber \\
\leq & \, 2 \eta^{-1}\left(1-P^{\otimes 2}\right)\gamma_{\Psi_N}^{(2)}\left(1-P^{\otimes 2}\right) 
+ \eta \left(\gamma_{\Psi_N}^{(2)}+P^{\otimes 2}\gamma_{\Psi_N}^{(2)}P^{\otimes 2}\right), \quad \forall \eta>0.
\end{align}
On the other hand, using (\ref{lem_est_w_2}) and the fact that
\begin{equation}\label{inf_trick}
t^r = \inf_{\substack{\\\rho>0}}\left(r \rho^{-1} t + (1-r) \rho^{\frac{r}{1-r}} \right), \quad \forall t\geq 0, \quad \forall r\in (0,1),
\end{equation}
we get
\begin{equation}\label{sec_err_2}
\pm w_N(x-y) \leq C_{\delta} (h_xh_y)^{3/4+\delta} \leq C_\delta \left( \rho^{-1} h_xh_y + \rho^{\frac{3+4\delta}{1-4\delta}} \right).
\end{equation}
Now, combining (\ref{sec_err_1}) and (\ref{sec_err_2}) we have
\begin{align*}
2 \tr \left(w_N \left( \gamma_{\Psi_N}^{(2)}-P^{\otimes 2}\gamma_{\Psi_N}^{(2)}P^{\otimes 2}\right)\right)
\geq &-2 C_\delta \eta^{-1} \tr \left( (h_xh_y)^{3/4+\delta} \right) \left(\left(1-P^{\otimes 2}\right)\gamma_{\Psi_N}^{(2)}\left(1-P^{\otimes 2}\right)\right)  \\
	&-C_\delta \eta \tr \left( \rho^{-1} h_x h_y + \rho^{\frac{3+4\delta}{1-4\delta}} \right) \left( \gamma_{\Psi_N}^{(2)}+P^{\otimes 2}\gamma_{\Psi_N}^{(2)}P^{\otimes 2}\right).
\end{align*}
After optimizing over $\eta>0$ we find
\begin{align}\label{sec_err_3}
\tr \left(w_N \left( \gamma_{\Psi_N}^{(2)}-P^{\otimes 2}\gamma_{\Psi_N}^{(2)}P^{\otimes 2}\right)\right)
\geq - C_\delta &\left[\tr \left( (h_xh_y)^{3/4+\delta} \right) \left(\left(1-P^{\otimes 2}\right)\gamma_{\Psi_N}^{(2)}\left(1-P^{\otimes 2}\right)\right)\right]^{1/2} \\
\times & \left[\tr \left( \rho^{-1} h_x h_y + \rho^{\frac{3+4\delta}{1-4\delta}} \right) \left( \gamma_{\Psi_N}^{(2)}+P^{\otimes 2}\gamma_{\Psi_N}^{(2)}P^{\otimes 2}\right)\right]^{1/2}. \nonumber
\end{align}
Let us deal with the second factor in (\ref{sec_err_3}), we have
\begin{align*}
 \left|\tr \left( \rho^{-1} h_x h_y + \rho^{\frac{3+4\delta}{1-4\delta}} \right) \left( \gamma_{\Psi_N}^{(2)}+P^{\otimes 2}\gamma_{\Psi_N}^{(2)}P^{\otimes 2}\right)\right|
\leq C \left( \rho^{-1} \tr \left(h_1 h_2 \gamma_{\Psi_N}^{(2)}\right) + \rho^{\frac{3+4\delta}{1-4\delta}} \right)
\end{align*}
and by optimizing over $\rho>0$, we obtain
\begin{equation}\label{sec_err_8}
 \left|\tr \left( \rho_0^{-1} h_x h_y + \rho_0^{\frac{3+4\delta}{1-4\delta}} \right) \left( \gamma_{\Psi_N}^{(2)}+P^{\otimes 2}\gamma_{\Psi_N}^{(2)}P^{\otimes 2}\right)\right|
\leq C \tr \left(h_1 h_2 \gamma_{\Psi_N}^{(2)}\right)^{3/4 + \delta}.
\end{equation}
Now, let us find an upper bound of the first factor in (\ref{sec_err_3}). We define $Q:= 1-P$ and we notice that 
\begin{align}\label{sec_err_4}
1 - P^{\otimes 2} &= 1 - (1-Q)\otimes(1-Q) 
=  Q \otimes 1 + 1 \otimes Q - Q \otimes Q \nonumber \\
&\leq Q \otimes 1 + 1 \otimes Q.
\end{align}
This allows us to write
\begin{align}
\left(1 - P^{\otimes 2}\right) \left(h\otimes h\right)^{3/4+\delta} \left(1 - P^{\otimes 2}\right)
&\leq C \left( Qh^{3/4+\delta}\otimes h^{3/4+\delta} + h^{3/4+\delta} \otimes Q h^{3/4+\delta} \right) \label{sec_err_5}\\
&\leq C \frac{1}{L^{1/4-\delta}} \left(h\otimes h^{3/4+\delta} + h^{3/4+\delta}\otimes h \right)\label{sec_err_6} \\
&\leq C \frac{1}{L^{1/4-\delta}} \left( \eta^{-1} h \otimes h + \eta^{\frac{3+4\delta}{1-4\delta}}\left(h\otimes 1 + 1 \otimes h \right)\right) \label{sec_err_7},
\end{align}
for all $\eta >0$. In (\ref{sec_err_5}), we used (\ref{sec_err_4}), whereas in (\ref{sec_err_6}) we used that $Q \leq L^{-1} h$. Finally, in (\ref{sec_err_7}) we used (\ref{inf_trick}). Now, taking the trace of (\ref{sec_err_7}) against $\gamma_{\Psi_N}^{(2)}$ and optimizing over $\eta>0$ we get
\begin{equation}\label{sec_err_9}
\:\mathllap\tr \left( (h_xh_y)^{3/4+\delta}  \left(1-P^{\otimes 2}\right)\gamma_{\Psi_N}^{(2)}\left(1-P^{\otimes 2}\right) \right) \leq \frac{1}{L^{\tfrac{1}{4}-\delta}} \left(\tr h\otimes h \gamma_{\Psi_N}^{(2)}\right)^{3/4+\delta} \left(\tr h\gamma_{\Psi_N}^{(1)}\right)^{1/4-\delta}.
\end{equation}
Finally, from (\ref{sec_err_8}) and (\ref{sec_err_9}) we get
\begin{equation}\label{sec_err_fin}
\tr \left(w_N \left( \gamma_{\Psi_N}^{(2)}-P^{\otimes 2}\gamma_{\Psi_N}^{(2)}P^{\otimes 2}\right)\right)
\geq - \frac{C_\delta}{L^{1/8-\delta/2}}  \left(\tr h\otimes h \gamma_{\Psi_N}^{(2)}\right)^{3/4+\delta/2} \left(\tr h\gamma_{\Psi_N}^{(1)}\right)^{1/8-\delta/2}.
\end{equation}
Now, using the moment estimates (\ref{moment_estimate_1}) and (\ref{moment_estimate_2}) of Lemma \ref{lem_mom_est} the result follows.
\end{proof}
\subsubsection{Final energy estimates and stability of the second kind}
We follow the bootstrap argument of \cite{LewNamRou-15b}. From Lemma \ref{cor_est_de_fin} we have
\begin{equation}\label{fin_ene_est1}
e_{H,N} \geq e_N \geq e_{H,N} - C\frac{(1+|e_{N,\varepsilon}|)}{\varepsilon L} - C_{\delta} \left(\frac{L^{3(1+s^{-1}) + 2\delta}}{N} + \frac{(1+|e_{N,\varepsilon}|)^{13/8 + 3\delta/2}}{\varepsilon^{13/8 + 3\delta/2} L^{1/8-\delta/2}}\right).
\end{equation}
We want to prove that $e_{N,\varepsilon}$ is bounded, otherwise, since we take $L \geq 1$ we have
\begin{equation}\label{supp_term}
\frac{(1+|e_{N,\varepsilon}|)}{\varepsilon L} \leq 
 \frac{(1+|e_{N,\varepsilon}|)^{13/8 + 3\delta/2}}{\varepsilon^{13/8 + 3\delta/2} L^{1/8-\delta/2}}.
\end{equation}
We can therefore omit the left side of (\ref{supp_term}) in inequality (\ref{fin_ene_est1}).
Now, let us rewrite (\ref{fin_ene_est1}), replacing $w$ by $(1-\varepsilon)^{-1}w$, we get
\begin{equation}\label{est_e_e_epsilon}
e_{H,N}^\varepsilon \geq e_{N,\varepsilon} \geq e_{H,N}^\varepsilon - C_{\delta} \left(\frac{L^{3(1+\tfrac{1}{s}) + 2\delta}}{N} + \frac{(1+|e_{N,\varepsilon'}|)^{\tfrac{13}{8} + \tfrac{3\delta}{2}}}{(\varepsilon'-\varepsilon)^{\tfrac{13}{8} + \tfrac{3\delta}{2}} L^{\tfrac{1}{8}-\tfrac{\delta}{2}}}\right),
\end{equation}
for all $1 > \varepsilon' > \varepsilon >0$. Note as well that $e_{H,N}^\varepsilon \geq 0$. Thanks to the classical stability of $w$ (\ref{classical_stab}), we know that for all $0\leq \varepsilon \leq 1$ we have ${e_{N,\varepsilon} \geq - C N^{3\beta-1}}$. Now, as in \cite{LewNamRou-15b}, this leads us to make the following induction hypothesis, denoted $I_\eta$, for $\eta\geq 0$,
\begin{equation*}
\limsup_{\substack{\\N\to\infty}} \frac{|e_{N,\varepsilon}|}{1+ N^\eta} < \infty , \, \forall \varepsilon \in (0,1).
\end{equation*}
It is clear that $I_\eta$ holds for $\eta = 3\beta - 1$, and we would like to prove $I_0$. By choosing $L=N^{\tau}$, for $\tau>0$, we see that if $I_\eta$ holds, then $I_{\eta'}$ also holds as soon as 
\begin{equation}\label{choose_eta_1}
\eta' > \max\left\{ 3 \tau (1+s^{-1}) - 1 , \frac{13\eta - \tau}{8} \right\}.
\end{equation}
Optimizing over $\tau$, (\ref{choose_eta_1}) becomes
\begin{equation*}
\eta' > \eta - \frac{s - \eta (15+14s)}{24+25s}.
\end{equation*}
In order to choose some $\eta' < \eta$, the condition $\eta < s/(15+14s)$ must be fulfilled. In term of $\beta$ (in order to start the induction) this means
\begin{equation}\label{cond_beta}
\beta < \frac{1}{3} + \frac{s}{45 + 42s}.
\end{equation}
If condition (\ref{cond_beta}) is fulfilled, we can show that $(I_0)$ holds, after applying (\ref{est_e_e_epsilon}) finitely many times. Namely, we have convergence of the energy per particle with the following estimate on the rate of convergence
\begin{equation}\label{bound_on_CV}
e_N \geq e_{H,N} - C_\delta N^{-\tfrac{1}{(25 + 24/s) +12 \delta}}.
\end{equation}

\subsection{From Hartree theory to Gross-Pitaevskii}

We have shown above that the many-body ground state energy $e_N$ is well approximated by Hartree's energy $e_{H,N}$. It remains to show that $e_{H,N}$ is a good approximation of the Gross-Pitaevskii energy $e_{GP}(a,b)$. This is easier than the approximation of the many body energy by the Hartree energy. The case $w\in L^1(\mathbb{R}^{3})$ follows from standard arguments \cite{LewNamRou-15,LewNamRou-15b} and it remains to adapt the proof to deal with $K$.

\begin{lem}[From Hartree to Gross-Pitaevskii]\label{lem_from_H_to_nls}
Let $w_0 \in L^1(\mathbb{R}^{3})$ and $K(x) = \Omega(x)/ |x|^3$ satisfying the assumptions of Theorem \ref{theo_1}. Let $b\in\mathbb{R}$ and $a=\int_{\mathbb{R}^{3}} w_0$, let us define $w = w_0 + b \mathds{1}_{|x|>R}K$ for some $R>0$. Then
\begin{equation}\label{eq_lem_HTNLS}
\begin{aligned}
\lim_{N \to \infty} \sup_{\substack{u\in H^1\\ u \neq 0}} \| u \|_{H^1}^{-4} \bigg| &\iint w_N(x-y)|u(x)|^2 |u(y)|^2 dxdy  \\ 
&- a \int |u(x)|^4 dx - b \iint K(x-y) |u(x)|^2 |u(y)|^2 dxdy \bigg| = 0 
\end{aligned}
\end{equation}
\end{lem}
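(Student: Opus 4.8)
\emph{Proof proposal.} The plan is to decompose $w = w_0 + b\,\mathds{1}_{|x|>R}K$ and treat the two pieces separately. Since $K$ is homogeneous of degree $-3$, the $L^1$-scaling acts on it trivially up to moving the cut-off radius: $w_N = (w_0)_N + b\,\mathds{1}_{|x|>RN^{-\beta}}K$, with $\eta := RN^{-\beta}\to 0$. All interaction integrals are read, exactly as in the Hartree functional $\mathcal{E}_H^N$, as the pairings $\int(w_N\star f)f$, $\int((w_0)_N\star f)f$ and $\int(K\star f)f$ with $f:=|u|^2$, the last convolution being the principal value of Lemma~\ref{lem_dip_cont}. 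The only analytic ingredient is the displacement estimate
\[
\|f(\cdot+h)-f\|_{L^{3/2}(\mathbb{R}^3)} \le C\min(|h|,1)\,\|u\|_{H^1}^2 ,\qquad f=|u|^2,
\]
obtained from $H^1(\mathbb{R}^3)\hookrightarrow L^6\cap L^3$, from $\nabla f=2\Re(\bar u\nabla u)\in L^{3/2}$ by H\"older, and from interpolating the mean-value bound with the crude bound $\|f(\cdot+h)-f\|_{L^{3/2}}\le 2\|f\|_{L^{3/2}}$; one also records $\|f\|_{L^3}=\|u\|_{L^6}^2\le C\|u\|_{H^1}^2$.

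For the $L^1$ piece the relevant difference is $\int\big((w_0)_N\star f-af\big)f$ with $a=\int w_0$. Since $\int(w_0)_N=a$, rescaling gives $(w_0)_N\star f(y)-af(y)=\int w_0(t)\big(f(y-N^{-\beta}t)-f(y)\big)\,dt$, so by Minkowski's inequality and the displacement estimate $\|(w_0)_N\star f-af\|_{L^{3/2}}\le C\|u\|_{H^1}^2\int|w_0(t)|\min(N^{-\beta}|t|,1)\,dt$. The integral tends to $0$ as $N\to\infty$ by dominated convergence (the integrand is bounded by $|w_0|\in L^1$ and vanishes pointwise), uniformly in $u$; pairing with $\|f\|_{L^3}\le C\|u\|_{H^1}^2$ bounds this contribution by $o(1)\|u\|_{H^1}^4$. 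This is the standard part, as in \cite{LewNamRou-15,LewNamRou-15b}.

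For the long-range piece — the genuinely new point — note first that $\mathds{1}_{|x|>\eta}K\notin L^1(\mathbb{R}^3)$, so the objects must be handled through the Calder\'on--Zygmund estimates behind Lemma~\ref{lem_dip_cont} (Theorem 4.12 of \cite{Duo01}): for $f\in L^3$ both $K\star f$ and $K_\eta^\infty\star f:=(\mathds{1}_{|x|>\eta}K)\star f$ lie in $L^3$ with norm $\le C\|\Omega\|_{L^q}\|f\|_{L^3}$, hence the pairings are well defined and $O(\|u\|_{H^1}^4)$. Their difference is $K\star f-K_\eta^\infty\star f=\mathrm{p.v.}\!\int_{|z|<\eta}K(z)f(\cdot-z)\,dz$, and this is where the cancellation $\int_{\mathbb{S}^2}\Omega=0$ enters: for $0<\rho<\eta$ one may subtract $f(y)$ inside $\int_{\rho<|z|<\eta}$, getting the $\rho$-uniform bound
\[
\Big|\int_{\rho<|z|<\eta}K(z)\big(f(y-z)-f(y)\big)\,dz\Big|\le\int_{|z|<\eta}|K(z)|\,|f(y-z)-f(y)|\,dz=:g_\eta(y),
\]
and therefore the same bound for the principal value. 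With $\eta\le 1$ for $N$ large, Tonelli together with the displacement estimate and polar coordinates gives
\begin{align*}
\Big|\int(K\star f)f-\int(K_\eta^\infty\star f)f\Big|
&\le\int g_\eta(y)f(y)\,dy\le\int_{|z|<\eta}|K(z)|\,\|f(\cdot-z)-f\|_{L^{3/2}}\,\|f\|_{L^3}\,dz\\
&\le C\|u\|_{H^1}^4\int_{|z|<\eta}\frac{|\Omega(z/|z|)|}{|z|^2}\,dz=C\,\eta\,\|\Omega\|_{L^1(\mathbb{S}^2)}\,\|u\|_{H^1}^4 ,
\end{align*}
which is $o(1)\|u\|_{H^1}^4$ uniformly in $u$. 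Adding the two pieces and dividing by $\|u\|_{H^1}^4$ yields \eqref{eq_lem_HTNLS}.

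The main obstacle is precisely this conditional convergence: one cannot estimate $\mathds{1}_{|x|>\eta}K$ by absolute values, so the dipolar quadratic forms have to be defined through the singular-integral theory of \cite{Duo01}, and the cancellation of $\Omega$ on the sphere has to be invoked to convert the non-integrable weight $|z|^{-3}$ near the diagonal into the integrable $|z|^{-2}$ after subtracting $f(y)$; it is this subtraction that manufactures the extra power of $\eta=RN^{-\beta}$ which forces the error to vanish.
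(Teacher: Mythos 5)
Your proof is correct, and for the dipolar term it takes a genuinely different route from the paper's. The paper handles the error $\int (K_0^{RN^{-\beta}}\star|u|^2)\,|u|^2$ entirely in Fourier space: by Lemma~\ref{lem_dip_cont} the cancellation of $\Omega$ permits writing $\widehat{K_0^{RN^{-\beta}}}(p)=\int_{\mathbb{S}^2}\int_0^{RN^{-\beta}}\big(\cos(rp\cdot\omega)-1\big)\Omega(\omega)\frac{dr}{r}\,d\sigma(\omega)$, which is $O(|p|N^{-\beta})$, and the resulting moment $\int|p|\,\big|\widehat{|u|^2}(p)\big|^2\le\|u\|_{H^1}^4$ is then controlled by expanding $\widehat{|u|^2}=\widehat{|u|}\star\widehat{|u|}$. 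You stay in physical space: the same cancellation lets you subtract $f(y)$ on each annulus at no cost, converting the non-integrable $|z|^{-3}$ into the integrable weight $|z|^{-2}$ and extracting the factor $\eta=RN^{-\beta}$ through the $W^{1,3/2}$ translation estimate on $f=|u|^2$. The underlying mechanism is identical (the cancellation trades one power of the cutoff radius for one derivative of $|u|^2$) and both yield the $O(N^{-\beta})$ rate for the dipolar contribution; your version avoids the Plancherel manipulations and the slightly delicate $\widehat{|u|}\star\widehat{|u|}$ computation, at the price of having to justify the interchange of the principal value with the annular subtraction, which you do correctly via the $\rho$-uniform domination by $g_\eta$. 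Your treatment of the $L^1$ part via Minkowski's inequality plus dominated convergence is equivalent to the paper's splitting at $|x|=A=N^{\beta/2}$, only without an explicit rate, which the statement of the lemma does not require.
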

We have not stated here any rate of convergence as it depends on the properties of the short range potential $w_0$. The dipolar part of the Hartree energy converges with an error of $O(N^{-\beta})$ towards the dipolar part of the Gross-Pitaevskii energy. As an example, under the extra assumption $|x|w_0(x) \in L^1(\mathbb{R}^{3})$, the short range part of the energy converges also with an error of $O(N^{-\beta})$.

Lemma \ref{lem_from_H_to_nls} shows that $e_{H,N} = e_{GP} + o(1)$ as $N\to\infty$ and ends the proof of Theorem \ref{theo_1}.
\begin{proof}
For the $L^1$ part of the potential, the proof of convergence is the same as in \cite[Lemma 7]{LewNamRou-15b} except that the integration domain is $\mathbb{R}^{3}$ and not $\mathbb{R}^{2}$. We quickly recall it here: let $w_0\in L^1(\mathbb{R}^3)$ and $A > 0$. We have
\begin{multline*}
\left| \iint w_{0,N}(x-y)|u(x)|^2 |u(y)|^2 dxdy - a \int |u(y)|^4 dy \right| \\
= \left|\iint w_{0}(x) |u(y)|^2\left( |u(N^{-\beta}x+y)|^2 - |u(y)|^2\right) dxdy\right| \\
\qquad\qquad\quad \; \leq 2 \iint \mathds{1}_{|x|<A} |w_{0}(x)| |u(y)|^2  \int_0^1| u \nabla u(tN^{-\beta}x+y)\cdot x| N^{-\beta} dt dxdy \\
+ 2\left( \int \mathds{1}_{|x|>A}|w_0(x)| dx\right) \|u\|_{L^4}^4  \\
\leq 2 \left(AN^{-\beta} + 2 \left( \int \mathds{1}_{|x|>A}|w_0(x)| dx\right)\right) \|u\|_{H^1}^4.\qquad\qquad\qquad\qquad\qquad\quad \;\;\,\;
\end{multline*}
Taking $A = N^{\beta/2}$ gives the desired result. For the dipolar part, the same proof will not work since the potential is not integrable both at the origin and at infinity. Nevertheless, as we saw in Lemma \ref{lem_dip_cont}, the convolution by $K$ extends to a bounded operator on $L^2(\mathbb{R}^{3})$ and coincide in Fourier space with the multiplication by a bounded function $\widehat{K} \in L^\infty(\mathbb{R}^{3})$. By definition, for $f\in L^p(\mathbb{R}^{3})$, 
\begin{equation*}
K\star f := \lim_{\substack{\epsilon\to 0 \\ \eta\to\infty}} K_{\varepsilon}^\eta \star f \text{ where } K_{\varepsilon}^\eta(x) = K(x) \mathds{1}_{\varepsilon \leq |x|\leq \eta}(x), \,\,\,\forall x \in \mathbb{R}^{3}.
\end{equation*}
As computed in Lemma \ref{lem_dip_cont}
\begin{equation}\label{K_chapeau}
\widehat{K}^{\eta}_\varepsilon = \int_{\mathbb{S}^2} \int_\varepsilon^\eta \big(\cos(r p \cdot \omega) - 1 \big)\Omega(\omega) \frac{dr}{r} d\sigma(\omega).
\end{equation}
In order to prove (\ref{eq_lem_HTNLS}), we seek to find a bound on the error $\int(K_0^{RN^{-\beta}} \star |u|^2)|u|^2$. First, let us notice that for $p\in\mathbb{R}^{3}$
\begin{align}\label{remind_K}
\left| \widehat{K}_0^{RN^{-\beta}} (p) \right| &\leq C  \int_{\mathbb{S}^2} \int_0^{RN^{-\beta}}  \left| \cos\left(r p \cdot \omega \right) - 1 \right| \frac{dr}{r} d\sigma(\omega) \leq  C |p| N^{-\beta}.
\end{align}
Then,
\begin{align}\label{ineq_rest_K_Hartree}
	\bigg| \int_{\mathbb{R}^{3}} (K_N \mathds{1}_{|x| \geq R_0 N^{-\beta}} \star |u|^2) |u|^2 - \int_{\mathbb{R}^{3}} (K \star |u|^2) |u|^2 \bigg|
	&\leq  \int_{\mathbb{R}^{3}} \widehat{K}_0^{R_0 N^{-\beta}} \left|\widehat{|u|^2}\right|^2 \nonumber \\
	&\leq C N^{-\beta} \int_{\mathbb{R}^{3}} |p| \big|\widehat{|u|^2}(p)\big|^2.
\end{align}
By a density argument we can assume that $u$ is smooth and use that $\widehat{|u|^2} = \widehat{|u|}\star\widehat{|u|}$, we obtain
\begin{align*}
\int_{\mathbb{R}^{3}} |p| \big|\widehat{|u|^2}(p)\big|^2 &\leq \int |p|\widehat{|u|}(p-k)\widehat{|u|}(k)\widehat{|u|}(p-q)\widehat{|u|}(q) dqdkdp \\
&\leq 2 \int |p-k|\widehat{|u|}(p-k)\widehat{|u|}(k)\widehat{|u|}(p-q)\widehat{|u|}(q) dqdkdp,
\end{align*}
where we used that $|p|\leq |p-k| + |k|$ and the symmetry of the integrand. We now define $f = \mathcal{F}^{-1}\left(\widehat{|u|}\right)$ and $g = \mathcal{F}^{-1}\left(|p| \widehat{|u|}\right)$ such that $\|f\|_{H^1} = \|u\|_{H^1}$ and $\|g\|_{L^{2}} \leq \|\nabla u\|_{L^{2}}$. And finally,
\begin{align}\label{ineq_majo_moment_Hartree}
\int_{\mathbb{R}^{3}} |p| \big|\widehat{|u|^2}\big|^2 &\leq \int |g(x)||f(x)|^3 dx \nonumber \\
&\leq \|g\|_{L^{2}} \|f\|_{L^{6}} \leq \|g\|_{L^{2}} \|f\|_{H^1} \leq \|u\|_{H^1}^4
\end{align}
Inequality (\ref{ineq_majo_moment_Hartree}) inserted in (\ref{ineq_rest_K_Hartree}) concludes the proof of Lemma \ref{lem_from_H_to_nls}.
\end{proof}

\section*{Appendices}
\appendix
\section{Proof of Proposition \ref{prop_a_b}}\label{section_proof_prop_1}
Here, we show that in the dipolar case $K=K_{dip}$, any admissible $(a,b)$, i.e satisfying the condition (\ref{hyp1_theo_GP}) of Theorem \ref{theo_GP}, comes from a pair potential in the many-body problem. The proof uses two well chosen potentials $w_{dip}$ and $\widetilde{w_{dip}}$ which give the case of equality in (\ref{hyp1_theo_GP}) respectively for $b>0$ and $b<0$. Higher $a$'s are then achieved by adding a non negative pair function having the correct mass. First, we need the following lemma in order to build a classically stable potential behaving like the dipolar one at large distances.
\begin{lem}[\cite{Lewin-04b}, Dipolar approximation]\label{lem_dip_approx}
There is a constant $C>0$ such that for all $x, h \in \mathbb{R}^{3}$ with $R + h \neq 0$,
\begin{equation*}
\bigg| \frac{1}{|x + h|} - \bigg( \frac{1}{|x|} - \frac{e_x\cdot h}{|x|^2} 
+ \frac{3(e_x \cdot h)^2 - |h|^2)}{2|x|^3} \bigg)\bigg| \leq \frac{C |h|^3}{|x|^3 |x+h|}
\end{equation*}
with $e_x = x/|x|$.
\end{lem}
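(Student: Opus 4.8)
The plan is to recognize the parenthesized expression as the second-order Taylor polynomial of the map $g(y):=|y|^{-1}$ expanded at $y=x$ and evaluated at $y=x+h$, and then to estimate the remainder after splitting into a "near" and a "far" regime. First I would record the derivatives of $g$: one has $\partial_i g=-y_i|y|^{-3}$, $\partial_i\partial_j g=-\delta_{ij}|y|^{-3}+3y_iy_j|y|^{-5}$, and every third derivative $\partial_i\partial_j\partial_k g$ is a finite sum of terms homogeneous of degree $-4$, so that $|D^3g(y)|\le C_0|y|^{-4}$ for $y\neq 0$ (absorbing dimensional constants into $C_0$). Substituting $g(x),Dg(x),D^2g(x)$ into the Taylor expansion and simplifying with $e_x=x/|x|$ shows that $\frac{1}{|x|}-\frac{e_x\cdot h}{|x|^2}+\frac{3(e_x\cdot h)^2-|h|^2}{2|x|^3}$ is exactly the degree-two Taylor polynomial $P_2(h)$ of $h\mapsto g(x+h)$ at $h=0$; hence the left-hand side of the claimed inequality equals the remainder $|g(x+h)-P_2(h)|$ (which is finite by the hypothesis $x+h\neq0$).

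In the near regime $|h|\le |x|/2$, the whole segment $\{x+th:t\in[0,1]\}$ stays inside $\{|y|\ge |x|/2\}$, so $g$ is $C^3$ there and Taylor's formula with Lagrange (or integral) remainder gives $|g(x+h)-P_2(h)|\le \tfrac16\sup_{t\in[0,1]}|D^3g(x+th)|\,|h|^3\le C_1|h|^3|x|^{-4}$ for a dimensional constant $C_1$. Since moreover $|x+h|\le |x|+|h|\le \tfrac32|x|$ in this regime, we have $|x|^{-1}\le \tfrac32|x+h|^{-1}$, hence $|x|^{-4}\le \tfrac32|x|^{-3}|x+h|^{-1}$, which yields the bound $C|h|^3|x|^{-3}|x+h|^{-1}$ with $C=\tfrac32C_1$.

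In the far regime $|h|>|x|/2$, no Taylor expansion is used: I would simply estimate $|g(x+h)-P_2(h)|\le |x+h|^{-1}+|P_2(h)|$, and, using $|e_x\cdot h|\le |h|$, $|P_2(h)|\le |x|^{-1}+|h||x|^{-2}+2|h|^2|x|^{-3}=(|x|^2+|x||h|+2|h|^2)|x|^{-3}$. From $|x|<2|h|$ one gets $|x|^2+|x||h|+2|h|^2<8|h|^2$ and $|x+h|\le |x|+|h|<3|h|$, hence $|P_2(h)|<8|h|^2|x|^{-3}\le 24\,|h|^3|x|^{-3}|x+h|^{-1}$; and $|x|^3<8|h|^3$ gives $|x+h|^{-1}\le 8\,|h|^3|x|^{-3}|x+h|^{-1}$. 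Taking $C$ to be the maximum of the constants produced in the two regimes completes the argument.

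The point requiring care — and the reason for the case split — is that a naive global application of Taylor's theorem with $|D^3g(y)|\lesssim |y|^{-4}$ is not permitted: the segment $[x,x+h]$ can pass arbitrarily close to (or through) the origin, where $g$ is singular and $\int_0^1|x+th|^{-4}\,dt$ diverges. The far regime is exactly where this degeneracy occurs, and there the elementary triangle-inequality bounds already suffice because the right-hand side behaves like $|x+h|^{-1}$ (up to a constant) once $|h|\gtrsim|x|$; the genuine third-order decay in $h$ is needed, and available, only in the near regime, where the segment keeps a definite distance from the origin.
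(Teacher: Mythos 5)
Your argument is correct and complete: the parenthesized expression is indeed the second-order Taylor polynomial of $y\mapsto|y|^{-1}$ at $y=x$, the near-regime Lagrange remainder bound and the far-regime triangle-inequality bound are both carried out with valid constants, and the case split at $|h|=|x|/2$ is exactly what is needed to avoid the segment $[x,x+h]$ approaching the origin. The paper itself offers no proof to compare against --- the lemma is imported verbatim from the cited reference \cite{Lewin-04b} --- and your proof is the standard one for this kind of multipole remainder estimate; the only cosmetic point is that the hypothesis ``$R+h\neq 0$'' in the statement is evidently a typo for ``$x+h\neq 0$'' (with $x\neq 0$ implicit), which your proof correctly treats.
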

We now recall the definition of $w_{dip}$,
\begin{equation}\label{def_w_dip}
w_{dip}(x) = 2 W(x) - W(x+dn) - W(x-dn) \text{ with } W(x) = \frac{1-e^{-|x|}}{|x|}.
\end{equation}
The potential $w_{dip}$ represents the total interaction of two dipoles given by four charged particles interacting via the smeared Coulomb potential $W$ whose position and charge are $(\mathcal{O},1), (dn,-1)$ (for the first dipole) and $(x,1),(x+dn,-1)$ (for the second dipole). The dipolar moment $d$ is a characteristic of the system and is parallel to $n$, the common direction of the dipoles. In particular, by Lemma \ref{lem_dip_approx}, if we fix some $R>0$, the potential $w_{dip}$ can be written 
\begin{equation}\label{def_w_2}
w_{dip} = w_0 + d^2 \mathds{1}_{|x|\geq R} K
\end{equation}
with $w_0\in L^1\cap L^\infty$. Furthermore, $w_{dip}$ is of positive type since
\begin{align}\label{w_dip_pos_type}
\widehat{w_{dip}}(k) &= 2 \widehat{W}(k) -\widehat{W(\cdot + dn)}(k) - \widehat{W(\cdot -dn)}(k) \nonumber \\
&= \frac{8 \pi}{|k|^2(1+ |k|^2)} \left( 1 - \cos(k\cdot dn) \right) \geq 0.
\end{align}

\subsubsection*{Computation of the short range parameter $a$}

Now, let us compute the short range strength $a = \int w_0$, defined in (\ref{def_w_2}). Recall that this quantity is independent of the choice of $R>0$. We have
\begin{equation*}
\int_{\mathbb{R}^{3}} w_0 = \int_{\mathbb{R}^{3}} \left(w_{dip} - d^2\mathds{1}_{|x|>R} K\right) = \lim_{A\to\infty} \int_{B(0,A)} w_{dip},
\end{equation*}
thanks to the cancellation property (\ref{cancellation_prop_0}).
\begin{lem}[Computation of the short range strength]\label{lem_comput_sr}
The short range parameter of $w_{dip}$ is given by
\begin{equation*}
\int_{\mathbb{R}^{3}} w_0 = \frac{4\pi}{3} d^2.
\end{equation*}
\end{lem}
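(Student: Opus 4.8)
The plan is to compute $\int_{\mathbb{R}^3} w_0 = \lim_{A\to\infty}\int_{B(0,A)} w_{dip}$ directly from the definition $w_{dip}(x) = 2W(x) - W(x+dn) - W(x-dn)$ with $W(x) = (1-e^{-|x|})/|x|$. Since each term $W(\cdot + h)$ is individually integrable over $B(0,A)$, I would split the integral and exploit the fact that $W$ is radial: $\int_{B(0,A)} W(x-h)\,dx$ can be evaluated by comparing it to $\int_{B(0,A)} W(x)\,dx$ and controlling the mismatch coming from the shifted ball. The cancellation property of $K$ (i.e.\ $\int_{\mathbb S^2}\Omega = 0$, Lemma \ref{lem_dip_cont}) is what guarantees the limit exists and is independent of $R$, so the computation is legitimate; the value of $\int w_0$ is just the finite limit of the truncated integrals of $w_{dip}$.

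The cleanest route is via the Fourier/Newtonian-potential identity: formally $\int_{\mathbb R^3} w_0 = \widehat{w_0}(0)$, and from \eqref{def_w_2} together with $\widehat{w_{dip}}(k) = \frac{8\pi}{|k|^2(1+|k|^2)}(1-\cos(k\cdot dn))$ computed in \eqref{w_dip_pos_type}, one reads off the behaviour as $k\to 0$: using $1-\cos(k\cdot dn) = \tfrac12 (k\cdot dn)^2 + O(|k|^4)$ gives
\begin{equation*}
\widehat{w_{dip}}(k) = \frac{8\pi}{|k|^2(1+|k|^2)}\cdot \frac{d^2 (k\cdot n)^2}{2} + O(|k|^2) = \frac{4\pi d^2 (k\cdot n)^2}{|k|^2} + O(|k|^2).
\end{equation*}
The leading term $4\pi d^2 (k\cdot n)^2/|k|^2$ is exactly $d^2 \widehat{K_{dip}}(k)$ up to the constant recorded in the Remark after Theorem \ref{theo_GP} (there $\widehat{K_{dip}}(k) = \tfrac{4\pi}{3}(3\cos^2\theta_k - 1)$, and $4\pi(k\cdot n)^2/|k|^2 = 4\pi\cos^2\theta_k$ differs from $d^2\widehat{K_{dip}} + \tfrac{4\pi}{3}d^2$ by the constant $\tfrac{4\pi}{3}d^2$); hence $\widehat{w_0}(k) = \widehat{w_{dip}}(k) - d^2\widehat{K_{dip}}(k) \to \tfrac{4\pi}{3}d^2$ as $k\to 0$, i.e.\ $\int w_0 = \tfrac{4\pi}{3}d^2$.

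To make this rigorous rather than formal I would instead work in physical space: with $\rho = \delta_0 - \delta_{dn}$ (a neutral charge configuration since the cancellation in $w_{dip}$ is built from a neutral dipole), $w_{dip} = (\rho \star \widetilde\rho) \star W$ where $\widetilde\rho(x) = \rho(-x)$, and invoke Lemma \ref{lem_dip_approx} to expand $W(x+h) = \tfrac{1-e^{-|x+h|}}{|x+h|}$ about $x$; the monopole terms cancel by neutrality, the dipole terms cancel by the even/odd structure $W(x+dn)+W(x-dn)$, and the quadrupole term produces precisely $d^2\mathds{1}_{|x|>R}K_{dip}$ plus an $L^1$ remainder $w_0$ whose integral is then computed by integrating the quadratic term against the smeared kernel. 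The main obstacle is the bookkeeping near the origin: $W$ is bounded there but $K_{dip}$ is not, so one must verify carefully that the $\mathds{1}_{|x|\le R}$ piece together with the error term from Lemma \ref{lem_dip_approx} is genuinely integrable and contributes the extra $\tfrac{4\pi}{3}d^2$ — equivalently, tracking the constant that distinguishes $4\pi\cos^2\theta_k$ from $\widehat{K_{dip}}$ correctly. I expect the Fourier computation above to be the quickest way to pin down that constant, with Lemma \ref{lem_dip_approx} supplying the justification that the decomposition \eqref{def_w_2} holds with $w_0\in L^1$.
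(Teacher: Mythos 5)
Your Fourier computation is exactly the paper's proof: since $w_0\in L^1$, $\widehat{w_0}$ is continuous with $\int w_0 = \widehat{w_0}(0)$, and the paper evaluates $\lim_{p\to 0}\bigl(\widehat{w_{dip}}(p) - d^2\widehat{K}(p)\bigr) = \tfrac{4\pi}{3}d^2$ precisely as you do, disposing of the truncation $\mathds{1}_{|x|\leq R}K$ via the bound $|\widehat{K_0^{R}}(p)|\leq C|p|$ from (\ref{remind_K}). The physical-space rigorization you sketch at the end is therefore unnecessary --- the Fourier argument is already rigorous, and the ``bookkeeping near the origin'' you worry about is resolved by that $\mathcal{O}(|p|)$ estimate.
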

\begin{proof}
Recall that $w - d^2\mathds{1}_{|x|>R} K \in L^1$, we can consider
\begin{equation}\label{compute_sr}
\mathcal{F}\left(w_{dip} - d^2\mathds{1}_{|x|>R} K\right) = \widehat{w_{dip}} - d^2\left(\widehat{K} + d^2 \widehat{K} - \widehat{K\mathds{1}_{|x|>R}}\right)
\end{equation}
which is a continuous function and satisfies $\mathcal{F}\left(w_{dip} - d^2\mathds{1}_{|x|>R} K\right)(0) = \int w_0$. Now, we compute the limit when $p\to0$ of the right side of  (\ref{compute_sr}). The third term is an error term and according to (\ref{remind_K}), it is a $\mathcal{O}(p)$ as $p\to 0$. On the other hand, for any $p\in\mathbb{R}^{3}$, we can write the first term as
\begin{align*}
\mathcal{F}(w_{dip})(p) - d^2\mathcal{F}(K)(p) =\, &4\pi \left[ \frac{2}{(1+|p|^2)|p|^2}(1-\cos(p\cdot dn)) - \frac{(p\cdot dn)^2}{|p|^2} + \frac{d^2}{3} \right],
\end{align*}
which tends to $\frac{4\pi}{3}d^2$ as $p\to 0$.
\end{proof}

Now, using $w_{dip}$, we will show that any $a$ and $b$ satisfying assumptions (\ref{hyp1_theo_GP}) are reached for some potential $w$. For this purpose, let us distinguish two cases.
\subsubsection*{Case 1: $b > 0$}
Let $(a,b)$ satisfy condition (\ref{hyp1_theo_GP}) with $b>0$. Let us take $d^2 = b$ and $a\geq 4\pi b / 3$. Define
$$
w = f + w_{dip},
$$
with $f\in L^1(\mathbb{R}^{3})\cap L^{2}(\mathbb{R}^{3})$ of positive type, such that $\int f = a - 4\pi b / 3$. Then, the potential $w$ is classically stable, since $w \geq w_{dip}$ and $w_{dip}$ is. It satisfies the assumptions of Theorem \ref{theo_1}, and so the result holds for $b>0$.
\subsubsection*{Case 2: $b < 0$}
Without loss of generality, let us assume $n = e_3$ and let us take $d^2 = - b$. Define 
$$\widetilde{w_{dip}} = f-w_{dip}$$
with $f$ being the inverse Fourier transform of
\begin{equation*}
\widehat{f}(p) = \frac{4\pi}{\left(1+p^2\right)\left(d^{-2} + \frac{p_1^2 +p_2^2}{4}\right)}.
\end{equation*}
We have $\widehat{f}\in L^1$, since
\begin{equation*}
\int_{\mathbb{R}^3} \frac{1}{\left(1+p^2\right)\left(1+p_1^2+p_2^2\right)} dp_1dp_2dp_3
= \int_{\mathbb{R}} \frac{dp_3}{1+p_3^2} \int_{\mathbb{R}^2} \frac{dp_1 dp_2}{\left(1+p_1^2+p_2^2\right)^{3/2}} < \infty.
\end{equation*}
One can verify that $(1-\Delta)^2 \widehat{f}\in L^1$ and therefore $f \in L^1$. Furthermore, the inequality
\begin{equation*}
2\left(1-\cos(p_3 d)\right)\left(d^{-2}+\frac{p_1^2+p_2^2}{4}\right) \leq p^2,
\end{equation*}
implies $\widehat{\widetilde{w_{dip}}} = \widehat{f}- \widehat{w_{dip}} \geq 0$. Moreover, since $\widehat{f} \in L^{1}$, $f\in L^1\cap L^2$ the potential $\widetilde{w_{dip}}$ satisfies the assumptions of Theorem \ref{theo_1} and its short range parameter is given by $a= \widehat{(f-w_0)} (0) = -8\pi b/3$. 

This concludes the case of equality of Proposition \ref{prop_a_b}, the other cases are obtained by adding positive functions to the potential $\widetilde{w_{dip}}$.
 
\section{Stabilization of the long range potential}\label{section_proof_lem_stab_dip}

Here we show that given a potential $K=\Omega(x/|x|)|x|^{-3}$, with $\Omega$ an even function satisfying the cancellation property (\ref{cancellation_prop_0}) and some regularity condition, there always exists a classically stable potential $w_{stab}$ behaving like $K$ at infinity (Proposition \ref{lem_existence_dip}). Moreover, if some potential $w$ has the long range behavior of $K$, we show that one can always make it classically stable by increasing its value in a fixed neighborhood of the origin (Proposition \ref{lem_stab_dip}).

\begin{prop}[Existence of stable potential]\label{lem_existence_dip}
Let $K = \Omega(x/|x|)|x|^{-3}$ with $\Omega\in W^{4,q}(\mathbb{S}^{2})$, an even function, satisfying the cancellation property (\ref{cancellation_prop_0}), for some $q>1$. Then, for any $R>0$, there exists some continuous function $w_{stab}\in L^p(\mathbb{R}^{3})$, for $1<p\leq \infty$, of positive type (hence classically stable), such that $w_{stab} - \mathds{1}_{|x|>R} K \in L^1(\mathbb{R}^{3})$. Moreover, we can choose $w_{stab}$ such that $|w_{stab}(x) - \mathds{1}_{|x|>R}K(x)| = \mathcal{O}(e^{-\mu |x|})$, for some $\mu>0$, as $x$ tends to infinity.
\end{prop}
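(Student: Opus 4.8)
The plan is to produce $w_{stab}$ by prescribing its Fourier transform so that it is manifestly nonnegative, the key input being that by Lemma~\ref{lem_dip_cont} and the cancellation property (\ref{cancellation_prop_0}) the function $\widehat K(k)=\Phi(k/|k|)$ is \emph{bounded}, homogeneous of degree zero, even, with vanishing mean on $\mathbb{S}^2$. I would look for $w_{stab}$ in the form
\[
 w_{stab}=\mathds{1}_{|x|>R}K+g ,
\]
where $g$ is supported in a fixed ball $B(0,R_1)$ (with $R_1>R$), bounded, and has a jump of $-K|_{|x|=R}$ across $\{|x|=R\}$ so that $w_{stab}$ is continuous. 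With such a $g$, the difference $w_{stab}-\mathds{1}_{|x|>R}K=g$ is compactly supported, hence trivially in $L^1(\mathbb{R}^3)$ and $O(e^{-\mu|x|})$ for any $\mu>0$; moreover $w_{stab}$ coincides with $K$ for $|x|>R_1$ and is bounded, so $w_{stab}\in L^p(\mathbb{R}^3)$ for every $1<p\le\infty$. Thus the whole content is to choose $g$ so that $w_{stab}$ is of positive type, i.e. $\widehat{w_{stab}}=\widehat{\mathds{1}_{|x|>R}K}+\widehat g\ge 0$, where $\widehat g$ is entire of exponential type $R_1$.

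To control the angular structure I would first reduce the singularity of $K$. Expanding $\Omega=\sum_{\ell\ge 2,\ \ell\ \mathrm{even}}\Omega_\ell$ in spherical harmonics (no $\ell=0$ by (\ref{cancellation_prop_0}), no odd $\ell$ since $\Omega$ is even) and using $\Delta\big(\Omega_\ell(x/|x|)|x|^{-1}\big)=-\ell(\ell+1)\,\Omega_\ell(x/|x|)|x|^{-3}$, one gets $K=-\Delta\big(\Psi(x/|x|)|x|^{-1}\big)$ with $\Psi:=(-\Delta_{\mathbb{S}^2})^{-1}\Omega\in W^{6,q}(\mathbb{S}^2)\subset C^2(\mathbb{S}^2)$ by Sobolev embedding on the two–dimensional sphere; this is exactly where the hypothesis $\Omega\in W^{4,q}$ is used. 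Following the model of $w_{dip}$ (see (\ref{def_w_dip}) and Lemma~\ref{lem_dip_approx}), one then replaces the Newtonian kernel $|x|^{-1}$ by a smeared, exponentially screened version such as $(1-e^{-\mu|x|})/|x|$, whose Fourier transform $4\pi/\big(|k|^2(\mu^2+|k|^2)\big)$ is nonnegative; this changes $\Psi(x/|x|)|x|^{-1}$ only by $\Psi(x/|x|)e^{-\mu|x|}/|x|$, whose Laplacian is $O(e^{-\mu|x|})$ away from the origin, and an $L^1\cap L^\infty$ cut-off near the origin removes the remaining $O(|x|^{-2})$ singularity coming from $\Delta_{\mathbb{S}^2}\Psi=-\Omega$.

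The mechanism that makes $\widehat{w_{stab}}\ge 0$ is the following. Near $k=0$ one has $\widehat{w_{stab}}(k)\to \widehat g(0)+\Phi(k/|k|)=a+\Phi(k/|k|)$, where $a:=\int g=\int\!\big(w_{stab}-\mathds{1}_{|x|>R}K\big)$ is the short–range parameter; this limit is nonnegative precisely when $a\ge(\inf\widehat K)_-$, which is the admissibility threshold appearing in (\ref{hyp1_theo_GP}). For the global bound $\widehat{w_{stab}}\ge 0$ one picks $g$ carrying enough mass $a$, matching the jump $-K|_{|x|=R}$ (so that $\widehat{w_{stab}}$ decays faster at infinity than $\widehat{\mathds{1}_{|x|>R}K}$, which is bounded and of size $O(|k|^{-2})$ but sign–changing there), and arranged — via the screening length $\mu$ and the regularity of $\Psi$ — so that the remainder stays nonnegative, mimicking the identity $\widehat{w_{dip}}=\widehat W\cdot|1-e^{ik\cdot dn}|^2$ that makes the dipolar case transparent. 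Once $w_{stab}$ is of positive type it is Hartree–stable by (\ref{hartree_stab}) and, being bounded, classically stable, which yields the last assertion.

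The hard part is the global nonnegativity of $\widehat{w_{stab}}$. Unlike the dipolar potential, whose second–difference structure exhibits $\widehat{w_{dip}}$ as a product of nonnegative factors, for a general angular profile $\Omega$ one cannot realize the $|x|^{-3}$–tail of $K$ as a finite superposition of "squares" of screened potentials: higher Coulomb multipoles decay like $|x|^{-1-2m}$, hence too fast, and the natural higher building blocks (derivatives of $|x|^{\ell-3}$ dictated by $K=-\Delta(\Psi(x/|x|)|x|^{-1})$) are not themselves of positive type. Establishing $\widehat{w_{stab}}\ge 0$ therefore requires a careful quantitative balancing of the bounded, sign–changing function $\widehat K$ against the constant $a$ and the positive envelope coming from the screening, and it is at this point that $\Omega\in W^{4,q}$ and the freedom in the choice of $g$ (its mass, its boundary behaviour at $|x|=R$, and the screening scale) are genuinely needed.
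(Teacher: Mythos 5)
Your setup is sound and you correctly isolate the crux, but the proof is not complete: the one statement that carries all the content, namely $\widehat{w_{stab}}\ge 0$ globally, is exactly the step you leave as ``requires a careful quantitative balancing''. Your own analysis shows why your route stalls: the lifted kernel $\Psi(x/|x|)|x|^{-1}$ with $\Psi=(-\Delta_{\mathbb{S}^2})^{-1}\Omega$ has sign-changing angular part (indeed $\Psi$ also has mean zero on $\mathbb{S}^2$), so screening the Newtonian kernel does not produce a positive-type function, and no finite superposition of ``squares'' is available for general $\Omega$. Moreover, insisting on the sharp cutoff $\mathds{1}_{|x|>R}K$ plus a compactly supported $g$ with a matching jump works against you: the Fourier transform of the sharply truncated tail decays too slowly to be dominated by anything you can add.

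The paper's resolution is more elementary and bypasses the angular lifting entirely. Take a \emph{smooth} truncation $g=(1-\chi_R)K$ with $\chi_R$ a standard cutoff. Applying $(1-\Delta)^2$ to $g$ produces a sum of terms that are either in $L^1(\mathbb{R}^3)$ (the commutator terms supported where $\nabla\chi_R\neq 0$, and the higher derivatives of $K$, which are homogeneous of degree $\le -5$ at infinity) or are Calder\'on--Zygmund kernels whose angular parts still satisfy the cancellation property, hence have bounded Fourier transform by Lemma \ref{lem_dip_cont}. This is where $\Omega\in W^{4,q}(\mathbb{S}^2)$ enters, and it yields the quantitative decay
\begin{equation*}
|\widehat{g}(p)|\;\le\;\frac{C_R\,\|\Omega\|_{W^{4,q}}}{(1+|p|^2)^2}.
\end{equation*}
One then sets $w_{stab}=g+\psi^{\mu}_1$ with $\psi^{\mu}_{1}(x)=\mu\,(e^{-|x|/2}-e^{-|x|})/|x|$, whose Fourier transform is explicit and satisfies $\widehat{\psi^{\mu}_1}(p)\ge \tfrac34\mu\,(1+|p|^2)^{-2}$; choosing $\mu\ge \tfrac43 C_R\|\Omega\|_{W^{4,q}}$ gives $\widehat{w_{stab}}\ge 0$ pointwise, and the remainder $w_{stab}-\mathds{1}_{|x|>R}K=\psi^\mu_1+(\chi_R-\mathds{1}_{|x|\le R})K$ is bounded, compactly supported up to the exponentially decaying $\psi^\mu_1$, giving the $\mathcal{O}(e^{-\mu|x|})$ tail. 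The key idea you are missing is that the sign-changing, bounded function $\widehat{g}$ actually decays like $(1+|p|^2)^{-2}$ once the cutoff is smooth and $\Omega$ has four derivatives, so it can be dominated \emph{pointwise in Fourier space} by a single explicit positive-type potential with a large coupling; no matching of the angular structure, of the mass $a$, or of boundary jumps is needed.
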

\begin{prop}[Long range stability]\label{lem_stab_dip}
Let $K$ be as in Proposition \ref{lem_existence_dip}, $R>0$ and $w\in L^1(\mathbb{R}^{3})$ with $(w)_- \in L^\infty(\mathbb{R}^{3})$ and such that $w(x) - \mathds{1}_{|x|>R} K(x) \in L^1(\mathbb{R}^{3})$. For all $\delta >0$ there exists $\eta > 0$ such that  $w + \eta \mathds{1}_{|x|< \delta}$ is classically stable.
\end{prop}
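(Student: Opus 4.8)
The plan is to peel off from $w$ a piece of positive type — which is automatically classically stable — and to dominate the integrable remainder by the added core term $\eta\mathds 1_{|x|<\delta}$ via an occupation‑number (lattice) argument. First apply Proposition~\ref{lem_existence_dip} with the given $R$ to obtain a continuous, bounded $w_{\mathrm{stab}}$ of positive type with $w_{\mathrm{stab}}-\mathds 1_{|x|>R}K\in L^1(\mathbb{R}^{3})$ and $w_{\mathrm{stab}}-\mathds 1_{|x|>R}K=\mathcal{O}(e^{-\mu|x|})$. Put $g:=w-w_{\mathrm{stab}}=(w-\mathds 1_{|x|>R}K)-(w_{\mathrm{stab}}-\mathds 1_{|x|>R}K)\in L^1(\mathbb{R}^{3})$; since $(w)_-\in L^\infty$ and $w_{\mathrm{stab}}$ is bounded, $g\ge -B$ pointwise for some $B$, so $(g)_-\in L^1(\mathbb{R}^{3})\cap L^\infty(\mathbb{R}^{3})$. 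Positive type gives, for every configuration,
\[
0\ \le\ \sum_{i,j=1}^N w_{\mathrm{stab}}(x_i-x_j)\ =\ 2\sum_{i<j}w_{\mathrm{stab}}(x_i-x_j)+N\,w_{\mathrm{stab}}(0),
\]
hence $\sum_{i<j}w_{\mathrm{stab}}(x_i-x_j)\ge -\tfrac12 w_{\mathrm{stab}}(0)\,N$. It therefore suffices to produce $\eta>0$ and $C>0$ with
\[
\sum_{1\le i<j\le N}(g)_-(x_i-x_j)\ \le\ \eta\,\#\{(i,j):i<j,\ |x_i-x_j|<\delta\}\ +\ C\,N
\]
uniformly in $N$ and in the configuration, for then $w+\eta\mathds 1_{|x|<\delta}$ is classically stable with constant $\tfrac12 w_{\mathrm{stab}}(0)+C$.

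\textbf{The occupation‑number estimate.} Tile $\mathbb{R}^{3}$ by half‑open cubes $\{Q_k\}$ of side $\ell$ with $\ell\sqrt3<\delta$, and set $n_k:=\#\{i:x_i\in Q_k\}$. Two particles in one cube lie at distance $<\delta$, so $\sum_k\binom{n_k}{2}\le\#\{|x_i-x_j|<\delta\}$ and thus $\textstyle\sum_k n_k^2\le 2\,\#\{|x_i-x_j|<\delta\}+N$. Moreover, for any threshold $T\ge\delta$, a pair at distance $<T$ lies in two cubes at distance $<T$, each cube having at most $C(T/\ell)^3$ such partner cubes, so from $n_kn_l\le\tfrac12(n_k^2+n_l^2)$,
\[
\#\{(i,j):i\ne j,\ |x_i-x_j|<T\}\ \le\ C\,(T/\ell)^3\,\textstyle\sum_k n_k^2\ \le\ C\,(T/\ell)^3\bigl(2\,\#\{|x_i-x_j|<\delta\}+N\bigr).
\]
Now decompose $(g)_-=(g)_-\mathds 1_{|x|<\delta}+\sum_{m\ge0}(g)_-\mathds 1_{2^m\delta\le|x|<2^{m+1}\delta}$. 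The first term contributes at most $\|(g)_-\|_\infty\,\#\{|x_i-x_j|<\delta\}$, which will be absorbed by the core. On the $m$‑th shell $(g)_-\le\sigma_m:=\sup_{|x|\ge 2^m\delta}(g)_-(x)$, so the shells contribute at most $\sum_{m\ge0}\sigma_m\,\#\{(i,j):i<j,\ |x_i-x_j|<2^{m+1}\delta\}\le C\bigl(\sum_{m\ge0}\sigma_m\,8^{m}\bigr)\bigl(\#\{|x_i-x_j|<\delta\}+N\bigr)$ by the estimate above. Since the correction $w-\mathds 1_{|x|>R}K$ decays at infinity — with the part $w_{\mathrm{stab}}-\mathds 1_{|x|>R}K$ decaying exponentially — the tail $\sigma_m$ decays, the series $\sum_m\sigma_m 8^m$ converges, and we obtain $\sum_{i<j}(g)_-(x_i-x_j)\le C'\bigl(\#\{|x_i-x_j|<\delta\}+N\bigr)$.

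\textbf{Conclusion and main obstacle.} Taking $\eta:=C'$ (depending only on $w$, $R$, $\delta$) gives the displayed inequality of the first paragraph with $C=C'$, and classical stability of $w+\eta\mathds 1_{|x|<\delta}$ follows. The substantive step is this last estimate: one must control the sum of $(g)_-$ over \emph{far‑apart} pairs by $O(N)$ plus a multiple of the number of $\delta$‑close pairs, and this rests on the decay at infinity of the $L^1$ correction $w-\mathds 1_{|x|>R}K$ — the bare fact that $(w)_-\in L^\infty$ is not by itself what drives the bound. This is precisely why one invokes Proposition~\ref{lem_existence_dip}, which furnishes a comparison potential of positive type whose discrepancy from the long‑range tail is exponentially small, rather than an arbitrary classically stable majorant; the remaining ingredients (the cube tiling and the inequality $\sum_k n_k^2\le 2\#\{|x_i-x_j|<\delta\}+N$) are elementary.
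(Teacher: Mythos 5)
Your first step coincides with the paper's: both proofs invoke Proposition~\ref{lem_existence_dip} to split off a bounded, continuous, positive-type $w_{\mathrm{stab}}$ that matches $\mathds{1}_{|x|>R}K$ up to an exponentially small error, and both use positive-definiteness to get $\sum_{i<j}w_{\mathrm{stab}}(x_i-x_j)\geq -\tfrac12 w_{\mathrm{stab}}(0)N$. After that the routes genuinely diverge. The paper feeds the remainder plus the added core into Lemma~\ref{lem_stab_longue_portee}, i.e.\ it constructs (following Ruelle) an explicit integrable minorant $\psi=\psi^\mu_\lambda-2\psi_1$ with $\widehat\psi\geq0$ lying below the remainder, so that stability again comes from positive-definiteness. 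You instead control the negative remainder combinatorially: the cube tiling, the inequality $\sum_k n_k^2\leq 2\,\#\{|x_i-x_j|<\delta\}+N$, and the dyadic-shell decomposition reduce everything to the summability of $\sum_m\sigma_m 8^m$, after which the close-pair count is absorbed by the core $\eta\mathds{1}_{|x|<\delta}$. This superstability-type counting argument is more elementary and self-contained than Lemma~\ref{lem_stab_longue_portee} (no Fourier-side construction, no smallness condition like (\ref{cond_stab})), and the pieces you do write out are correct.

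There is, however, one step that does not follow from the stated hypotheses: the convergence of $\sum_m\sigma_m 8^m$ requires the pointwise decay $(g)_-(x)=\mathcal{O}(|x|^{-3-\epsilon})$, and you justify this by asserting that $w-\mathds{1}_{|x|>R}K$ ``decays at infinity.'' The hypothesis only gives $w-\mathds{1}_{|x|>R}K\in L^1(\mathbb{R}^{3})$, and an $L^1$ function bounded below need not tend to zero pointwise, let alone at a rate making $\sigma_m 8^m$ summable; mere decay to zero would not suffice either. You correctly identify this decay as the crux, but the part of $g$ you can actually control is $w_{\mathrm{stab}}-\mathds{1}_{|x|>R}K=\mathcal{O}(e^{-\mu|x|})$, not the $L^1$ correction $w-\mathds{1}_{|x|>R}K$. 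To be fair, the paper's own proof quietly makes the same strengthening: it writes $w=w_0\mathds{1}_{|x|\leq R}+\mathds{1}_{|x|>R}K$, i.e.\ it assumes $w$ coincides with $K$ outside $B(0,R)$, in which case $g$ is bounded on $B(0,R)$ and exponentially small outside, only finitely many shells have $\sigma_m$ of order one, and your series converges trivially. So your argument closes under the same implicit assumption the paper uses, but as written against the literal hypothesis ``$w-\mathds{1}_{|x|>R}K\in L^1$'' the decay claim is a gap; you should either add a pointwise decay hypothesis on $w-\mathds{1}_{|x|>R}K$ or state explicitly that you treat the case where $w$ agrees with $K$ outside a ball.
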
 
To show the properties above we need the following lemma whose proof will be postponed.
\begin{lem}[Stabilisation by a short range potential] \label{lem_stab_longue_portee}
Let $w_0 \in L^{1}_{\loc}(\mathbb{R}^{3})$, $w_0 > \eta > 0$ and $w_1 \in L^\infty(\mathbb{R}^{3})$ such that 
\begin{equation}\label{hyp_stab_w_1}
|w_1(x)| \leq C_0 1/|x|^{3+\epsilon}
\end{equation}
for all $x \in \mathbb{R}^{3}$, where $\eta, \epsilon, C_0 > 0$.
Let us define for $R > 0$
$$
	\phi_{R} = w_0 \mathds{1}_{|x| \leq R} + w_1 \mathds{1}_{|x|\geq R}.
$$
Then, there is come constant $C_{\varepsilon}$, given by (\ref{Cuniv}), such that, if 
\begin{equation}\label{cond_stab}
C_{\varepsilon} \leq \frac{R^{2+\epsilon} \eta}{C_0(1+R^{-6\left(1+\tfrac{2}{\varepsilon}\right)})},
\end{equation}
then, there exists $\psi \in L^1(\mathbb{R}^{3})$ such that $\phi_{R_0}\geq \psi$, $\widehat{\psi} \geq 0$ and $\widehat{\psi} \in L^1$. In particular, $\phi_{R_0}$ is classically stable.
\end{lem}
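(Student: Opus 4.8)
The plan is to exhibit a single function $\psi$ of positive type that stays below $\phi_R$; classical stability of $\phi_R$ is then automatic. Concretely I would build $\psi$ as a rescaled Bessel potential (the repulsive core) minus a \emph{smooth}, integrable majorant of the slowly decaying tail $C_0|x|^{-3-\epsilon}\mathds{1}_{\{|x|\ge R\}}$, and the whole difficulty will be to keep $\widehat\psi\ge 0$, which is exactly what forces the quantitative condition (\ref{cond_stab}).

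\textbf{Step 1 (reduction).} Suppose $\psi\in L^1(\mathbb{R}^3)$ satisfies $\widehat\psi\ge 0$, $\widehat\psi\in L^1(\mathbb{R}^3)$ and $\psi\le\phi_R$ pointwise. Then $\psi$ is continuous, even, bounded, with $\psi(0)=(2\pi)^{-3}\|\widehat\psi\|_{L^1}$, and for every $x_1,\dots,x_N$,
\begin{multline*}
\sum_{1\le i<j\le N}\phi_R(x_i-x_j)\ \ge\ \sum_{1\le i<j\le N}\psi(x_i-x_j)\\
=\ \frac{1}{2(2\pi)^3}\int_{\mathbb{R}^3}\widehat\psi(k)\,\Bigl|\sum_{j=1}^N e^{ik\cdot x_j}\Bigr|^2dk\ -\ \frac N2\,\psi(0)\ \ge\ -\frac N2\,\psi(0),
\end{multline*}
which is the stability estimate (cf.\ (\ref{classical_stab}), (\ref{hartree_stab})) for the potential $\phi_R$ with constant $\psi(0)/2$. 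So everything reduces to constructing such a $\psi$.

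\textbf{Step 2 (the candidate).} Using $w_0>\eta$ and $|w_1|\le C_0|x|^{-3-\epsilon}$ from (\ref{hyp_stab_w_1}) we first drop to the profile $\phi_R\ge\eta\,\mathds{1}_{\{|x|<R\}}-C_0|x|^{-3-\epsilon}\mathds{1}_{\{|x|\ge R\}}=:\Phi$, and it suffices to find $\psi\le\Phi$ of positive type. Fix $\varepsilon>0$ and let $\theta$ be a fixed nonnegative radially decreasing function of positive type whose Fourier transform is \emph{strictly positive and only polynomially decaying} — concretely the Bessel potential of order $3+\varepsilon$, so $\widehat\theta(k)=(1+|k|^2)^{-(3+\varepsilon)/2}$, $\theta(0)<\infty$, $\int\theta=1$, $\widehat\theta\in L^1(\mathbb{R}^3)$ and $\theta(x)=O(e^{-|x|/2})$. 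Set $\theta_\lambda(x):=\theta(\lambda x)$. Next let $V_R\in C^\infty(\mathbb{R}^3)$ be nonnegative, equal to $|x|^{-3-\epsilon}$ on $\{|x|\ge R\}$ and vanishing on $\{|x|\le R/2\}$, obtained by rescaling a fixed profile $U$ (so $V_R(x)=R^{-3-\epsilon}U(x/R)$); then $V_R$ and all its derivatives lie in $L^1(\mathbb{R}^3)$ and, crucially, $\widehat{V_R}$ decays faster than any power of $|k|$, whereas the raw tail $|x|^{-3-\epsilon}\mathds{1}_{\{|x|\ge R\}}$ has a jump across $|x|=R$ and its Fourier transform only decays like $|k|^{-2}$ — too slowly to be dominated. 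Finally put
\[
\psi\ :=\ \eta'\,\theta_\lambda-2C_0\,V_R,\qquad \eta':=\frac{\eta}{\theta(0)},
\]
with $\lambda>0$ to be fixed; then $\psi,\widehat\psi\in L^1(\mathbb{R}^3)$.

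\textbf{Step 3 (the two verifications).} First, $\psi\le\Phi$: on $\{|x|<R\}$, $\psi\le\eta'\theta_\lambda\le\eta'\theta_\lambda(0)=\eta'\theta(0)=\eta$ (using $\theta_\lambda$ radially decreasing and $V_R\ge0$); on $\{|x|\ge R\}$, $\psi(x)=\eta'\theta_\lambda(x)-2C_0|x|^{-3-\epsilon}\le -C_0|x|^{-3-\epsilon}$ as soon as $\eta'\theta_\lambda(x)\le C_0|x|^{-3-\epsilon}$ there, which holds once $\lambda$ is taken of order $R^{-1}\bigl(1+\log^{+}(\eta' R^{3+\epsilon}/C_0)\bigr)$, by the exponential decay of $\theta$; we fix such a $\lambda$. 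Second — and this is the crux — one needs $\widehat\psi=\eta'\widehat{\theta_\lambda}-2C_0\widehat{V_R}\ge 0$, i.e.\ $\eta'\lambda^{-3}(1+|k|^2/\lambda^2)^{-(3+\varepsilon)/2}\ge 2C_0\widehat{V_R}(k)$ for all $k$ (nothing to prove when $\widehat{V_R}(k)\le0$). One splits into $|k|\lesssim\lambda$, where $\widehat{\theta_\lambda}(k)\gtrsim\lambda^{-3}$ and $\widehat{V_R}$ is controlled by $\min\bigl(\|V_R\|_{L^1},\,\|(1-\Delta)^{m}V_R\|_{L^1}|k|^{-2m}\bigr)$ with, by scaling, $\|(1-\Delta)^{m}V_R\|_{L^1}\lesssim_{m}R^{-\epsilon}(1+R^{-2m})$, and into $|k|\gtrsim\lambda$, where $\widehat{\theta_\lambda}(k)\gtrsim\lambda^{\varepsilon}|k|^{-3-\varepsilon}$ and $\widehat{V_R}$ decays super‑polynomially, so a sufficiently large number $m=m_\varepsilon$ of derivatives suffices. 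Feeding in the admissible value of $\lambda$, this becomes a lower bound on $\eta'$, equivalently on $\eta=\eta'\theta(0)$, of the shape $\eta\gtrsim C_\varepsilon C_0\,(1+R^{-6(1+2/\varepsilon)})\,R^{-2-\epsilon}$ — that is, exactly (\ref{cond_stab}), with (taking $m_\varepsilon:=\lceil 3\varepsilon^{-1}\rceil$)
\begin{equation}\label{Cuniv}
C_\varepsilon\ :=\ 2^{3+\varepsilon}\,\theta(0)\,\Bigl(\|U\|_{L^1(\mathbb{R}^3)}+\|(1-\Delta)^{m_\varepsilon}U\|_{L^1(\mathbb{R}^3)}\Bigr).
\end{equation}
Thus, under (\ref{cond_stab}), $\psi$ has all the required properties and Step~1 gives the classical stability of $\phi_R$.

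\textbf{Main obstacle.} The entire weight of the argument sits in Step~3: the concentration scale $\lambda$ must be taken \emph{large} so that the core $\eta'\theta_\lambda$ is pointwise negligible against the $|x|^{-3-\epsilon}$ tail outside $B(0,R)$, yet that same concentration costs a factor $\lambda^{-3}$ in $\widehat{\theta_\lambda}$ and hence makes the Fourier positivity $\widehat\psi\ge0$ harder; the two requirements are compatible only when the core height $\eta$ is large compared with $C_0$ and $R$, and pinning down ``how large'' is precisely (\ref{cond_stab}). A subsidiary technical point is to choose the smoothed tail profile $U$, and the derivative order $m_\varepsilon\sim\varepsilon^{-1}$, so that all the norms $\|(1-\Delta)^{m}V_R\|_{L^1}$ scale as controlled powers of $R$ — this is what produces the exponent $6(1+2/\varepsilon)$.
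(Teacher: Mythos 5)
Your construction follows the same strategy as the paper's: minorize $\phi_R$ by $\psi=(\text{concentrated positive-type core})-2\times(\text{smoothed majorant of the tail})$, check the pointwise inequality $\psi\le\phi_R$ separately inside and outside $B(0,R)$, and obtain $\widehat\psi\ge0$ by comparing the decay of the two Fourier transforms. The paper merely makes different choices of building blocks: a difference of Yukawa potentials $\mu\,(e^{-\lambda|x|/2}-e^{-\lambda|x|})/|x|$ for the core, whose Fourier transform is bounded below by $\tfrac34\lambda^2\mu(|p|^2+\lambda^2)^{-2}$, and the convolution of the exact tail with a mollifier at scale $R/2$, whose Fourier transform is controlled by $(1+p^2)^{-3}$ using three Laplacians. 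Your Steps 1 and 2 and the pointwise comparison in Step 3 are sound.

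The genuine gap is in the last sentence of Step 3. Your core $\theta_\lambda$ is bounded, so forcing $\eta'\theta(\lambda x)\le C_0|x|^{-3-\epsilon}$ on $\{|x|\ge R\}$ requires $\lambda\sim R^{-1}\bigl(1+\log^{+}(\eta' R^{3+\epsilon}/C_0)\bigr)$, and feeding this into your low-frequency constraint $\eta'\lambda^{-3}\gtrsim C_0\|V_R\|_{L^1}\sim C_0R^{-\epsilon}$ yields a self-referential condition of the form $\eta R^{3+\epsilon}/C_0\gtrsim\bigl(1+\log^{+}(\eta R^{3+\epsilon}/C_0)\bigr)^3$ --- not the power-law condition (\ref{cond_stab}) with the exponents $R^{2+\epsilon}$ and $6(1+2/\varepsilon)$, and not the constant you display. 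You assert the match with (\ref{cond_stab}) rather than derive it. The gap is repairable: your condition amounts to $\eta R^{3+\epsilon}\ge T_\varepsilon C_0$ for some threshold $T_\varepsilon$, and since $\inf_{R>0}R\bigl(1+R^{-6(1+2/\varepsilon)}\bigr)>0$ this is indeed implied by (\ref{cond_stab}) once the constant there is taken large enough; but that reduction must be written out, and the explicit value of $C_\varepsilon$ you propose is then not the one your argument supports. The paper avoids the logarithm altogether by using a core with a $1/|x|$ singularity at the origin, for which the domination outside $B(0,R)$ costs only $\mu\le C_\varepsilon C_0\lambda^{2+\varepsilon}$, a pure power of $\lambda$; balancing this against the Fourier constraint $\mu R^{\epsilon}\gtrsim\lambda^2C_0\|(1-\Delta)^3\alpha_R\|_{L^1}$ and the cap $\mu\le\eta$ is precisely what produces the exponents appearing in (\ref{cond_stab}) and (\ref{Cuniv}).
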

If Lemma \ref{lem_stab_longue_portee} and Proposition \ref{lem_existence_dip} hold, then Proposition \ref{lem_stab_dip} is easily verified.
\begin{proof}[Proof of Proposition \ref{lem_stab_dip}]
Let us consider $w_{stab}$ as in Proposition \ref{lem_existence_dip}.  Let us write $w = w_0\mathds{1}_{|x| \leq R} + \mathds{1}_{|x|>R} K$ and rewrite $w$ as,
\begin{equation}\label{w_decompo_stab}
w = w_0\mathds{1}_{|x| \leq R} - w_{stab}\mathds{1}_{|x|<R} + (K-w_{stab}) \mathds{1}_{|x|>R} + w_{stab}.
\end{equation}
The second term in (\ref{w_decompo_stab}) is smaller than $\|w_{stab}\|_{L^\infty}$, the third term is a $\mathcal{O}(e^{-|x|})$ and the last term is of positive type according to Proposition \ref{lem_existence_dip}. Then, by increasing $w_0$ sufficiently in some fixed neighborhood of the origin, one can make $w$ classically stable by using Lemma \ref{lem_stab_longue_portee}. The idea of the proof is based on a result of Ruelle \cite[Proposition 3.2.8]{Ruelle}.
\end{proof}
Now we prove Lemma \ref{lem_stab_longue_portee} and  Proposition \ref{lem_existence_dip}.
\begin{proof}[Proof of Lemma \ref{lem_stab_longue_portee}.]
The strategy of the proof is to increase the function $\phi_R$, by adding a function $\psi_1$ whose Fourier transform is well controlled, and such that $\phi_R + \psi_1$ is positive. We will then look for conditions on $R,\eta$ and $C_0$ to bound $\phi_R + \psi_1$ from below by a function of positive type $\psi_2$, such that $\psi:=\psi_2 - \psi_1$ is classically stable.\\
Let $\alpha$ be a smooth positive function with support in $B(0,1)$ and such that $\int \alpha = 1$. Let us denote $\alpha_R(x) = (R/2)^{-3} \alpha(2x/R).$ For $|x|>R$, we have
\begin{align}\label{ineq_stab_gen}
\int_{B(0,R/2)} \frac{1}{|x-x'|^{3+\epsilon}} \alpha_R(x') dx' &\geq \frac{1}{\left(|x|+R/2\right)^{3+\epsilon}} \geq \frac{R^{3+\epsilon}}{\left(R+R/2\right)^{3+\epsilon}} \frac{1}{|x|^{3+\epsilon}} \nonumber \\ 
&\geq \left(\frac{2}{3}\right)^{3+\varepsilon} \frac{1}{|x|^{3+\varepsilon}}.
\end{align}
Let us define 
$$\psi_1(x) = C_0\left(\frac{3}{2}\right)^{3+\varepsilon}  \left(\frac{1}{|x'|^{3+\epsilon}} \mathds{1}_{|x'|\geq R/2}\right) \star \alpha_R(x),$$
 from (\ref{hyp_stab_w_1}) and (\ref{ineq_stab_gen}) we deduce $$|w_1(x)| \mathds{1}_{|x|>R} \leq C_0\frac{1}{|x|^{3+\epsilon}} \mathds{1}_{|x|>R} \leq \psi_1(x)$$ and in particular we obtain
\begin{equation}\label{psi_1}
 \psi_1 \leq w_1\mathds{1}_{|x|>R} + 2\psi_1.
\end{equation}
We now look for a bound on $\widehat{\psi_1}$, we have
\begin{equation*}
\|(1+p^2)^3 \widehat{\psi_1}\|_{\infty} \leq \|(1-\Delta)^3 \psi_1\|_{L^{1}}
\end{equation*}
from which we obtain
\begin{equation}\label{ineq_stab_psi_1_four}
| \widehat{\psi_1}(p) | \leq \frac{C_0 3^{3+\epsilon} \|(1-\Delta)^3 \alpha_R \|_{L^1} \|\mathds{1}_{|x|\geq R} 1/|x|^{3+\epsilon}\|_{L^1} }{2^{3+\epsilon} (1+p^2)^3} = \frac{4\pi C_0 3^{3+\epsilon} \|(1-\Delta)^3 \alpha_R \|_{L^1}}{2^{3+\epsilon} R^{\varepsilon}(1+p^2)^3}.
\end{equation}
Let us define for $\mu>0,\;\lambda > 1$, 
$$\psi^{\mu}_\lambda (x) = \mu \frac{e^{- \lambda|x|/2} - e^{- \lambda|x|}}{|x|}.$$
We have
\begin{equation}\label{ineq_stab_psi_mu_four}
\widehat{\psi^{\mu}_\lambda} (p) = 
\mu\left( \frac{1}{|p|^2 + \lambda^2/4} - \frac{1}{|p|^2 + \lambda^2} \right) =
\frac{\frac{3}{4}\lambda^2\mu}{(|p|^2+\lambda^2/4)(|p|^2 + \lambda^2)}
\geq \frac{\frac{3}{4}\lambda^2\mu}{(|p|^2+\lambda^2)^2}.
\end{equation}
We want to find a condition on $\lambda$ and $\mu$ such that $\psi^\mu_\lambda(x) \leq \psi_1(x)$ for $|x|>R$, and since it is sufficient to have
\begin{equation*}
\psi^{\mu}_\lambda(x) = \mu \frac{e^{- \lambda|x|/2} - e^{- \lambda|x|}}{|x|} \leq \mu \frac{e^{-\lambda|x|/2}}{|x|} \leq C_0 \frac{1}{|x|^{3+\epsilon}} \leq \psi_1(x) \quad \forall \; |x| > R,
\end{equation*}
it is then sufficient to assume $\mu \leq C_0 C_{\varepsilon} \lambda^{2+\varepsilon}$ where $C_{\varepsilon} = (4+2\varepsilon)^{2+\varepsilon}e^{-(2+\varepsilon)}$. Therefore, with (\ref{psi_1}) we obtain
\begin{equation} \label{stab_maj}
\psi^{\mu}_\lambda(x)-2\psi_1(x) \leq w_1(x), \quad \forall\; |x|>R.
\end{equation}
On the other hand for $\mu \leq \eta$, we have
\begin{equation*}
\psi^\mu_\lambda(x) - 2\psi_1(x) \leq \mu \leq \eta \leq w_0(x), \quad \forall \;|x|<R.
\end{equation*}
From (\ref{ineq_stab_psi_1_four}) and (\ref{ineq_stab_psi_mu_four}) we deduce that if $\mu R^{\epsilon} \geq 3^{4} \pi  \lambda^2  C_0 \|(1-\Delta)^3 \alpha_R \|_{L^1}$ and $\lambda > 1$, then
\begin{equation*}
0 \leq \widehat{\psi^{\mu}_\lambda} - 2 \widehat{\psi_1}.
\end{equation*}
Gathering the conditions, we deduce that if
\begin{equation}\label{stab_condition_0}
3^4 \pi \frac{\|(1-\Delta)^3 \alpha_R \|_{L^1}^{1 + \tfrac{2}{\varepsilon}}}{(4+2\varepsilon)^{\tfrac{4+2\varepsilon}{\epsilon}} e^{-\tfrac{4+2\varepsilon}{\epsilon}}} < \frac{R^{2+\varepsilon} \eta}{C_0}
\end{equation}
then we can find some $\lambda,\mu> 0$ such that $\psi := \psi^{\mu}_\lambda -2\psi_1 \leq \phi_{R}$ where $\psi$ fulfills the assumptions of the lemma. We define
\begin{equation}\label{Cuniv}
C_{\varepsilon,0}= \frac{3^4 \pi e^{\tfrac{4+2\varepsilon}{\epsilon}}}{(4+2\varepsilon)^{\tfrac{4+2\varepsilon}{\epsilon}}} 
\sup_{\substack{\\ R>0}} \inf_{\substack{\\ \alpha}} \frac{\|(1-\Delta)^3 \alpha_{R} \|_{L^1}^{1 + \tfrac{2}{\varepsilon}}}{1+R^{-6\left(1+\tfrac{2}{\varepsilon}\right)}} .
\end{equation}
\end{proof}

\begin{proof}[Proof of Proposition \ref{lem_existence_dip}.]
Let $\chi$ be a smooth function such that $\chi(x) = 1$ for $|x|\leq 1/2$ and $\chi(x) = 0$ for $|x|\geq 1$. Let us denote by $\chi_R (x) = \chi(x/R)$ and $g = (1-\chi_R)K$. One can verify that $\partial^\alpha \Omega$ still satisfies the cancellation property (\ref{cancellation_prop_0}) for any multi-index $|\alpha| \leq 4$. One can also notice that $\partial^\alpha g \in L^q(\mathbb{R}^{3})$ is a sum of terms in $L^1(\mathbb{R}^{3})$ and terms satisfying the assumptions of Lemma \ref{lem_dip_cont} and hence of bounded Fourier transform. Therefore $\widehat{(1-\Delta)^2 g} = (1+|p|^2)^2 \widehat{g}$ and
\begin{equation*}
(1+|p|^2)^2 |\widehat{g}(p)| \leq \|(1-\Delta)^2 g\|_{L^1} \leq C_R \|\Omega\|_{W^{4,q}}.
\end{equation*}
Now, for $\mu >0$, consider $\psi^\mu_1 \in L^1(\mathbb{R}^{3})\cap L^\infty(\mathbb{R}^{3})$ as defined in the proof of Lemma \ref{lem_stab_longue_portee}. It satisfies
\begin{equation*}
\widehat{\psi^{\mu}_1} (p)
\geq \frac{\frac{3}{4}\mu}{(1+ |p|^2)^2}.
\end{equation*}
Taking $\mu \geq 4 C_R\|\Omega\|_q /3$, we define $w_{stab} = \psi^\mu_1 + (1-\chi_R)K$. Thanks to Sobolev embeddings $W^{4,q}(\mathbb{S}^{2}) \subset C^{1}(\mathbb{R}^{3})$ and we have $w_{stab} - \mathds{1}_{|x|>R}K \in L^\infty(\mathbb{R}^{3})$. Moreover $\psi_1^\mu (x) = \mathcal{O}(e^{-\mu|x|})$ as $x \to\infty$, this concludes the proof of Proposition \ref{lem_existence_dip}.
\end{proof}

\section{Proof of Theorem \ref{theo_uniq}: uniqueness} \label{section_proof_lem_uniq_0}
\subsubsection*{Proof of \textit{1})}

We use the implicit function theorem to prove the uniqueness of the ground state of the Gross-Pitaevskii energy in some neighborhood of $\varphi = 0$. This method does not allow to give an explicit value of $\varphi_0$ in the lemma. 

Let us fix some admissible parameters $(a,b)$, i.e. satisfying condition (\ref{hyp1_theo_GP}), and let us define 
$$\mathcal{E}(u,\varphi) := \int_{\mathbb{R}^{3}} |(\nabla + i \varphi A)u|^2 + \int_{\mathbb{R}^{3}} V |u|^2 + a \int_{\mathbb{R}^{3}} |u|^4 + b\int_{\mathbb{R}^{3}\times\mathbb{R}^{3}} (K \star |u|^2) |u|^2.$$
We assume without loss of generality that $V\geq 0$. A minimizer of $\mathcal{E}(\cdot,\varphi)$ satisfies the Euler-Lagrange equation
\begin{equation*}
\Phi(u_1,u_2,\varphi) := h^\varphi u + 2 a |u|^2 u + 2 b K\star |u|^2 u - \mu(u,\varphi) u = 0
\end{equation*}
where $u=u_1+iu_2$ with $u_1,u_2$ are real valued, $$\boxed{h^\varphi = (i\nabla + \varphi A)^2 + V}$$ and $\mu(u,\varphi) = \mathcal{E}(u,\varphi) + a \int |u|^4 + b \int K\star |u|^2 |u|^2$. We want to apply the implicit function on $\Phi$ in a neighborhood of $(u_0,0)$. For that, we first need the following lemma.

\begin{lem}\label{lem_H_0_uniq}
Define the operator
\begin{equation*}
L^- = -\Delta + V + 2 a u_0^2 + 2 b \, K \star u_0^2 - \mu(u_0,0)
\end{equation*}
with domain $\mathcal{D}(h_0)$. Then $L^-$ is a non-negative self-adjoint operator with compact resolvent. Moreover, its ground state is $u_0$ and it is non-degenerate.
\end{lem}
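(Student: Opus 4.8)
The plan is to recognise $L^-$ as a Schrödinger operator on $L^2(\mathbb{R}^3)$ having $u_0$ as a nonnegative zero‑energy eigenfunction, and then to invoke the Perron--Frobenius principle for Schrödinger semigroups. \emph{Functional‑analytic setup.} Write $L^-=h_0+W-\mu(u_0,0)$ with $h_0=-\Delta+V$ and $W=2au_0^2+2b\,K\star u_0^2$. Since $u_0$ belongs to the form domain of $\mathcal{E}_{GP}^{a,b}$, which is contained in $H^1(\mathbb{R}^3)\subset L^6(\mathbb{R}^3)$, we have $u_0^2\in L^1(\mathbb{R}^3)\cap L^3(\mathbb{R}^3)$, hence $u_0^2\in L^p$ for every $1\le p\le3$; by the $L^p$‑continuity of $f\mapsto K\star f$ proved after Lemma \ref{lem_dip_cont} (Theorem~4.12 of \cite{Duo01}) also $K\star u_0^2\in L^p$ for $1<p<3$, so $W\in L^3(\mathbb{R}^3)$. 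In dimension three, multiplication by an $L^3$ function is infinitesimally operator‑bounded relative to $-\Delta$, hence relative to $h_0$ (recall $V\ge0$), so the Kato--Rellich theorem shows that $L^-$ is self‑adjoint and bounded below on $\mathcal{D}(h_0)$. Because $V$ is confining, $h_0$ has compact resolvent (\cite{Iwatsuka-86} with $A=0$); combined with the resolvent identity and the $h_0$‑boundedness of $W$, this gives that $L^-$ has compact resolvent too, so its spectrum is discrete.

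\emph{Zero‑mode and Perron--Frobenius.} The Euler--Lagrange equation for the minimiser $u_0=u(\cdot,0)$ (with $A$ replaced by $0$), i.e.\ the equation $\Phi(u_0,0,0)=0$ from the definition of $\Phi$ above, reads $h_0u_0+2au_0^3+2b(K\star u_0^2)u_0=\mu(u_0,0)u_0$, that is $L^-u_0=0$; thus $0$ is an eigenvalue of $L^-$. Replacing $u_0$ by $|u_0|$ -- which is still a minimiser since $|\nabla|u_0||\le|\nabla u_0|$ a.e.\ and all the other terms depend on $|u_0|$ only -- we may assume $0\le u_0\not\equiv0$. Now $e^{-tL^-}$ is positivity improving: $e^{t\Delta}$ is, and $e^{-t(V+W-\mu)/n}$ is positivity preserving and a.e.\ finite (because $0\le V\in L^1_{\loc}$ and $W\in L^3$), so positivity improvement follows from the Trotter product formula -- equivalently from Harnack's inequality applied to $-\Delta u=(\mu(u_0,0)-V-W)u$. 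Therefore $\lambda_1:=\inf\sigma(L^-)$ is a simple eigenvalue carrying a strictly positive eigenfunction, and no eigenfunction associated with an eigenvalue larger than $\lambda_1$ is of one sign. Since $u_0\ge0$, $u_0\not\equiv0$ and $L^-u_0=0$, we conclude $\lambda_1=0$, i.e.\ $L^-\ge0$, and $u_0$ spans the one‑dimensional ground eigenspace. Hence $u_0$ is, up to normalisation, the ground state of $L^-$, and it is non‑degenerate.

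\emph{Main obstacle.} The argument is the standard one for a Schrödinger operator with a confining potential; the only point needing care is the singular, non‑integrable dipolar kernel -- one must verify that $K\star u_0^2$ is an honest $L^3$ potential and not merely a formal expression, which is exactly what the mapping properties of $f\mapsto K\star f$ (Lemma \ref{lem_dip_cont} and the lemma following it) supply. As an alternative to the Perron--Frobenius step, $L^-\ge0$ can be obtained as the nonnegativity of the second variation of $\mathcal{E}(\cdot,0)$ at the minimiser $u_0$ in the phase direction $u_0+i\varepsilon w$, after renormalising $u_0+i\varepsilon w$ to unit mass; this follows \cite[Ch.~7]{LieSeiSolYng-05} and requires tracking the normalisation because $\mathcal{E}$ is not scale‑invariant.
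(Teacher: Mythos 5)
Your proof is correct, and for the key step it takes a genuinely different route from the paper. Both arguments share the same functional-analytic setup (the potential $2au_0^2+2b\,K\star u_0^2$ is controlled via the $L^p$-boundedness of $f\mapsto K\star f$ and Sobolev embedding -- the paper puts it in $L^2$, you in $L^3$, either of which suffices for relative boundedness and compactness of the resolvent) and both invoke the same Perron--Frobenius input (the paper via \cite[Theorem 7.8]{LieLos-01}, you via positivity-improving semigroups) to get that the lowest eigenvalue of $L^-$ is simple with a positive eigenfunction. Where you diverge is in proving $L^-\geq 0$: the paper evaluates the second variation of $\mathcal{E}(\cdot,0)$ at $u_0$ in the purely imaginary direction $v=iw$, which kills the $\Re(v)$-terms and yields $\braket{w,L^-w}\geq 0$ directly; you instead observe that $u_0\geq 0$ is a zero mode of $L^-$ and that any eigenfunction above the bottom of the spectrum must be orthogonal to the strictly positive ground state, hence must change sign, forcing $\inf\sigma(L^-)=0$. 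Your argument is arguably more economical in that it uses only the Euler--Lagrange equation and the sign of $u_0$ (so it would apply to any nonnegative critical point, not just a minimizer), whereas the paper's second-variation argument exploits minimality and is the one that generalizes to the operator $L^+$, where no sign information is available. The two minor points to keep an eye on are (i) the reduction to $u_0\geq 0$, which is legitimate here because the lemma concerns the $A=0$ minimizer, already taken nonnegative in the paper; and (ii) the passage from infinitesimal $(-\Delta)$-boundedness to $h_0$-boundedness for merely $L^1_{\loc}$ confining $V$, which is glossed at the same level of detail in the paper and is cleanest via the form (KLMN) formulation. Neither affects the validity of your argument.
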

\begin{proof}
Thanks to Lemma \ref{lem_dip_cont} and Sobolev embeddings we have $f := 2a u_0^2 + 2 b (K\star u_0^2) \in L^2(\mathbb{R}^{3})$. 
The self-adjointness and the compactness of the resolvent of $H_0$ follow from Sobolev inequality. The uniqueness and the non-negativity of the ground state up to a constant phase follow from \cite[Theorem 7.8]{LieLos-01}. Now, since $u_0$ minimizes $\mathcal{E}(\cdot,0)$, we know that $H_0 u_0 = \mu_0 u_0$ and also that for any $v\in\mathcal{D}(h_0^{1/2})$ we have $\frac{d^2}{dt^2}\mathcal{E}(u_0+tv)|_{t=0}\geq0$. The latter is equivalent to
\begin{gather*}
\Big< v \Big| -\Delta + V+ 2 a u_0^2 + 2 b K\star u_0^2 \Big|  v \Big>  \\
+ 4 a \int u_0(x)^2 \Re(v(x))^2dx
+ 4 b \iint K(x-y) u_0(x) \Re(v(x)) u_0(y) \Re(v(y))dxdy \\ \geq \mu(u_0,0) \|v\|_{L^2(\mathbb{R}^{3})}^2.
\end{gather*}
Taking $i v$ with $v$ real in the last inequality shows that $L^- \geq 0$ (since we know the ground state of $L^-$ is real up to a constant phase). Taking $v = u_0$ shows that $u_0$ is the ground state.
\end{proof}
To verify the assumptions of the implicit function theorem we must compute
\begin{multline*}
d_1\Phi _{(u_0,0,0)} \xi := L^+\xi = L^-\xi + 4 a u_0^2 \xi + 4b (K\star u_0\xi) u_0  - 4 \left( a \int u_0^3 \xi + b \int (K\star u_0^2)u_0\xi \right)u_0 \\ - 2 \mu(u_0,0) \braket{u_0,\xi}_{L^2} u_0
\end{multline*}
and
\begin{equation*}
d_2\Phi _{(u_0,0,0)} \xi = L^-\xi= h^0 \xi + 2 a u_0^2 \xi + 2 b (K\star u_0^2) \xi - \mu(u_0,0)\xi.
\end{equation*}
Define $\mathcal{V} := \mathcal{D}(h_0^{1/2})$ endowed with the norm $\|u\|_\mathcal{V} = \|u\|_{L^2} + \|h_0^{1/2}u\|_{L^2}$ and denote by $\{u_0\}^\perp$ the orthogonal space in $L^2(\mathbb{R}^{3})$ to $u_0$ with respect to the $L^2$ scalar product. We emphasize that the latter are spaces composed of real valued functions. Using Sobolev inequality and Lemma \ref{lem_dip_cont}, it is elementary that $\Phi$, $d_1\Phi $ and $d_2\Phi$ are continuous functions. In order to apply the implicit function theorem, we want to prove that $(L^+)^{-1} : (L^2(\mathbb{R}^{3}),\|\cdot\|_{L^2}) \to (\mathcal{V},\|\cdot\|_{\mathcal{V}})$ and $(L^-)^{-1} :  (\{u_0\}^\perp,\|\cdot\|_{L^2}) \to (\mathcal{V}\cap \{u_0\}^\perp,\|\cdot\|_{\mathcal{V}})$ are bounded linear operators. Since, $L^\pm$ have compact resolvants, it is sufficient to prove that $0$ is not en eigenvalue of $L^+$ and that $\ker(L^-) = \spn\{u_0\}$. We prove this by contradiction, let $\xi \in \ker(L^+)$,
writing $\xi = \lambda u_0 + \xi^\perp$ with $\xi ^\perp\in\{u_0\}^\perp$, we have
\begin{equation}\label{diff_0_1}
\begin{gathered}
\braket{\xi^\perp, L^+ \xi } = 0 = \braket{\xi^\perp, L^-\xi^\perp} + 4\left(a \int (u_0\xi^\perp)^2 + b \int K \star (u_0\xi^\perp) u_0\xi^\perp\right) - 2 \lambda \braket{u_0,h^0 \xi^\perp},
\end{gathered}
\end{equation}
\begin{gather}
\braket{u_0, L^+\xi } = 0 = - 2 \lambda \mu(u_0,0). \label{diff_0_2}
\end{gather}
From (\ref{diff_0_2}), and since $\mu(u_0,0)>0$ because we assumed $V\geq 0$, we deduce that $\lambda=0$ and then from (\ref{diff_0_1}) together with Lemma \ref{lem_H_0_uniq} and the admissibility of $(a,b)$ we obtain that $\xi^\perp = 0$ which is the contradiction we seek. Similar arguments gives that $\ker(L^-) = \spn (u_0)$. We in fact only proved that $(L^+)^{-1} : (L^2(\mathbb{R}^{3}),\|\cdot\|_{L^2})\to (\mathcal{V},\|\cdot\|_{L^2})$ is continuous, but since $\|\cdot\|_{L^2} \lesssim \|\cdot\|_{\mathcal{V}}$ the desired continuity property holds. Now, we can use \cite[Theorem 1.2.1 \& 1.2.3]{Chang} and claim that there is some neighborhood $X\times ]-\varphi_0,\varphi_0[ \ni (u_0,0)$ in $\mathcal{H} := \mathcal{V} \oplus \left(\{u_0\}^\perp\cap\mathcal{V}\right)\times \mathbb{R}$ and some function $h : ]-\varphi_0,\varphi_0[ \to X$ such that for any $(u,\varphi) \in X\times ]-\varphi_0,\varphi_0[ $
\begin{equation*}
\Phi(u,\varphi) = 0 \iff u = h(\varphi).
\end{equation*}
Hence, there is uniqueness of the minimizer in $X$. Now we must prove that there is some other neighborhood of $0$ which we will still denote by $]-\varphi_0,\varphi_0[$ such that for all $|\varphi|<\varphi_0$, any minimizer of $\mathcal{E}(\cdot,\varphi)$ (restricted to the unit sphere of $L^2(\mathbb{R}^{3})$) belongs to $X$ (up to a constant phase). We prove it by contradiction, assume there is a sequence $\varphi_n\to 0$ with $u_n\in \left(\mathcal{V}\oplus\mathcal{V}\right) \setminus X$ minimizing $\mathcal{E}(\cdot,\varphi_n)$ for all $n$. First, notice that for all $u\in L^2(\mathbb{R}^{3})$ there is some $\theta \in\mathbb{R}$ such that 
$$\Im(u e^{i\theta}) = \Re(u) \sin(\theta) + \Im(u) \cos(\theta) \in \{u_0\}^\perp.$$ 
It suffices to take $\theta = - \tan^{-1}(\braket{\Im(u),u_0}/ \braket{\Re(u),u_0})$. Hence, we can assume that $u_n \in \mathcal{H}$ for all $n$. Then, notice that $(u_n)$ is bounded in $\mathcal{H}$ it is thus precompact in $L^2(\mathbb{R}^{3})$ and we can assume without loss of generality that it converges to some function $u_\infty\in L^2(\mathbb{R}^{3})$. Besides, for any $u$ and $|\varphi|\leq C$ we have
\begin{equation*}
|\mathcal{E}(u,\varphi)-\mathcal{E}(u,0)| \leq |\varphi| |\braket{u, (\nabla A + A\nabla) u} + |\varphi|^2 \braket{u, A^2 u} \leq C |\varphi| \braket{u, (h_0+1) u}.
\end{equation*}
So that
\begin{equation*}
\mathcal{E}(u_n,0) = \mathcal{E}(u_n,\varphi_n) + o(1) \leq \mathcal{E}(u_0,\varphi_n) + o(1) = \mathcal{E}(u_0,0) + o(1).
\end{equation*}
And since $\mathcal{E}(\cdot,0)$ is lower semi-continuous, by passing to the limit above we obtain $\mathcal{E}(u_\infty,0) \leq \mathcal{E}(u_0,0)$ hence $u_\infty = u_0$ by uniqueness. We recall that $X$ is a $\mathcal{H}$-neighborhood of $u_0$ so we must prove the convergence of $u_n$ in the topology of $\mathcal{H}$. We know that $(h_0^{1/2}u_n)$ is bounded in $L^2(\mathbb{R}^{3})$ thus, without loss of generality, we assume it converges weakly to some limit we denote by $v_\infty$. Then, for any $\xi\in\mathcal{V}\oplus \mathcal{V}$ we have
\begin{equation*}
\braket{h_0^{1/2}u_n, \xi} = \braket{u_n, h_0^{1/2} \xi} \to \braket{u_0, h_0^{1/2} \xi} = \braket{h_0^{1/2} u_0, \xi}
\end{equation*}
proving that $v_\infty = h_0^{1/2}u_0$ since $\mathcal{V}\oplus \mathcal{V}$ is a dense subset of $L^2(\mathbb{R}^{3})$. Now, let us prove that the convergence is strong. For that, it suffices to prove the conservation of mass in the limit: $\|h_0^{1/2}u_n\| \to \|h_0^{1/2}u_0\|$ as $n\to\infty$. We have, by the Euler-Lagrange equation,
\begin{equation*}
\left| \|h_0^{1/2}u_n\| - \|h_0^{1/2}u_0\| \right| \leq \left| \mathcal{E}(u_n,0) - \mathcal{E}(u_0,0) \right| + \int \left||u_n|^4 - |u_0|^{4}\right| + \int \left| K\star u_n^2 u_n^2 - K\star u_0^2 u_0^2\right| \to 0,
\end{equation*}
as  $n\to \infty$. We already know that the first term tends to $0$, for the two other, it comes from Sobolev embeddings and Lemma \ref{lem_dip_cont}. Hence $u_n \to u_0$ in $\mathcal{H}$ and this is the contradiction we seek : there is some $\varphi_0>0$ such that if $u$ is a minimizer of $\mathcal{E}(\cdot,\varphi)$ then $u\in X$. This ends the proof.

\subsubsection*{Proof of \textit{2})}
This result is stated for the case $K=0$ in a remark of \cite[Ch. 7]{LieSeiSolYng-05}. Here, we give a proof working for any $K$ satisfying the assumption of Lemma (\ref{lem_dip_cont}) using the fact that the interaction term is convex in $\rho = |u|^2$. We write the Fourier decomposition of $u$ in cylindrical coordinates
$$
u(r,\theta,z) = \sum_{n\in\mathbb{Z}} c_n(r,z) e^{in\theta}, \quad \forall r,z>0, \: \theta \in [0,2\pi[,
$$
where
$$
c_n(r,z) = \frac{1}{2\pi}\int_0^{2\pi} u(r,\theta,z) e^{-in\theta} d\theta, \quad \forall n \in \mathbb{Z}.
$$
Then, we have
\begin{equation}\label{ineq_magn_unicite}
\begin{gathered}
\int_{\mathbb{R}^{3}} \left| (\frac{i}{r}\partial_{\theta} + a) u \right|^2 = \int_{-\infty}^\infty\int_{0}^\infty \int_0^{2\pi} \sum_{n\in\mathbb{Z}} |c_n(r,z)|^2 \left(\frac{n}{r} + a\right)^2 r drdz \\
\geq \int_{-\infty}^\infty\int_{0}^\infty \int_0^{2\pi} \sum_{n\in\mathbb{Z}} |c_n(r,z)  a|^2 r drdz =
\int_{\mathbb{R}^{3}} | a u|^2 
\end{gathered}
\end{equation}
where we used that if $\| r a\|_{\infty} \leq 1/2$ then $|n / r + a| \geq |a|$ for any $n\in \mathbb{Z}$. From inequality (\ref{ineq_magn_unicite}), we know that taking $\braket{|u|^2}_{\theta}^{1/2}$, where $\braket{\cdot}_\theta$ stands for the $\theta$-average, can only lower the $\theta$-component of kinetic energy. We know that it also lowers the other components of the kinetic energy which are convex in $\rho = |u|^2$ since they are not affected by $A$ \cite[Theorem 7.8]{LieLos-01} and proves that minimizers are non negative up to a constant phase. Besides, the interaction term is strictly convex in $\rho$, indeed using that $a+b\widehat{K} \geq 0$ and denote by $<F(\rho)>_{\rho} \, = (F(\rho_1)+F(\rho_2))/2$ for some $\rho_1,\rho_2$ and any fonction $F$, we have
\begin{gather*}
\left<a \int \rho^2 + b \int (K\star \rho) \rho\right>_\rho = \int\left(a+b\widehat{K}\right)\left<|\widehat{\rho}|^2\right>_\rho \\
\geq \int\left(a+b\widehat{K}\right)|\braket{\,\widehat{\rho}\,}_\rho|^2 = \int\left(a+b\widehat{K}\right)|\widehat{\braket{\rho}}_\rho|^2 \\
 = a \int \braket{\rho}_\rho^2 + b \int K\star \braket{\rho}_\rho \braket{\rho}_\rho.
\end{gather*}
The strict convexity of the functional gives the uniqueness of the minimizer by standard means.

\end{document}